\newtheorem{defi}{Definition}[section]
\newtheorem{ejem}[defi]{Example}
\newtheorem{teo}[defi]{Theorem}
\newtheorem{coro}[defi]{Corollary}
\newtheorem{prop}[defi]{Proposition}
\newtheorem{lem}[defi]{Lemma}
\newtheorem{obs}[defi]{Remark}
\begin{document}

\title{Some properties of skew codes over finite fields}

\author{Luis Felipe Tapia Cuiti\~no and Andrea Luigi Tironi}

\date{\today}

\address{
Departamento de Matem\'atica,
Universidad de Concepci\'on,
Casilla 160-C,
Concepci\'on, Chile}
\email{ltapiac@udec.cl,\ atironi@udec.cl}

\subjclass[2010]{Primary: 12Y05, 16Z05; Secondary: 94B05, 94B35. Key words and phrases: finite fields, dual codes, skew polynomial rings, semi-linear maps.}

\thanks{The second author was partially supported 
by Proyecto VRID N. 214.013.039-1.OIN}

\maketitle

\begin{abstract}
After recalling the definition of codes as modules over skew
polynomial rings, whose multiplication is defined by using an automorphism
and a derivation, and some basic facts about them, in the first part of this paper we study some of their main algebraic and 
geometric properties. Finally, for module skew codes constructed only with an
automorphism, we give some BCH type lower bounds for their minimum distance.
\end{abstract}

\section*{Introduction}

\noindent In the framework of linear codes, the introduction of the cyclic codes has been important due to the fact that for the first time some special linear 
codes could be treated by polynomials rings via a vector space isomorphism.  
The use of Ore polynomial rings in the non-commutative setting emerged only recently in Coding Theory as source of generalizations of the cyclic codes.
The skew polynomial rings have found applications in the construction of many algebraic codes of good parameters with respect to the commutative case.
In particular, the research on codes in this setting has resulted in the discovery of many new codes with better Hamming distance than
any previously known linear code with same parameters (see for instance Tables 1, 2 and 3 of \cite{BU2}).

Inspired by the recent works \cite{BU2} and \cite{BL}, in $\S1$ we give a background material and the notion of skew generalized cyclic (GC) codes, that is, linear codes invariant by a pseudo-linear transformation (see Definition (\ref{def})). In $\S2$, we introduce some basic properties of skew GC codes and we give a result (Theorem \ref{prop}) about duals of skew GC codes that improves Theorem 23 in \cite{TT}. Moreover, in Theorem \ref{ker skew} we show a fundamental geometric property of these codes which becomes a useful tool to find codes through the
factorization of polynomials. In the commutative case ($\theta =id$), the same result delivers more geometric consequences described in Corollary \ref{ker}.
Furthermore, assumption $(\#)$ and its properties allow us to extend to the non-commutative case some of the main results of \cite{RZ} (see Propositions \ref{inv sub}
and \ref{inv sub2}). In $\S 3$ we consider the cases with trivial derivation ($\delta_{\beta}^{\theta}=0$) by studying the minimal polynomial of a semi-linear transformation
in Theorem \ref{matrixpol3} and some BCH lower bounds which generalize known results of \cite{BU2} and \cite{HT} in the non-commutative case (see Theorems \ref{bound1}, \ref{bound2},
\ref{bound3} and Corollary \ref{mds}). Finally, in $\S 4$ we give some Magma programs and examples as immediate applications of some previous results.

\section{Notation and background material}

Denote by $\theta : \mathbb{F}_q\to\mathbb{F}_q$ an automorphism 
of the finite field $\mathbb{F}_q$. Let us recall here
that if $q=p^s$ for some prime number $p$, then the map
$\tilde{\theta}:\mathbb{F}_q\to\mathbb{F}_q$ defined by
$\tilde{\theta}(a)=a^p$ is an automorphism on the field
$\mathbb{F}_q$ which fixes the subfield with $p$ elements. This
automorphism $\tilde{\theta}$ is called the {\em Frobenius
automorphism} and it has order $s$. Moreover, it is known
that the cyclic group it generates is the full group of
automorphisms of $\mathbb{F}_q$, i.e.
$\mathrm{Aut}(\mathbb{F}_q)=<\tilde{\theta}>$. Therefore, any
$\theta\in\mathrm{Aut}(\mathbb{F}_q)$ is defined as
$\theta (a):=\tilde{\theta}^t(a)=a^{p^t},$ where $a\in\mathbb{F}_q$ and $t$ is an
integer such that $0\leq t\leq s$. Furthermore, when $\theta$ will be the identity automorphism
$id: \mathbb{F}_q\to\mathbb{F}_q$, we will write simply $\theta=id$.
Finally, a $\theta$-derivation must be of the form
$\beta(\theta(a)-a)$ for any $a\in\mathbb{F}_q$ and some $\beta\in\mathbb{F}_q$.   

A pseudo-linear map (or a pseudo-linear transformation) $T:\mathbb{F}_q^n\to\mathbb{F}_q^n$ is an additive map defined by
\begin{equation}\label{def}
T(\vec{v}):=(\vec{v})\Theta\circ M+(\vec{v})\delta_{\beta}^{\theta}, \tag{$\triangle$} 
\end{equation}
where $(v_1,...,v_n)\Theta
:=(\theta(v_1),...,\theta(v_n))$, $M$ is an $n\times n$ matrix
with coordinates in $\mathbb{F}_q$ and $(v_1,...,v_n)\delta_{\beta}^{\theta}
:=(\beta(\theta(v_1)-v_1),...,\beta(\theta(v_n)-v_n))$. When $\delta_{\beta}^{\theta}=0$, we call $T$ a
\textit{semi-linear map}, or a
\textit{semi-linear transformation}.

Consider the ring structure defined on the following set:
$$R:=\mathbb{F}_q[X;\theta,\delta_{\beta}^{\theta}]=\left\{ a_sX^s+...+a_1X+a_0\ | \ a_i\in\mathbb{F}_q\ \mathrm{and}\ s\in\mathbb{N}\right\}.$$
The addition in $R$ is defined to be the usual addition of
polynomials and the multiplication is defined by the basic rule
$X\cdot a = \theta(a)X+\beta(\theta(a)-a)$ for any $a\in\mathbb{F}_q$ and extended to
all elements of $R$ by associativity and distributivity. The ring
$R$ is known as skew polynomial ring and its elements are skew
polynomials. Moreover, it is a left and right Euclidean ring whose
left and right ideals are principals.

From now on, fix a polynomial
\begin{equation}
f:=X^n-f_{n-1}X^{n-1}-\ldots - f_1X-f_0\in R \tag{*}
\end{equation}
and denote by $\pi _f:\mathbb{F}_q^n \rightarrow R/Rf$ the linear transformations which sends 
a vector $(c_0,\ldots ,c_{n-1})\in \mathbb{F}_q^n$ to the polynomial class $\left[ \sum _{i=0}^{n-1}c_iX^{i}\right]\in R/Rf$, i.e.
$$\pi _f((c_0,...,c_{n-1})):=\displaystyle
\left[\sum _{i=0}^{n-1}c_iX^i\right] \in R/Rf\qquad \forall \
(c_0,...,c_{n-1})\in\mathbb{F}_q^n.$$

\noindent With an abuse of notation, a polynomial $c\in R$ and its class $[c]\in R/Rf$ will be denoted sometimes with the same letter $c$
for simplicity.

\smallskip

Let us introduce here a generalization of the skew cyclic codes. 

\begin{defi}\label{skew GCC}
A linear code $\mathcal{C} \subseteq \mathbb{F}_q^n$ is called a $(M,\Theta,\delta_{\beta})$-\textbf{code} if
$T\star \mathcal{C} \subseteq \mathcal{C}$, where $T$ is as in {\em (\ref{def})} and $T\star \mathcal{C}:=\{T(\vec{v})\ |\ \vec{v}\in\mathcal{C} \}$.
Moreover, $\mathcal{C} \subseteq \mathbb{F}_q^n$ is called a $(f,\Theta,\delta_{\beta}^{\theta})$-\textbf{skew generalized cyclic (GC) code} (or, for simplicity, a \textbf{skew GC code}) 
if $T_f \star \mathcal{C} \subseteq \mathcal{C}$, where $T_f$ is the pseudo-linear map defined by  
$T_f(\vec{v}):=(\vec{v})\Theta\circ A+(\vec{v})\delta_{\beta}^{\theta}$ with
\begin{equation}
A:=\left(
\begin{array}{c|ccc}
0&1&&\\
\vdots &&\ddots&\\
0&&&1\\
\hline
f_{0}& f_{1}&\cdots &f_{n-1}
\end{array}
\right) . \tag{**}
\end{equation}

Finally, when $\theta = id$, a $(f,\Theta,\delta_{\beta}^{\theta})$-skew generalized cyclic (GC) code is called an $f$-\textbf{generalized cyclic (GC) code},
or simply a \textbf{GC code}.
\end{defi}

\begin{obs}
Suppose that $\theta =id$ and $f = X^n-f_{n-1}X^{n-1}\ldots - f_1X-f_0$ is the minimal polynomial of an $n\times n$ matrix $M$. 
Then by \cite[Lemma 6.7.1]{H} there is a basis $\{ \vec{r}_{1},\ldots ,\vec{r}_{n}\}$ of $\mathbb{F}_q^n$ such that 
$$M=S\left(\begin{array}{c|ccc}
0&1&&\\
\vdots&&\ddots&\\
0&&&1\\
\hline
f_{0}&f_{1}&\cdots&f_{n-1}
\end{array}
\right)S^{-1}$$
where $S=\left( \begin{array}{c|c|c}
\vec{r}_{1t}&\cdots &\vec{r}_{nt}
\end{array}
\right) .$
Therefore the $n\times n$ matrix $M$ is similar to its rational canonical form $A$, i.e. there exists a non-singular matrix $S$ such that 
\begin{equation}
M=SAS^{-1}. \tag{$\circ$}
\end{equation}
Let $\mathcal{C}_{M} \subseteq \mathbb{F}_q^n$ be a linear code invariant by the matrix $M$. 
Define $$\mathcal{C}_A := \mathcal{C}_M\star S:=\{ \vec{c}S:\vec{c}\in \mathcal{C}_M\}\ .$$ Then by $(\circ)$ we obtain
$$\mathcal{C}_A \star A=\mathcal{C}_M\star (SA)=\mathcal{C}_M\star (S(S^{-1}MS))=(\mathcal{C}_M\star M)\star S\subseteq \mathcal{C}_M\star S=\mathcal{C}_A,$$
i.e. $\mathcal{C}_A$ is invariant by $A$. Since $S$ is an invertible matrix, this shows that we can construct a \textit{one-to-one} correspondence between the set of linear codes invariant by $A$ and the set of linear codes invariant by $M$. 
\end{obs}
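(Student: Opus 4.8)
The plan is to read the assertion as two linked claims: first, that the hypotheses force $M$ to be similar to the companion matrix $A$ of $f$ displayed in (**); and second, that conjugation by the resulting change-of-basis matrix $S$ carries $M$-invariant codes to $A$-invariant codes bijectively. The second claim is what ``one-to-one correspondence'' refers to, and it is the part I would spell out carefully.

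First I would make the similarity precise. Since $\theta = id$, the ring $R$ is the commutative polynomial ring $\mathbb{F}_q[X]$, so I may regard $\mathbb{F}_q^n$ as an $\mathbb{F}_q[X]$-module with $X$ acting as the matrix $M$, consistently with the action $\vec{c}\mapsto \vec{c}M$ used to define invariance. Because the minimal polynomial $f$ of $M$ has degree $n = \dim_{\mathbb{F}_q}\mathbb{F}_q^n$, the minimal and characteristic polynomials of $M$ coincide, which forces this module to be cyclic: a single vector generates it, so its iterates under $M$ form a basis. Collecting this basis into the matrix $S$ of \cite[Lemma 6.7.1]{H} puts $M$ into the companion form $A$, i.e. $M = SAS^{-1}$, which is exactly $(\circ)$. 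This step is essentially an appeal to the cited lemma; the only thing to check is that the hypothesis $\deg f = n$ guarantees a single companion block rather than a direct sum of several.

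Next I would introduce the map $\Phi(\mathcal{C}):=\mathcal{C}\star S$ and verify it behaves as claimed. As $S$ is invertible, $\Phi$ is an $\mathbb{F}_q$-linear automorphism of $\mathbb{F}_q^n$, hence a dimension-preserving bijection on the set of all linear subspaces, with inverse $\mathcal{C}'\mapsto \mathcal{C}'\star S^{-1}$. To see that it matches up invariant subspaces, I rewrite $(\circ)$ as $SA = MS$ and compute, for any $M$-invariant $\mathcal{C}_M$,
\[
\Phi(\mathcal{C}_M)\star A=\mathcal{C}_M\star(SA)=\mathcal{C}_M\star(MS)=(\mathcal{C}_M\star M)\star S\subseteq \mathcal{C}_M\star S=\Phi(\mathcal{C}_M),
\]
so $\Phi(\mathcal{C}_M)$ is $A$-invariant; the symmetric computation, using $S^{-1}M = AS^{-1}$, shows $\Phi^{-1}$ sends $A$-invariant codes to $M$-invariant ones. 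Hence $\Phi$ restricts to a mutually inverse pair of maps between the two families, which is the desired correspondence.

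I do not expect a genuine obstacle: the one non-formal ingredient is the cyclicity of the module in the first step, and even that is delegated to \cite[Lemma 6.7.1]{H}. The only place demanding attention is the bookkeeping of conventions — since codes act by right multiplication on row vectors while $S$ is assembled from column vectors, I must keep $S$ on the correct side throughout so that the identity actually invoked is $SA = MS$ (equivalently $A = S^{-1}MS$), matching the computation in $(\circ)$.
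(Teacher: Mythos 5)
Your proposal is correct and takes essentially the same route as the paper: the same appeal to \cite[Lemma 6.7.1]{H} to obtain $M=SAS^{-1}$, followed by the same one-line computation $\mathcal{C}_M\star(SA)=(\mathcal{C}_M\star M)\star S\subseteq\mathcal{C}_M\star S$ showing that conjugation by $S$ transports invariance. The only (welcome) difference is that you make explicit the symmetric check that $\mathcal{C}'\mapsto\mathcal{C}'\star S^{-1}$ carries $A$-invariant codes back to $M$-invariant ones, a step the paper compresses into the phrase ``since $S$ is an invertible matrix.''
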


\section{Basic properties of skew GC codes}

In this section, we give some algebraic and geometric properties of skew generalized cyclic (GC) codes.

\begin{obs}\label{rem 1}
Let $T$ be any pseudo-linear transformation on $\mathbb{F}_q^n$. If $p=p_0+p_1X+ \dots +p_mX^m$ is a polynomial, then $p(T)=p_0+p_1T+ \dots +p_mT^m$ is not 
in general a pseudo-linear transformation. On the other hand, a linear subspace $U\subseteq\mathbb{F}^n_q$ is $T$-invariant, i.e. $T\star U\subseteq U$, if and only if
$p(T) \star U\subseteq U$ for any polynomial $p$.
\end{obs}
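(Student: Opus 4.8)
The plan is to treat the two assertions of the remark separately, since they are independent in nature. For the first one, that $p(T)$ is in general not pseudo-linear, I would exhibit an explicit counterexample rather than argue abstractly. The natural choice is $p=X^2$, taken in the semi-linear case $\delta_{\beta}^{\theta}=0$ with a nontrivial automorphism $\theta\neq id$. The starting point is the semilinearity identity satisfied by any $T$ as in (\ref{def}): a direct computation from the definitions of $\Theta$ and $\delta_{\beta}^{\theta}$ gives, for all $a\in\mathbb{F}_q$ and $\vec{v}\in\mathbb{F}_q^n$,
\[
T(a\vec{v})=\theta(a)\,T(\vec{v})+\beta\bigl(\theta(a)-a\bigr)\vec{v},
\]
which mirrors the defining rule $X\cdot a=\theta(a)X+\beta(\theta(a)-a)$ of $R$. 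In the semi-linear case this reduces to $T(a\vec{v})=\theta(a)T(\vec{v})$, so $T$ is $\theta$-semilinear and, iterating, $T^2(a\vec{v})=\theta^2(a)\,T^2(\vec{v})$, i.e. $T^2$ is $\theta^2$-semilinear.

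I would then argue that $T^2$ cannot be pseudo-linear with respect to the fixed $\theta$. The key is to compare the ``defect from $\theta$-semilinearity'' of the two maps. On one hand, every pseudo-linear map $S$ with automorphism $\theta$ satisfies $S(a\vec{v})-\theta(a)S(\vec{v})=\beta'(\theta(a)-a)\vec{v}\in\langle\vec{v}\rangle$ for a suitable $\beta'$. On the other hand, $T^2(a\vec{v})-\theta(a)T^2(\vec{v})=\bigl(\theta^2(a)-\theta(a)\bigr)T^2(\vec{v})\in\langle T^2(\vec{v})\rangle$. Choosing $M$ invertible and $n\geq 2$ so that some $\vec{v}$ makes $\vec{v}$ and $T^2(\vec{v})$ linearly independent, the equality of these two defects would force both to vanish for every $a$; since $T^2(\vec{v})\neq 0$, this gives $\theta^2(a)=\theta(a)$ for all $a$, hence $\theta=id$, contradicting $\theta\neq id$. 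This is the one genuinely delicate point, because one must rule out that $T^2$ is accidentally pseudo-linear for some other matrix and some other derivation constant $\beta'$, and the independence of $\vec{v}$ and $T^2(\vec{v})$ is exactly what makes the comparison conclusive.

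For the second assertion I would prove both implications directly. One direction is immediate: if $p(T)\star U\subseteq U$ for every polynomial $p$, then specializing to $p=X$ gives $T\star U\subseteq U$. For the converse, assuming $T\star U\subseteq U$, I would first show by induction on $i$ that $T^i\star U\subseteq U$ for all $i\geq 0$; the case $i=0$ holds since $T^0=id$ fixes $U$, and the inductive step follows from $T(T^i(\vec{u}))\in T\star U\subseteq U$. Then, for $p=p_0+p_1X+\dots+p_mX^m$ and $\vec{u}\in U$, the vector $p(T)(\vec{u})=p_0\vec{u}+p_1T(\vec{u})+\dots+p_mT^m(\vec{u})$ is a linear combination of the vectors $T^i(\vec{u})\in U$, and since $U$ is a linear subspace it lies in $U$. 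Hence $p(T)\star U\subseteq U$. This half is routine, the only ingredients being the induction and the closure of $U$ under addition and scalar multiplication.
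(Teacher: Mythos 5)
The paper states this remark without proof, so there is no argument of its own to compare against; your proposal supplies the missing justifications, and both halves are essentially sound. For the invariance equivalence your argument is the standard (and surely intended) one: specializing to $p=X$ gives one implication, and the converse follows by induction on $i$ (giving $T^i\star U\subseteq U$) together with closure of the subspace $U$ under linear combinations. Nothing to add there. For the first claim, your defect computation is also correct: from the identity $T(a\vec{v})=\theta(a)T(\vec{v})+\beta(\theta(a)-a)\vec{v}$ one sees that every map of the form (\ref{def}) built from the fixed $\theta$ (with arbitrary $M'$ and $\beta'$) has defect $S(a\vec{v})-\theta(a)S(\vec{v})=\beta'(\theta(a)-a)\vec{v}\in\langle\vec{v}\rangle$, while $T^2$ (with $\beta=0$) has defect $(\theta^2(a)-\theta(a))T^2(\vec{v})\in\langle T^2(\vec{v})\rangle$; if $\vec{v}$ and $T^2(\vec{v})$ are independent, both coefficients must vanish for all $a$, forcing $\theta^2=\theta$, i.e. $\theta=id$, a contradiction.

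Two points in your plan deserve care. First, you should actually exhibit the data you ``choose'': the independence of $\vec{v}$ and $T^2(\vec{v})$ is not guaranteed by invertibility of $M$ and $n\geq 2$ alone. For instance, if $\theta^2=id$ and $M_{\theta}M$ is a scalar matrix, then $T^2$ is a scalar multiple of the identity, and such a map \emph{is} pseudo-linear for $\theta$ (with a suitable choice of $M'$ and $\beta'$), so no such $\vec{v}$ exists. A concrete valid instance: $q=8$, $\theta$ the Frobenius, $n=2$, $M=I$, $\vec{v}=(1,\lambda)$ with $\lambda$ a generator of $\mathbb{F}_8^{*}$, so that $T^2(\vec{v})=(1,\lambda^4)$ is independent of $\vec{v}$. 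Second, be explicit that your conclusion is only that $T^2$ is not pseudo-linear \emph{with respect to the fixed} $\theta$. This is the right reading of the paper, where $\theta$ is fixed throughout and (\ref{def}) is formulated for that $\theta$; but note that your own computation shows $T^2=\Theta^2\circ M_{\theta}M$, which is semi-linear for the automorphism $\theta^2$. Hence, under the broader reading in which ``pseudo-linear transformation'' allows an arbitrary automorphism, the monomial $p=X^2$ is not a counterexample at all, and one would instead need a polynomial mixing distinct twists, such as $p=X+X^2$, together with an independence-of-automorphisms (Dedekind) argument. With the fixed-$\theta$ reading, as you adopt, your plan is correct.
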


\begin{obs}\label{rem 1bis}
We have $p(X)\cdot \pi _f (\vec{v})=\pi _f(p(T_f)(\vec{v}))$ for any polynomial $p$ and all $\vec{v}\in\mathbb{F}_q^n$.
\end{obs}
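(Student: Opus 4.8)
The plan is to reduce the identity to its degree-one case $X\cdot \pi_f(\vec v)=\pi_f(T_f(\vec v))$, to propagate it to arbitrary powers $X^m$ by induction, and finally to extend it to an arbitrary $p$ by additivity of $\pi_f$ together with the compatibility of left scalar multiplication with $\pi_f$. The heart of the argument — and the only place where the specific shape of the companion matrix $A$ in $(**)$ and of the multiplication rule $X\cdot a=\theta(a)X+\beta(\theta(a)-a)$ enter — is the degree-one case.

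For the base case, I write $\vec v=(v_0,\dots,v_{n-1})$ and $c(X):=\sum_{i=0}^{n-1}v_iX^i$, so that $\pi_f(\vec v)=[c(X)]$. I would compute $X\cdot c(X)$ in $R$ using the basic rule, obtaining
$$X\cdot c(X)=\sum_{i=0}^{n-1}\theta(v_i)X^{i+1}+\sum_{i=0}^{n-1}\beta(\theta(v_i)-v_i)X^i,$$
and then reduce modulo $Rf$ by rewriting the single out-of-range monomial through $X^n\equiv\sum_{k=0}^{n-1}f_kX^k\pmod{Rf}$, which comes directly from the definition $(*)$ of $f$. On the other side, I would read off the coordinates of $T_f(\vec v)=(\vec v)\Theta\circ A+(\vec v)\delta_{\beta}^{\theta}$ from $(**)$: the companion matrix contributes $\theta(v_{n-1})f_0$ in position $0$ and $\theta(v_{k-1})+\theta(v_{n-1})f_k$ in position $k$ for $1\le k\le n-1$, while $\delta_{\beta}^{\theta}$ adds $\beta(\theta(v_k)-v_k)$ in each coordinate. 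Forming $\pi_f(T_f(\vec v))$ and grouping the three families of terms (the $\theta(v_{k-1})$-shift, the $\theta(v_{n-1})f_k$-feedback, and the $\delta_{\beta}^{\theta}$-correction) should reproduce exactly the reduced expression above, establishing the base case.

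With the base case in hand, the induction is immediate: assuming $X^{m-1}\cdot\pi_f(\vec v)=\pi_f(T_f^{m-1}(\vec v))$, I apply the degree-one identity to the vector $T_f^{m-1}(\vec v)$ to get
$$X^m\cdot\pi_f(\vec v)=X\cdot\pi_f\!\left(T_f^{m-1}(\vec v)\right)=\pi_f\!\left(T_f^{m}(\vec v)\right).$$
Finally, for $p=\sum_j p_jX^j$, I expand $p(X)\cdot\pi_f(\vec v)=\sum_j p_j\bigl(X^j\cdot\pi_f(\vec v)\bigr)$ and use that left multiplication by the constant $p_j$ satisfies $p_j\,\pi_f(\vec w)=\pi_f(p_j\vec w)$, together with the additivity of $\pi_f$, so that $\sum_j p_j\,\pi_f(T_f^j(\vec v))=\pi_f\bigl(\sum_j p_j\,T_f^j(\vec v)\bigr)=\pi_f(p(T_f)(\vec v))$. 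The main obstacle is purely bookkeeping in the base case: matching coefficients across the index shift $i\mapsto i+1$, correctly handling the wrap-around term $\theta(v_{n-1})X^n$ through the relation defining $f$, and checking that the derivation term $\delta_{\beta}^{\theta}$ lands in the same monomials on both sides. Once the conventions for $A$ (row-vector action) and for the product $X\cdot a$ are pinned down, the coefficient comparison is a direct verification.
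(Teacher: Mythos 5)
Your proposal is correct: the coordinate computation for $T_f(\vec v)$ against the companion matrix $A$, the reduction of the wrap-around term $\theta(v_{n-1})X^n$ via $X^n\equiv\sum_{k=0}^{n-1}f_kX^k \pmod{Rf}$, the induction on powers, and the final extension by additivity and the identity $p_j\,\pi_f(\vec w)=\pi_f(p_j\vec w)$ are all sound. The paper states this remark without any proof, treating it as a routine verification, and your argument is precisely the standard one it leaves implicit, so there is nothing to reconcile between the two.
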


\noindent From Remarks \ref{rem 1} and \ref{rem 1bis}, we deduce the following characterization of 
a $(f,\Theta,\delta_{\beta}^{\theta})$-skew GC code. 

\begin{prop}\label{corr. ideal2}
Let $\mathcal{C}$ be a non-empty subset of $\mathbb{F}_q^n$. Then

\medskip

\ $\mathcal{C}$ is a $(f,\Theta,\delta_{\beta}^{\theta})$-skew GC code $\iff$ $\pi _f(\mathcal{C})$ is a principal left ideal of $R/Rf$.
\end{prop}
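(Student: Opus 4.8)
The plan is to use the two preceding remarks together with the fact that $\pi_f$ is an $\mathbb{F}_q$-linear isomorphism. Since $f$ is monic of degree $n$ (see $(*)$), the classes $[1],[X],\dots,[X^{n-1}]$ form a left $\mathbb{F}_q$-basis of $R/Rf$, so $\pi_f$ is a bijective $\mathbb{F}_q$-linear map; in particular it carries $\mathbb{F}_q$-subspaces to $\mathbb{F}_q$-subspaces and is injective, which I will use repeatedly to transfer the $T_f$-invariance condition on $\mathbb{F}_q^n$ into an algebraic condition inside $R/Rf$.

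For the forward implication, I would assume $T_f\star\mathcal{C}\subseteq\mathcal{C}$. As $\mathcal{C}$ is linear, $\pi_f(\mathcal{C})$ is already an $\mathbb{F}_q$-subspace of $R/Rf$, so it only remains to check closure under left multiplication by arbitrary elements of $R$. Given $\vec{c}\in\mathcal{C}$ and a polynomial $p$, Remark \ref{rem 1bis} gives $p(X)\cdot\pi_f(\vec{c})=\pi_f(p(T_f)(\vec{c}))$, while Remark \ref{rem 1} (applied with $T=T_f$ and $U=\mathcal{C}$) yields $p(T_f)(\vec{c})\in\mathcal{C}$. Hence $p(X)\cdot\pi_f(\vec{c})\in\pi_f(\mathcal{C})$, proving that $\pi_f(\mathcal{C})$ is a left ideal of $R/Rf$. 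Principality then follows from the structure of $R$: its left ideals are principal (as recalled just after the definition of $R$), and every left ideal of $R/Rf$ is the image of a left ideal of $R$ containing $Rf$, hence generated by a single class $[g]$.

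For the converse, I would start from $\pi_f(\mathcal{C})$ being a (principal) left ideal. Being a left ideal it is in particular an $\mathbb{F}_q$-subspace, so $\mathcal{C}=\pi_f^{-1}(\pi_f(\mathcal{C}))$ is a linear code. To obtain $T_f$-invariance it suffices to treat left multiplication by $X$: for $\vec{c}\in\mathcal{C}$, Remark \ref{rem 1bis} with $p=X$ gives $\pi_f(T_f(\vec{c}))=X\cdot\pi_f(\vec{c})\in\pi_f(\mathcal{C})$, and injectivity of $\pi_f$ forces $T_f(\vec{c})\in\mathcal{C}$. Thus $T_f\star\mathcal{C}\subseteq\mathcal{C}$, so $\mathcal{C}$ is a skew GC code.

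The routine computations are all absorbed into Remarks \ref{rem 1} and \ref{rem 1bis}, so the only genuinely delicate point is the word ``principal'': I must argue that no information is lost in passing to the quotient, i.e.\ that the ``left ideals of $R/Rf$'' correspond bijectively to the left ideals of $R$ sandwiched between $Rf$ and $R$, each of which is singly generated because $R$ is left principal. (One should keep in mind that $Rf$ need not be two-sided, so ``left ideal of $R/Rf$'' is to be read as ``left $R$-submodule''; this does not affect the argument.) I expect this bookkeeping to be the main obstacle, the remainder being a direct transport of the invariance condition across the isomorphism $\pi_f$.
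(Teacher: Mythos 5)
Your proof is correct and follows essentially the same route as the paper, which gives no separate argument but deduces the proposition precisely from Remarks \ref{rem 1} and \ref{rem 1bis} — exactly the transport across $\pi_f$ that you spell out. Your extra bookkeeping on principality (left submodules of $R/Rf$ correspond to left ideals of $R$ containing $Rf$, each singly generated since $R$ is a left principal ideal ring) just makes explicit a detail the paper leaves implicit.
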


\begin{obs}
By {\em Proposition \ref{corr. ideal2}}, we see that to produce examples of $(f,\Theta,\delta_{\beta}^{\theta})$-skew GC codes one can focus on finding
right divisors of $f$. 
\end{obs}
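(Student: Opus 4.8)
The plan is to unpack what Proposition \ref{corr. ideal2} says about the structure of $R/Rf$ and to identify its principal left ideals concretely. By that proposition, applying $\pi_f$ sets up a bijection between the $(f,\Theta,\delta_{\beta}^{\theta})$-skew GC codes $\mathcal{C}\subseteq\mathbb{F}_q^n$ and the principal left ideals of $R/Rf$. Thus producing examples of such codes is the same problem as describing the principal left ideals of the quotient, and the assertion to be proved is that these are governed by the right divisors of $f$.

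First I would invoke the ideal correspondence theorem for the canonical projection $R\to R/Rf$: the left ideals of $R/Rf$ are exactly the sets $I/Rf$, where $I$ runs over the left ideals of $R$ containing $Rf$. Next I would use the fact, recalled in $\S1$, that $R=\mathbb{F}_q[X;\theta,\delta_{\beta}^{\theta}]$ is a left principal ideal ring, so every such $I$ has the form $I=Rg$ for some $g\in R$. Since $Rg/Rf$ is then generated as a left $R/Rf$-module by the single class $[g]$, every left ideal of $R/Rf$ is automatically principal; in particular the ``principal'' hypothesis in Proposition \ref{corr. ideal2} costs nothing here.

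It remains to translate the containment $Rf\subseteq Rg$ into a divisibility statement. I would observe that $Rf\subseteq Rg$ holds if and only if $f\in Rg$, i.e. $f=h\,g$ for some $h\in R$, which is precisely the assertion that $g$ is a right divisor of $f$. Conversely every right divisor $g$ of $f$ produces the left ideal $Rg$ with $Rf\subseteq Rg$, hence the principal left ideal $Rg/Rf\subseteq R/Rf$. Pulling this back through $\pi_f$ yields the skew GC code $\mathcal{C}=\pi_f^{-1}(Rg/Rf)$, and by the bijection above every skew GC code arises in this way. This establishes the claimed recipe: each right factorization $f=h\,g$ in $R$ supplies an example of a skew GC code.

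The only genuinely delicate point --- and the step I would treat most carefully --- is the left/right bookkeeping inherent to the non-commutative ring $R$. Because $Rf$ is the \emph{left} ideal of left multiples of $f$, the relevant divisibility is \emph{right} divisibility ($f=h\,g$), and the principal ideal property of $R$ must be used on the \emph{left}. Once these sides are pinned down consistently, the argument is a direct application of the correspondence theorem together with the Euclidean structure of $R$.
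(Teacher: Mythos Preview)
Your argument is correct and is exactly the standard justification for this remark: the ideal correspondence for $R\to R/Rf$, together with the fact that $R$ is a left principal ideal domain, identifies the principal left ideals of $R/Rf$ with the $Rg/Rf$ for $g$ a right divisor of $f$, and then Proposition~\ref{corr. ideal2} translates this back to skew GC codes. The paper itself states this as a bare remark with no accompanying proof, so there is nothing to compare against; your write-up simply makes explicit the reasoning the authors left to the reader, including the careful left/right bookkeeping.
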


\begin{defi}
For any $(f,\Theta,\delta_{\beta}^{\theta})$-skew GC code $\mathcal{C} \subseteq
\mathbb{F}_q^n$, the generator polynomial of $\pi _f(\mathcal{C})\subseteq R/Rf$ is called the
\textbf{generator polynomial} of $\mathcal{C}$.

Therefore, we will write $\mathcal{C} =(g)_{n,q}^{\theta,\delta_{\beta}^{\theta}}$ for
a $(f,\Theta,\delta_{\beta}^{\theta})$-skew GC code $\mathcal{C}\subseteq\mathbb{F}_q^n$ with
generator polynomial $g=g_{0}+g_{1}X+\cdots +g_{n-k}X^{n-k}\in
R/Rf$.
\end{defi}

From \cite{BU2} we know that a generator matrix of a $(f,\Theta,\delta_{\beta}^{\theta})$-skew GC code 
$\mathcal{C}=(g)_{n,q}^{\theta,\delta_{\beta}^{\theta}}$ is given by
$$G:=\left(
\begin{array}{c}
\vec{g}  \\
T_f(\vec{g}) \\
\vdots  \\
T_f^{k-1}(\vec{g})
\end{array}\right) ,$$
where $\vec{g}=\pi_f^{-1}(g)$.

\smallskip

About the dual code $\mathcal{C}^{\perp }$ of a $(M,\Theta,\delta_{\beta}^{\theta})$-code $\mathcal{C}\subseteq \mathbb{F}_q^n$,
we can give the following

\begin{teo}\label{prop} 
If $\mathcal{C}\subseteq \mathbb{F}_q^n$ is a $(M,\Theta,\delta_{\beta}^{\theta})$-code, then
$\mathcal{C}^{\perp }$ is a $((M_t)_{\theta^{-1}},\Theta^{-1},\delta_{\theta^{-1}(\beta)}^{\theta^{-1}})$-code,
where $W_{\theta^{-1}}:=[\theta^{-1} (a_{ij})]$ and $W_t$ is the transpose matrix of a matrix $W=[(a_{ij})]$. 
\end{teo}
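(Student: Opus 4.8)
The plan is to show that the dual code $\mathcal{C}^\perp$ is invariant under the pseudo-linear map $T'$ associated to the transformed data $((M_t)_{\theta^{-1}},\Theta^{-1},\delta_{\theta^{-1}(\beta)}^{\theta^{-1}})$, i.e.\ that $T'\star \mathcal{C}^\perp\subseteq\mathcal{C}^\perp$. By definition of the dual code, this reduces to checking that for every $\vec{w}\in\mathcal{C}^\perp$ and every $\vec{c}\in\mathcal{C}$ we have $\langle T'(\vec{w}),\vec{c}\rangle=0$, where $\langle\cdot,\cdot\rangle$ is the standard inner product on $\mathbb{F}_q^n$. The strategy is to relate $\langle T'(\vec{w}),\vec{c}\rangle$ to $\langle \vec{w},T(\vec{c})\rangle$, which is already $0$ because $T(\vec{c})\in\mathcal{C}$; the whole proof then comes down to an adjoint-type identity between $T$ and $T'$ with respect to the pairing.

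First I would write both maps out explicitly. For $\vec{w},\vec{c}\in\mathbb{F}_q^n$, using the definitions $(\vec{v})\Theta\circ M+(\vec{v})\delta_\beta^\theta$ for $T$ and the analogous formula with $M$ replaced by $(M_t)_{\theta^{-1}}$, $\theta$ by $\theta^{-1}$, and $\beta$ by $\theta^{-1}(\beta)$ for $T'$, I would expand $\langle T'(\vec{w}),\vec{c}\rangle$ entrywise. The key algebraic facts I would exploit are that $\theta$ is a field automorphism, so it commutes with sums and products and satisfies $\theta^{-1}\theta=\mathrm{id}$, together with the transpose/adjoint relation $\langle(\vec{v})\Theta\circ M,\vec{c}\rangle$ compared against $\langle\vec{v},(\vec{c})\Theta\circ M_t\rangle$. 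Applying $\theta^{-1}$ entrywise to the inner-product identity converts a sum $\sum_i \theta^{-1}(w_j)\,(M_t)_{\theta^{-1}}$-terms into $\theta^{-1}$ of a sum that matches $\langle\vec{w},(\vec{c})\Theta\circ M\rangle$ up to applying $\theta$; the point is that $\theta^{-1}$ applied to the semilinear pairing lets the two automorphisms cancel. I would treat the $\Theta\circ M$ part and the derivation part $\delta_\beta^\theta$ separately, since each contributes its own summand.

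For the derivation term, the identity to verify is that $\langle(\vec{w})\delta_{\theta^{-1}(\beta)}^{\theta^{-1}},\vec{c}\rangle$ pairs correctly against $\langle\vec{w},(\vec{c})\delta_\beta^\theta\rangle$. Expanding $(w_i)\delta_{\theta^{-1}(\beta)}^{\theta^{-1}}=\theta^{-1}(\beta)\bigl(\theta^{-1}(w_i)-w_i\bigr)$ and comparing with $(c_i)\delta_\beta^\theta=\beta\bigl(\theta(c_i)-c_i\bigr)$, I would again apply $\theta^{-1}$ to a pairing identity and use that $\theta^{-1}$ is additive and multiplicative to make the $\beta$ factors and the $\theta(\cdot)-(\cdot)$ differences line up; this is precisely why the scalar is twisted to $\theta^{-1}(\beta)$ in the statement. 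Combining the linear and derivation contributions, the upshot I expect is an identity of the form $\langle T'(\vec{w}),\vec{c}\rangle=\theta^{-1}\bigl(\langle\vec{w},T(\vec{c})\rangle\bigr)$, possibly with a harmless overall application of $\theta^{\pm1}$. Since $\vec{c}\in\mathcal{C}$ and $\vec{w}\in\mathcal{C}^\perp$ give $T(\vec{c})\in\mathcal{C}$ and hence $\langle\vec{w},T(\vec{c})\rangle=0$, and since $\theta^{-1}(0)=0$, the left side vanishes, proving $T'(\vec{w})\in\mathcal{C}^\perp$.

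The main obstacle I anticipate is bookkeeping the semilinearity correctly: because $\Theta$ acts coordinatewise by $\theta$, the inner product $\langle(\vec{v})\Theta\circ M,\vec{c}\rangle$ is \emph{not} symmetric in the naive sense, and one must carefully decide where $\theta$ versus $\theta^{-1}$ lands and why transposing $M$ (rather than just twisting it) is forced. In particular I would have to check that the combination $(M_t)_{\theta^{-1}}$, meaning transpose \emph{and} apply $\theta^{-1}$ entrywise, is exactly what makes the semilinear adjoint identity hold, and that the derivation's twist to $\theta^{-1}(\beta)$ is consistent with the same cancellation. Once these twisting conventions are pinned down, the remaining verification is routine linear algebra over $\mathbb{F}_q$, so I would present it as a direct entrywise computation rather than invoking any abstract adjoint formalism.
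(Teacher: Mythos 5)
Your overall strategy coincides with the paper's: both proofs reduce the claim to pairing $T'(\vec{w})$ against $\vec{c}$ and transporting that pairing, via $\theta^{-1}$, to the vanishing pairing of $\vec{w}$ with $T(\vec{c})$. However, there is one concrete step in your outline that would fail as written. The identity you plan to establish,
\[
\langle T'(\vec{w}),\vec{c}\rangle=\theta^{-1}\bigl(\langle\vec{w},T(\vec{c})\rangle\bigr),
\]
is \emph{not} valid on all of $\mathbb{F}_q^n\times\mathbb{F}_q^n$. Carrying out the entrywise computation gives
\[
\langle T'(\vec{w}),\vec{c}\rangle=\theta^{-1}\bigl(\langle\vec{w},T(\vec{c})\rangle\bigr)
+\theta^{-1}(\beta)\Bigl(\theta^{-1}\bigl(\langle\vec{w},\vec{c}\rangle\bigr)-\langle\vec{w},\vec{c}\rangle\Bigr),
\]
because $\theta^{-1}(\beta\,w_jc_j)\neq\theta^{-1}(\beta)\,w_jc_j$ in general: the two derivation terms do not match up by semilinearity alone, and no ``harmless overall application of $\theta^{\pm1}$'' removes the residual. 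A minimal counterexample: take $n=1$, $M=(0)$, $\beta\neq 0$, $\theta\neq id$, $\vec{w}=1$ and $c\notin\mathbb{F}_q^{\theta}$; then the left side is $0$ while the right side of your claimed identity is $\theta^{-1}(\beta)\bigl(c-\theta^{-1}(c)\bigr)\neq 0$.

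The gap is easily closed, and it is exactly what the paper's proof does: besides $\langle\vec{w},T(\vec{c})\rangle=0$ (the only consequence of the hypotheses your concluding step invokes), you must also use the orthogonality $\langle\vec{w},\vec{c}\rangle=0$ itself, which holds since $\vec{w}\in\mathcal{C}^{\perp}$ and $\vec{c}\in\mathcal{C}$, to kill the residual term. In the paper this appears as the silent discarding of the summands $\beta\,(\vec{a}\cdot\vec{c})$ and $\theta^{-1}(\beta)\,(\vec{a}\cdot\vec{c})$ when the derivation contributions are rewritten as $\vec{a}\cdot((\vec{c})\Theta\beta)$ and $(\beta\vec{a})\Theta^{-1}\cdot\vec{c}$ respectively. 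So your plan closes with a one-line addition, but the orthogonality hypothesis must enter inside the pairing manipulation, not merely at the final evaluation; as outlined, the ``adjoint identity'' stage of your argument is where the proof would break.
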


\begin{proof}
Denote by $T'$ the pseudo-linear map defined by $T'(\vec{v}):=(\vec{v})\Theta^{-1}\circ (M_t)_{\theta^{-1}}+(\vec{v})\delta_{\theta^{-1}(\beta)}^{\theta^{-1}}$.
For any $\vec{c}\in\mathcal{C}$ and $\vec{a}\in\mathcal{C}^{\perp}$, note that 
$$0=\vec{a}\cdot T_f(\vec{c})=\vec{a}\cdot ((\vec{c})\Theta\circ M)+\vec{a}\cdot((\vec{c})\delta_{\beta}^{\theta})=$$
$$=(\vec{a}M_t)\cdot ((\vec{c})\Theta)+\vec{a}\cdot((\vec{c})\Theta\beta)=(\vec{a}M_t+\beta\vec{a})\cdot((\vec{c})\Theta),$$
i.e. $(\vec{a}M_t+\beta\vec{a})\cdot((\vec{c})\Theta)=0$.
Thus by \cite[Lemma 4]{TT} we conclude that
$$(T'(\vec{a}))\cdot \vec{c}=\left((\vec{a})\Theta^{-1}\circ (M_t)_{\theta^{-1}}+(\vec{a})\delta_{\theta^{-1}(\beta)}^{\theta^{-1}}\right)\cdot\vec{c}=$$
$$=\left((\vec{a})\Theta^{-1}\circ (M_t)_{\theta^{-1}}+((\vec{a})\Theta^{-1}-\vec{a})\theta^{-1}(\beta)\right)\cdot\vec{c}$$ 
$$=\left((\vec{a})(M_t)\circ\Theta^{-1}+(\beta\vec{a})\Theta^{-1}\right)\cdot\vec{c}=\left([(\vec{a})(M_t)+(\beta\vec{a})]\Theta^{-1}\right)\cdot\vec{c}=0.$$ 
for every $\vec{a}\in\mathcal{C}^{\perp}$ and $\vec{c}\in\mathcal{C}$.
\end{proof}

\begin{defi}
A polynomial $p\in R$ is invariant if $Rp=pR$. Moreover, the set of all invariant polynomials in $R$ will be denoted by $N(R)$.
\end{defi}

Let us prove now the following

\begin{teo}\label{ker skew}
Let $f\in R$ be as in {\em (*)} and let $T_f$ be its associated pseudo-linear transformation as in {\em (**)}.
Then the following properties hold:
\begin{enumerate}
\item If $f=h\cdot g$ for some $h,g \in R$, then
$$Rg/Rf=\pi_f(\mathrm{Ker}\ h(T_f))\iff h\in N(R);$$
\item If $f=h\cdot g=g\cdot h'$ for some $h,h',g \in R$, e.g. when $g\in N(R)$, then $\deg g$ linearly independent columns of 
the $n\times n$ matrix
$$\left(\begin{array}{c}
\pi_f^{-1}(h') \\
T_f(\pi_f^{-1}(h')) \\
\vdots \\ 
T_f^{n-1}(\pi_f^{-1}(h'))
\end{array}\right)$$
form a basis of $\mathcal{C}^{\perp}$, where $\mathcal{C}=(g)_{n,q}^{\theta,\delta_{\beta}^{\theta}}$. 

\end{enumerate}
\end{teo}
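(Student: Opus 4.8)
The plan is to transport everything into the quotient ring $R/Rf$ via the isomorphism $\pi_f$ and the identity of Remark \ref{rem 1bis}, and then argue about one-sided versus two-sided ideals. Throughout I would use that $R=\mathbb{F}_q[X;\theta,\delta_\beta^\theta]$ with $\theta$ an automorphism is a domain in which $\deg(pq)=\deg p+\deg q$, and that left multiplication by $h$ descends to a well-defined $\mathbb{F}_q$-linear map $\lambda_h:R/Rf\to R/Rf$, $[c]\mapsto[hc]$ (well-defined since $Rf$ is a left ideal). Applying Remark \ref{rem 1bis} with $p=h$ gives $\pi_f(h(T_f)(\vec v))=h(X)\cdot\pi_f(\vec v)$, so that $\pi_f(\mathrm{Ker}\,h(T_f))=\mathrm{Ker}\,\lambda_h=\{[c]:hc\in Rf\}$; this is the object to compare with $Rg/Rf$ (note $Rf=Rhg\subseteq Rg$, so the quotient makes sense).

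For the ($\Leftarrow$) part of (1), assuming $h\in N(R)$, i.e. $hR=Rh$, I would prove both inclusions directly. For $\subseteq$: given $[rg]$, write $hr=r'h$ (as $hr\in hR=Rh$), whence $h(rg)=r'hg=r'f\in Rf$. For $\supseteq$: if $hc\in Rhg$, write $hc=rhg$ and $rh=hr''$, so $h(c-r''g)=0$ and the domain property forces $c=r''g\in Rg$. For the ($\Rightarrow$) part, equality gives in particular $Rg/Rf\subseteq\mathrm{Ker}\,\lambda_h$, i.e. $hrg\in Rhg$ for every $r$; cancelling $g$ on the right (domain) yields $hr\in Rh$ for all $r$, that is $hR\subseteq Rh$. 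The hard point of (1) is upgrading $hR\subseteq Rh$ to the equality $hR=Rh$. I would do this by a graded dimension count: for each $d\ge\deg h$, intersecting with the space $R_{\le d}$ of polynomials of degree $\le d$, degree-additivity and injectivity of $r\mapsto hr$ and of $s\mapsto sh$ give $\dim_{\mathbb{F}_q}(hR\cap R_{\le d})=\dim_{\mathbb{F}_q}(Rh\cap R_{\le d})=d-\deg h+1$; since $hR\cap R_{\le d}\subseteq Rh\cap R_{\le d}$ and the dimensions agree, the two coincide for every $d$, hence $hR=Rh$ and $h\in N(R)$.

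For (2), writing $\vec h'=\pi_f^{-1}(h')$ and letting $P$ be the displayed matrix with rows $T_f^j(\vec h')$, the plan is to show $\mathrm{colspace}(P)=\mathcal C^\perp$. The key computation is that for $\vec c\in\mathcal C$ the scalar $\vec c\cdot P_{:,l}$ equals the $l$-th coordinate of $\sum_m c_m T_f^m(\vec h')=\tilde c(T_f)(\vec h')$, where $\tilde c=\sum_m c_mX^m$; hence all columns of $P$ are orthogonal to $\vec c$ if and only if $\tilde c(T_f)(\vec h')=\vec 0$. By Remark \ref{rem 1bis} this vector equals $\pi_f^{-1}([\tilde c\,h'])$, and since $[\tilde c]\in\pi_f(\mathcal C)=Rg/Rf$ I may take $\tilde c\equiv rg\pmod{Rf}$, so that $\tilde c\,h'\equiv r(gh')=rf\equiv 0\pmod{Rf}$ because $f=gh'$. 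Thus every column of $P$ lies in $\mathcal C^\perp$.

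Finally I would compute the rank. By Remark \ref{rem 1bis} the rows of $P$ span the code with generator polynomial $h'$, namely $\pi_f^{-1}(Rh'/Rf)$, and $Rh'/Rf\cong R/Rg$ has dimension $\deg g$; hence $\mathrm{rank}(P)=\deg g=\dim\mathcal C^\perp$. Combining this with the previous paragraph gives $\mathrm{colspace}(P)\subseteq\mathcal C^\perp$ with equal dimensions, so $\mathrm{colspace}(P)=\mathcal C^\perp$ and any $\deg g$ linearly independent columns of $P$ form a basis of $\mathcal C^\perp$. The only delicate step in (2) is the bookkeeping of the transpose, namely that the rows of $P$ generate the code $(h')$ while its columns generate the dual $\mathcal C^\perp$; once the orthogonality identity above is established, the dimension match closes the argument.
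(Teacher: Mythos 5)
Your proof is correct and follows essentially the same route as the paper's: in (1) both inclusions are obtained by the identical rewriting tricks (pass scalars across $h$ using invariance, then cancel in the domain $R$), and in (2) the same orthogonality computation shows the columns of the matrix lie in $\mathcal{C}^{\perp}$, after which a rank count of $\deg g$ finishes the argument. The one genuine difference is in the ($\Rightarrow$) direction of (1): the paper's proof stops after producing $h\alpha=qh$ for every $\alpha$, i.e.\ the one-sided inclusion $hR\subseteq Rh$, and simply asserts $h\in N(R)$, whereas you supply the counting argument that upgrades this inclusion to the equality $hR=Rh$ required by the definition of $N(R)$. This actually fills a small gap in the paper's own write-up. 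One caution about how you phrase that count: $hR\cap R_{\leq d}$ is a \emph{right} $\mathbb{F}_q$-subspace while $Rh\cap R_{\leq d}$ is a \emph{left} one (scalars do not commute past $h$ in $R$), so ``$\dim_{\mathbb{F}_q}$'' is ambiguous as written; the cleanest fix is to count elements, since $r\mapsto hr$ and $s\mapsto sh$ are injective and degree-additive, giving $|hR\cap R_{\leq d}|=|Rh\cap R_{\leq d}|=q^{\,d-\deg h+1}$, and a containment of finite sets of equal cardinality is an equality (alternatively, take dimensions over the fixed field $\mathbb{F}_q^{\theta}$, which commutes with $X$). Similarly, in (2) your step $\tilde{c}\,h'\equiv r(gh')\pmod{Rf}$ multiplies a congruence modulo the \emph{left} ideal $Rf$ on the right by $h'$; this is legitimate here only because $fh'=hgh'=hf\in Rf$, or more simply because $Rf=Rhg\subseteq Rg$ forces $\tilde{c}\in Rg$ exactly, which is how the paper argues. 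Finally, your rank computation via the row space $\pi_f^{-1}(Rh'/Rf)\cong R/Rg$ is a mild variant of the paper's, which instead exhibits the $\deg g$ rows $\pi_f^{-1}(X^k h')$, $0\leq k\leq \deg g-1$, as linearly independent; both observations pin the rank at $\deg g=\dim\mathcal{C}^{\perp}$ and close the proof the same way.
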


\begin{proof}
(1) Assume that $h\in N(R)$. If $a\in Rg/Rf$ then $a=\alpha g$ and
$$h\alpha g = \alpha' hg = \alpha' f = 0\in R/Rf.$$ Then we have
$$\vec{0}=\pi_f^{-1}(h\alpha g)=h(T_f)\pi_f^{-1}(\alpha g).$$ This shows that $a=\alpha g\in \pi_f(\mathrm{Ker}\ h(T_f)),$
that is, $Rg/Rf\subseteq\pi_f(\mathrm{Ker}\ h(T_f)).$

\medskip

\noindent On the other hand, if $v\in\pi_f(\mathrm{Ker}\ h(T_f))$ then $h(T_f)\pi_f^{-1}(v)=\vec{0}$. 
Hence $hv=\beta f$ for some $\beta\in R$ and this gives
$$hv=\beta f =\beta hg=h\beta'g,$$
i.e. $v=\beta' g\in Rg/Rf$. Thus $\pi_f(\mathrm{Ker}\ h(T_f))\subseteq Rg/Rf$, i.e. $Rg/Rf=\pi_f(\mathrm{Ker}\ h(T_f)).$

Conversely, suppose that $Rg/Rf=\pi_f(\mathrm{Ker}\ h(T_f)).$ Let $\alpha\in R$ and write $\alpha g=\pi_f(\vec{v})$, where $\vec{v}\in\mathrm{Ker}\ h(T_f)$.
Then
$$h\alpha g=h\pi_f(\vec{v})=\pi_f(h(T_f)\vec{v})=\pi_f(\vec{0})=0\in R/Rf.$$
This shows that there exists an element $q\in R$ such that $h\alpha g=qf=qhg$, i.e. $h\alpha=qh$. Hence $h\in N(R)$.

\medskip

\noindent (2) Let $\vec{c}\in\mathcal{C}$ and write $\pi_f(\vec{h'})=h'$ and $\pi_f(\vec{c})=c=\alpha g$, for some $\alpha,c\in R$.
Then we get 
$$\pi_f(c(T_f)(\vec{h'}))=ch'=(\alpha g)h'=\alpha (gh')=\alpha f=0\in R/Rf,$$
i.e. $c(T_f)(\vec{h'})=\vec{0}$. Write $\vec{c}=(c_0,\dots,c_{n-1})$. Therefore we have
$$\vec{0}=c(T_f)(\vec{h'})=c_0\vec{h'}+c_1T_f(\vec{h'})+...+c_{n-1}T_f^{n-1}(\vec{h'}).$$
This shows that 
$$(c_0,\dots,c_{n-1})\cdot \left(\begin{array}{c}
\vec{h'} \\
T_f(\vec{h'}) \\
\vdots \\ 
T_f^{n-1}(\vec{h'})
\end{array}\right)=\vec{0}$$
for any $\vec{c}=(c_0,\dots,c_{n-1})\in\mathcal{C}$. Finally, since $\pi_f(T_f^{k}(\vec{h'}))=X^kh'$ for $k=0,...,n-\deg h'-1=\deg g -1$, we see that $\{\vec{h'}, T_f(\vec{h'}), \dots , T_f^{\deg g-1}(\vec{h'}) \}$ are linearly independent.
\end{proof}

From the above result, we can deduce the following consequences.

\begin{coro}\label{cor}
Suppose that the same hypothesis as in {\em Theorem \ref{ker skew}}{\em (1)} holds. Then

\smallskip

\quad $\mathrm{Ker}\ h(T_f)\subseteq \mathbb{F}_q^n$ is a $(f,\Theta,\delta_{\beta}^{\theta})$-skew GC code $\iff\ h\in N(R).$
\end{coro}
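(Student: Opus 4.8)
The plan is to read the statement through the dictionary furnished by Proposition \ref{corr. ideal2}, which asserts that a non-empty subset of $\mathbb{F}_q^n$ is a $(f,\Theta,\delta_{\beta}^{\theta})$-skew GC code exactly when its image under $\pi _f$ is a principal left ideal of $R/Rf$. Since $\mathrm{Ker}\ h(T_f)$ always contains $\vec{0}$, it is non-empty, so the corollary is equivalent to the assertion that $\pi _f(\mathrm{Ker}\ h(T_f))$ is a principal left ideal of $R/Rf$ if and only if $h\in N(R)$. The only substantial ingredient beyond Proposition \ref{corr. ideal2} will be Theorem \ref{ker skew}(1), and the entire argument reduces to matching the left ideal in question with $Rg/Rf$.

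For the implication $h\in N(R)\Rightarrow \mathrm{Ker}\ h(T_f)$ is a skew GC code I would simply invoke Theorem \ref{ker skew}(1): under the standing hypothesis $f=h\cdot g$ together with $h\in N(R)$, it gives $\pi _f(\mathrm{Ker}\ h(T_f))=Rg/Rf$, which is visibly a principal left ideal, and Proposition \ref{corr. ideal2} then concludes. For the converse I would start from the description $\pi _f(\mathrm{Ker}\ h(T_f))=\{[c]\in R/Rf : h\cdot [c]=0\}$, which is immediate from Remark \ref{rem 1bis}. The crucial observation is that $[g]$ lies in this set, because $hg=f\equiv 0$ in $R/Rf$ (equivalently $\vec{g}=\pi _f^{-1}(g)\in\mathrm{Ker}\ h(T_f)$). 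Assuming now that $\mathrm{Ker}\ h(T_f)$ is a skew GC code, Proposition \ref{corr. ideal2} makes $\pi _f(\mathrm{Ker}\ h(T_f))$ a left ideal; containing $[g]$, it must contain the whole ideal $[g]$ generates, yielding the inclusion $Rg/Rf\subseteq \pi _f(\mathrm{Ker}\ h(T_f))$. This is precisely the hypothesis needed to rerun the converse computation from the proof of Theorem \ref{ker skew}(1): for every $\alpha\in R$ one has $[\alpha g]\in\pi _f(\mathrm{Ker}\ h(T_f))$, hence $h\alpha g\in Rf=Rhg$, and right-cancelling $g$ in the domain $R$ gives $h\alpha =qh$ for some $q\in R$.

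The main obstacle is the final step: from ``$h\alpha \in Rh$ for all $\alpha\in R$'', i.e. $hR\subseteq Rh$, one must deduce the two-sided condition $Rh=hR$ defining $h\in N(R)$. I would argue this via the assignment $\alpha\mapsto q$ determined by $h\alpha =qh$, which is well defined (right-cancellation), additive, degree-preserving ($\deg q=\deg\alpha$, using that $R$ is a domain) and injective; restricting it to the spaces of polynomials of bounded degree, which are finite, injectivity forces surjectivity there, so every $qh$ lies in $hR$, giving $Rh\subseteq hR$ and hence equality. This is exactly the inference already used—if only implicitly—at the end of the proof of Theorem \ref{ker skew}(1), so in practice it can be cited from there rather than redone. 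I expect this ring-theoretic passage to be the only delicate point; the rest is a formal combination of Proposition \ref{corr. ideal2}, Remark \ref{rem 1bis}, and Theorem \ref{ker skew}(1).
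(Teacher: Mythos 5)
Your proof is correct and is essentially the paper's intended argument: the paper states this corollary as an immediate consequence of Theorem \ref{ker skew}(1) together with Proposition \ref{corr. ideal2}, which is exactly the reduction you carry out. Your two supplementary observations---that being a left ideal containing $[g]$ already yields the inclusion $Rg/Rf\subseteq\pi_f(\mathrm{Ker}\ h(T_f))$, which is all the converse computation in Theorem \ref{ker skew}(1) actually uses, and that $hR\subseteq Rh$ forces $h\in N(R)$ via the degree-preserving injectivity/finiteness argument---are precisely the details the paper leaves implicit, the latter being tacitly invoked at the end of its proof of Theorem \ref{ker skew}(1).
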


\begin{coro}\label{ker}
Assume that $\theta=id$ and let $f\in \mathbb{F}_q[X]$. Then we have the following properties:
\begin{enumerate}
\item If $f=h\cdot g$ for some $h,g\in\mathbb{F}_q[X]$, then $(g)=\pi_f (\ker h(A))$;
\item $\ker h'(A)\subseteq \mathbb{F}_q^n$ is a vector subspace invariant by $A$ for any divisor $h'$ of $f$; 
\item all the $f$-GC codes are given by $\ker h''(A)$, where $h''$ is a divisor of $f$; 
\item If $\mathcal{C}$ is an $f$-GC code such that $\pi_f(\mathcal{C})=(g)$, then $n-\dim\mathcal{C}$ 
linearly independent columns of $h(A)$ form a basis of $\mathcal{C}^{\perp}$, where $f=h\cdot g$; moreover, if $b\in\mathbb{F}_q[X]$
is the smallest degree polynomial such that $\mathcal{C}\subseteq \ker b(A)$, then $b=ah$ for some $a\in\mathbb{F}_q\backslash \{0\}=:\mathbb{F}_q^*$.
\end{enumerate}
\end{coro}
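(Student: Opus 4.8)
The plan is to specialize Theorem \ref{ker skew} to the case $\theta = id$ and exploit that $R = \mathbb{F}_q[X]$ is then commutative. Two observations drive everything. First, the multiplication rule becomes $X\cdot a = aX$, so $\delta_{\beta}^{\theta}=0$, $R=\mathbb{F}_q[X]$, and $T_f$ is simply right multiplication by the companion matrix $A$; hence $h(T_f)=h(A)$ for every polynomial $h$. Second, since $\mathbb{F}_q[X]$ is commutative, $Rp=pR=(p)$ for every $p$, so $N(R)=R$, i.e.\ every polynomial is invariant. With the second observation in hand, item (1) is immediate from Theorem \ref{ker skew}(1): the right-hand side $h\in N(R)$ of the equivalence there is automatically satisfied, so $Rg/Rf=\pi_f(\ker h(A))$, and in the commutative quotient $Rg/Rf$ is exactly the principal ideal $(g)$.

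For item (2) I would argue directly. If $\vec{v}\in\ker h'(A)$, then since powers of $A$ commute with $h'(A)$ one has $(\vec{v}A)\,h'(A)=\vec{v}\,(A\,h'(A))=\vec{v}\,(h'(A)\,A)=(\vec{v}\,h'(A))\,A=\vec{0}$, so $T_f(\vec{v})=\vec{v}A\in\ker h'(A)$; thus $\ker h'(A)$ is a linear subspace invariant by $A=T_f$, i.e.\ an $f$-GC code. (Equivalently one may invoke Corollary \ref{cor}, whose hypothesis $h'\in N(R)$ holds for every divisor $h'$ of $f$.) Item (3) then follows by combining the two inclusions: every $\ker h''(A)$ with $h''\mid f$ is an $f$-GC code by (2); conversely, a given $f$-GC code $\mathcal{C}$ has a generator polynomial $g$ with $\pi_f(\mathcal{C})=(g)$ and $g\mid f$ (by Proposition \ref{corr. ideal2}), so setting $h'':=f/g$ and applying (1) gives $\mathcal{C}=\pi_f^{-1}((g))=\ker h''(A)$.

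Item (4) is the commutative specialization of Theorem \ref{ker skew}(2), for which $f=h\cdot g=g\cdot h$ forces $h'=h$. The first point needs the identification of the $n\times n$ matrix appearing in Theorem \ref{ker skew}(2) with $h(A)$: its $k$-th row is $T_f^k(\pi_f^{-1}(h))=\vec{h}A^k$, which under $\pi_f$ represents $X^k h=hX^k$ by Remark \ref{rem 1bis}, exactly the polynomial represented by the $k$-th row $\vec{e}_k\,h(A)$ of $h(A)$; since $\pi_f$ is a bijection the two matrices coincide. As $\dim\mathcal{C}=n-\deg g$ gives $n-\dim\mathcal{C}=\deg g$, Theorem \ref{ker skew}(2) yields that $n-\dim\mathcal{C}$ independent columns of $h(A)$ form a basis of $\mathcal{C}^{\perp}$.

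For the \emph{moreover} part I would use the module isomorphism $\mathbb{F}_q^n\cong\mathbb{F}_q[X]/(f)$ under which $A$ acts as multiplication by $X$. Writing $d=\gcd(b,f)$, the kernel of multiplication by $b$ is $\ker b(A)=(f/d)$, so $\mathcal{C}=(g)\subseteq\ker b(A)=(f/d)$ holds iff $(f/d)\mid g$, which, using $f=hg$, is equivalent to $h\mid d$, hence (since $d=\gcd(b,f)$ and $h\mid f$) to $h\mid b$. Consequently the set of $b$ with $\mathcal{C}\subseteq\ker b(A)$ is exactly the ideal $(h)$, whose nonzero elements of least degree are precisely the associates $b=ah$ with $a\in\mathbb{F}_q^{*}$. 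I expect this last part to be the main obstacle: both the matrix identification and the annihilator computation rest on carefully tracking the dictionary between a vector, its image under $\pi_f$, and divisibility in $\mathbb{F}_q[X]$, and the gcd manipulation $(f/d)\mid g\Leftrightarrow h\mid b$ must be carried out with care to cover the case where $b$ and $f$ share factors of $g$.
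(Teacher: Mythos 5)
Your proof is correct, and for items (1)--(3) it coincides with the paper's argument: commutativity gives $N(R)=R$, so Theorem \ref{ker skew}(1) and Corollary \ref{cor} yield (1), (3) and (2) at once. In item (4) you take a genuinely different route, twice. For the basis claim, the paper does not invoke Theorem \ref{ker skew}(2) at all: it gets $\mathcal{C}=\ker h(A)$ from (1) and concludes by rank--nullity, $\mathrm{rk}\,h(A)=n-\dim\ker h(A)=n-\dim\mathcal{C}$, every column of $h(A)$ annihilating $\mathcal{C}$. Your alternative --- specializing Theorem \ref{ker skew}(2) after identifying the matrix with rows $T_f^k(\pi_f^{-1}(h))=\vec{h}A^k$ with $h(A)$, via Remark \ref{rem 1bis} and $X^k h=hX^k$ --- is valid, and that identification is exactly the point needing care; you carry it out correctly. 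For the \emph{moreover} claim the divergence is sharper: the paper argues by Euclidean division ($h=bq+r$ with $\deg r<\deg b$, so $\mathcal{C}\subseteq\ker r(A)$ and minimality forces $r=0$), then shows $q$ is a unit from the ideal inclusion $(g)=\pi_f(\mathcal{C})\subseteq\pi_f(\ker b(A))=(qg)$; you instead compute $\ker b(A)=\pi_f^{-1}((f/d))$ with $d=\gcd(b,f)$ and show $\mathcal{C}\subseteq\ker b(A)\iff h\mid b$, so that the admissible $b$ form exactly the ideal $(h)$. Both are sound (your chain $(f/d)\mid g\iff h\mid d\iff h\mid b$ checks out, using $h\mid f$); yours proves slightly more --- a characterization of all annihilating polynomials $b$, not only the minimal ones --- at the price of the gcd kernel computation, while the paper's stays entirely within the division-with-remainder machinery it has already set up.
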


\begin{proof}
Note that by Theorem \ref{ker skew}(1) and Corollary \ref{cor} we get cases (1),(3) and (2) respectively.
Let $\mathcal{C}$ be an $f$-GC code such that $\pi_f(\mathcal{C})=(g)$. Then from (1) it follows that $\mathcal{C}=\ker h(A)$, where
$f=h\cdot g$. Since $\mathrm{rk} h(A)=n-\dim\ker h(A)=n-\dim\mathcal{C}$, we deduce that $n-\dim\mathcal{C}$ 
linearly independent columns of $h(A)$ form a basis of $\mathcal{C}^{\perp}$. 
Finally, assume that $\mathcal{C}\subseteq \ker b(A)$ for some $b\in\mathbb{F}_q[X]$ with smallest degree. Then $h=bq+r$ for some $q,r\in\mathbb{F}_q[X]$ such that
$\deg r<\deg b$. Since $h(A)=b(A)q(A)+r(A)$, we see that $\mathcal{C}\subseteq\ker r(A)$. Thus $r=0$ and $h=bq$. 
Hence $f=b\cdot q\cdot g$ and this gives 
$(g)=\pi_f(\mathcal{C})\subseteq\pi_f(\ker b(A))=(qg).$ Therefore, we have $g=\alpha qg+\beta f$ for some $\alpha,\beta\in\mathbb{F}_q[X]$, i.e.
$1=(\alpha q+\beta b)q$. This shows that $q\in\mathbb{F}_q^*$ and that $b=ah$ with $a=q^{-1}$.
\end{proof}

The next useful result is related to polynomials in $N(R)$.

\begin{lem}\label{lemma1}
Let $a,b\in N(R)$ such that $a=bc=c'b$ for some $c,c'\in R$. Then $c,c'\in N(R)$.
\end{lem}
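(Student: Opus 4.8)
The plan is to establish the two set-equalities $Rc=cR$ and $Rc'=c'R$ separately, in each case by proving a double inclusion. Two facts drive the whole argument. First, $R=\mathbb{F}_q[X;\theta,\delta_{\beta}^{\theta}]$ is a domain: since $\theta$ is an automorphism, the leading coefficient of a product $pq$ is $p_{\deg p}\,\theta^{\deg p}(q_{\deg q})\neq 0$, so $\deg(pq)=\deg p+\deg q$ and $R$ has no zero divisors (this is also implicit in the fact, recalled in $\S1$, that $R$ is left and right Euclidean). Consequently we may cancel a nonzero factor on the left or on the right; note $b\neq 0$, since otherwise $a=bc=0$ and there would be nothing to factor. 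Second, the hypotheses $a,b\in N(R)$ say exactly that $Ra=aR$ and $Rb=bR$, so every element of the form $br$ can be rewritten as $r_1b$, every $rb$ as $br_2$, and likewise for $a$; these ``commutation up to a factor'' moves let us shuttle an arbitrary $r$ past $b$ and past $a$.

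For $c$ I use the factorization $a=bc$, testing each inclusion by multiplying on the \emph{left} by $b$ so as to manufacture the invariant element $a$. To show $Rc\subseteq cR$, take $r\in R$ and compute $b(rc)=(br)c=r_1bc=r_1a$ using $br=r_1b$; since $r_1a\in Ra=aR$ we may write $r_1a=as=b(cs)$, and cancelling $b$ on the left gives $rc=cs\in cR$. For the reverse inclusion $cR\subseteq Rc$, I compute $b(cr)=(bc)r=ar\in aR=Ra$, write $ar=sa=(sb)c=b(s'c)$ using $sb=bs'$, and cancel $b$ on the left to get $cr=s'c\in Rc$. Together these give $Rc=cR$, i.e. $c\in N(R)$.

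The argument for $c'$ is the mirror image, using $a=c'b$ and cancelling $b$ on the \emph{right}. For $Rc'\subseteq c'R$ I compute $(rc')b=r(c'b)=ra\in Ra=aR$, write $ra=as=c'(bs)=(c's')b$ using $bs=s'b$, and cancel $b$ on the right to obtain $rc'=c's'\in c'R$. For $c'R\subseteq Rc'$ I first move $r$ through $b$ via $rb=br''$, so that $(c'r)b=c'(br'')=(c'b)r''=ar''\in aR=Ra$; writing $ar''=sa=(sc')b$ and cancelling $b$ on the right yields $c'r=sc'\in Rc'$. Hence $c'\in N(R)$.

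The individual computations are short, so the work is really organizational: the point requiring care — and the one I would treat as the main obstacle — is the bookkeeping of sides. For each factor one must multiply on precisely the side that reassembles $a$ (left multiplication by $b$ for $c$, since $a=bc$; right multiplication by $b$ for $c'$, since $a=c'b$), and then verify that after passing through the two-sided ideals $Ra=aR$ and $Rb=bR$ the leftover copy of $b$ lands in the position where the domain property permits its cancellation. Once the sides are tracked correctly, all four inclusions close immediately.
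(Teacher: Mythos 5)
Your proof is correct and follows essentially the same route as the paper's: left- or right-multiply by $b$ to reassemble the invariant element $a$, shuttle factors through the two-sided ideals $Ra=aR$ and $Rb=bR$, and cancel $b$ using that $R$ is a domain. You are in fact more thorough than the paper, whose proof records only the single inclusion $cR\subseteq Rc$ (your second computation, verbatim) and handles everything else with ``similarly,'' whereas you verify all four inclusions explicitly --- same method, just carried out in full.
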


\begin{proof}
Assume that $a=bc$ for some $c\in R$. Then for any $d\in R$ we have 
$$b(cd)=(bc)d=ad=d'a=d'(bc)=(d'b)c=(bd'')c=b(d''c),$$ for some $d',d''\in R$, i.e. $cd=d''c$.
Similarly, one can prove that for any $e\in R$ there exists $e'\in R$ such that $c'e=e'c'$.
This shows that $c,c'\in N(R)$.
\end{proof}

From now on, assume that

\medskip

\noindent $(\#)$\quad $f=f _{1}^{\alpha _1}\cdot \ldots \cdot f _{t}^{\alpha
_t}$ is a factorization of $f$ as in $(*)$ into monic polynomials
$f_k\in N(R)$ which are irreducible in $ N(R)$ and such that $Rf_i\neq Rf_j$ for any $i\neq j$.

\medskip

\begin{obs}
If $\theta=id$, then we have $\delta_{\beta}^{\theta}=0$ and $(\#)$ always holds.
\end{obs}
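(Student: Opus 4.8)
The plan is to treat the two assertions of the remark separately, since they are of quite different natures. For the first, I would simply unwind the definition of the $\theta$-derivation recalled in $\S1$: when $\theta=id$, each coordinate satisfies $\theta(v_i)-v_i=v_i-v_i=0$, so that $(v_1,\dots,v_n)\delta_{\beta}^{\theta}=(\beta(\theta(v_1)-v_1),\dots,\beta(\theta(v_n)-v_n))=\vec{0}$ for every $\vec{v}\in\mathbb{F}_q^n$. Hence $\delta_{\beta}^{\theta}=0$ as an additive map, with no further computation required.

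For the second assertion, the key structural observation is that with $\theta=id$ and $\delta_{\beta}^{\theta}=0$ the basic multiplication rule $X\cdot a=\theta(a)X+\beta(\theta(a)-a)$ collapses to $X\cdot a=aX$, so that $R=\mathbb{F}_q[X;id,0]$ coincides with the ordinary commutative polynomial ring $\mathbb{F}_q[X]$. In a commutative ring every principal left ideal is automatically two-sided, i.e. $Rp=pR$ for all $p\in R$, so $N(R)=R$. Consequently the clause \emph{irreducible in $N(R)$} appearing in $(\#)$ reduces to ordinary irreducibility in $\mathbb{F}_q[X]$, and being \emph{monic in $N(R)$} is just being monic in the usual sense.

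From here I would invoke that $\mathbb{F}_q[X]$ is a Euclidean domain, hence a unique factorization domain. The fixed monic polynomial $f$ of $(*)$ therefore admits a factorization into monic irreducible polynomials, unique up to the order of the factors; collecting equal factors yields $f=f_1^{\alpha_1}\cdots f_t^{\alpha_t}$ with the $f_i$ pairwise distinct monic irreducibles and each $\alpha_i\geq 1$. This is exactly the shape demanded by $(\#)$.

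The only point that genuinely needs a line of argument is the final condition $Rf_i\neq Rf_j$ for $i\neq j$. Here I would note that in $\mathbb{F}_q[X]$ two polynomials generate the same ideal precisely when they are associates, i.e. differ by a unit of $\mathbb{F}_q^*$; two \emph{monic} associates must then be equal. Since the $f_i$ are pairwise distinct monic irreducibles, no two of them are associates, whence $Rf_i\neq Rf_j$ for $i\neq j$. I do not expect any real obstacle in this remark: the whole content follows by direct verification, the only genuine insight being that the commutative specialization forces $N(R)=R$, after which uniqueness of factorization in $\mathbb{F}_q[X]$ supplies $(\#)$ for free.
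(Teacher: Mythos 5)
Your proof is correct; the paper states this remark without any proof, and your argument --- $\theta=id$ forces $\beta(\theta(a)-a)=0$, hence $R=\mathbb{F}_q[X]$ is commutative, so $N(R)=R$ and unique factorization into distinct monic irreducibles yields $(\#)$, with $Rf_i\neq Rf_j$ because distinct monic polynomials are never associates --- is exactly the justification the authors leave implicit. Nothing is missing, and the one point worth isolating (that commutativity collapses $N(R)$ to all of $R$) is the one you correctly identified.
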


On the other hand, when either $\theta\neq id$ or $\delta_{\beta}^{\theta}\neq 0$, we have the following

\begin{prop}\label{proposi}
$(\#)$ holds\ $\iff$ $f\in N(R)$. 
\end{prop}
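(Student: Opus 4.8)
The plan is to prove the two implications separately, the forward one being routine and the converse carrying all the content.

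For the implication $(\#)\Rightarrow f\in N(R)$, I would first record that $N(R)$ is closed under multiplication: if $a,b\in N(R)$, then $R(ab)=(Ra)b=(aR)b=a(Rb)=a(bR)=(ab)R$, so $ab\in N(R)$. Since in $(\#)$ the polynomial $f$ is \emph{literally} a product of the invariant polynomials $f_k\in N(R)$ (each block $f_k^{\alpha_k}$ being a product of copies of $f_k$), a one-line induction on the total number of factors yields $f\in N(R)$. No unit issues arise because $f$ and all the $f_k$ are monic.

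For the converse $f\in N(R)\Rightarrow(\#)$, I would argue by induction on $\deg f$. If $f$ is irreducible in $N(R)$, then $f$ already has the required shape with $t=1$, $\alpha_1=1$, $f$ being monic by $(*)$. Otherwise $f$ admits a monic invariant irreducible left divisor $g$, so $f=gh$ with $h\in R$; since $g\in N(R)$ we have $f\in gR=Rg$, hence $f=gh=h'g$ for some $h'\in R$, and Lemma \ref{lemma1} (applied to $a=f$, $b=g$) upgrades the cofactors to $h,h'\in N(R)$. Both $g$ and $h$ are monic, invariant, and of strictly smaller degree, so by induction each factors into monic invariant irreducibles; concatenating gives a factorization $f=p_1\cdots p_r$ into monic invariant irreducibles.

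It then remains to rewrite $p_1\cdots p_r$ in the grouped form $f_1^{\alpha_1}\cdots f_t^{\alpha_t}$ with $Rf_i\neq Rf_j$ for $i\neq j$. For the distinctness side I would use that, in a domain, $Rp=Rp'$ forces $p=\lambda p'$ with $\lambda$ a unit, i.e. $\lambda\in\mathbb{F}_q^*$; since the $p_i$ are monic this means $Rp=Rp'\iff p=p'$, so after merging repeated primes the surviving factors generate pairwise distinct two-sided ideals. The \emph{main obstacle} is precisely the legitimacy of this merging: to collect equal factors into the displayed powers I must be able to reorder the product, which requires that $N(R)$ be commutative modulo units, equivalently $Rab=Rba$ (hence, by monicity, $ab=ba$) for all $a,b\in N(R)$. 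I would establish this from the structure of the two-sided ideals of the skew polynomial ring $R$ over the finite field $\mathbb{F}_q$, namely that the principal two-sided ideals generated by invariant polynomials form a commutative monoid under multiplication, so that $ab$ and $ba$ are scalar associates. Granting this commutativity, reordering $p_1\cdots p_r$ and grouping the repeated (hence equal) primes produces $f=f_1^{\alpha_1}\cdots f_t^{\alpha_t}$ with pairwise distinct $Rf_i$, which is $(\#)$.
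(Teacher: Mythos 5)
Your proposal is correct, and the forward implication (closure of $N(R)$ under products) is identical to the paper's; but your converse takes a genuinely different route. The paper applies \cite[Theorem 9, p.38]{J} directly to the two-sided ideal $Rf$: the unique factorization $Rf=Rf_1^{\beta_1}\cdots Rf_s^{\beta_s}$ into powers of maximal two-sided ideals with monic invariant generators delivers the grouped shape, the irreducibility in $N(R)$ and the distinctness $Rf_i\neq Rf_j$ all in one stroke, after which the only work is showing that the cofactor $h$ in $f=hf_1^{\beta_1}\cdots f_s^{\beta_s}$ equals $1$, done via Lemma \ref{lemma1} and a second appeal to uniqueness. You instead build a factorization $f=p_1\cdots p_r$ into monic invariant irreducibles of $N(R)$ by elementary induction on the degree (with Lemma \ref{lemma1} upgrading cofactors to $N(R)$), and you never need the uniqueness theorem: grouping equal factors requires only that monic invariant polynomials commute, and distinctness of the surviving $Rf_i$ is elementary because the units of $R$ are $\mathbb{F}_q^*$ and your factors are monic (monicity also kills the scalar ambiguity you worry about: $pq$ and $qp$ are both monic, so commutativity up to units gives $pq=qp$ exactly). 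The trade-off is that the commutativity you rightly flag as the main obstacle is itself the same Jacobson structure theory the paper leans on --- it is precisely \cite[Lemma 4, p.38]{J}, the statement cited in the proof of Lemma \ref{lemma2}$(b)$ --- so your argument redistributes, rather than removes, the dependence on the theory of two-sided ideals; it is weaker input than unique factorization, but not a self-contained proof as written. If you want to discharge that obstacle concretely in this ring: since $\delta_{\beta}^{\theta}=\beta(\theta-\mathrm{id})$, the element $Y:=X+\beta$ satisfies $Ya=\theta(a)Y$, so $R\cong\mathbb{F}_q[Y;\theta]$; there the monic invariant polynomials are exactly those of the form $Y^mc$ with $c$ monic in the center $\mathbb{F}_q^{\theta}[Y^k]$ ($k$ the order of $\theta$), and any two such elements visibly commute, which completes your argument without citing \cite{J} at all.
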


\begin{proof}
If $(\#)$ holds, then $f\in N(R)$ because the product of elements in $N(R)$ belongs in $N(R)$. Assume now that $f\in N(R)$. By \cite[Theorem 9, p.38]{J} we know that $Rf=Rf_1^{\beta_1}\cdot ... \cdot Rf_s^{\beta_s}$ with $\beta_j\in\mathbb{Z}_{\geq 0}$, $f_j$ a monic polynomial in $N(R)$ and $Rf_j$ a maximal two sided ideal for any $j=1,...,s$. This shows that $f_i\neq f_j$ and that all the $f_i$'s are irreducible polynomials in $N(R)$. Thus $f=hf_1^{\beta_1}\cdot ... \cdot f_s^{\beta_s}$ for some $h\in R$. Since $f$ and all the $f_j$'s are polynomials in $N(R)$, from Lemma \ref{lemma1} it follows that $h\in N(R)$.
Then $f=f'h$ for some $f'\in R$ and this gives that $Rf=R(f'h)\subseteq Rh$. Hence $Rh$ is a two sided ideal containing $Rf$ and from \cite[p.38]{J} we deduce that $Rh=Rf_1^{\gamma_1}\cdot ... \cdot Rf_s^{\gamma_s}$, where $\gamma_i\in\mathbb{Z}_{\geq 0}$ and $\gamma_i\leq\beta_i$ for every $i=1,...,s$. Therefore we get that 
$$Rf_1^{\beta_1}\cdot ... \cdot Rf_s^{\beta_s}=Rf=R(h)\cdot R(f_1^{\beta_1})\cdot ... \cdot R(f_s^{\beta_s})=R(f_1)^{\beta_1+\gamma_1}\cdot ... \cdot R(f_s)^{\beta_s+\gamma_s},$$ and since the factorization of $Rf$ is unique, we conclude that $\gamma_i=0$ for any $i=1,...,s$, i.e. $Rh=R$. Hence $h\in\mathbb{F}_q^{\theta}$ and since $f$ is a monic polynomial, we obtain that $h=1$.
\end{proof}

\begin{lem}\label{lemma2}
Assume that $(\#)$ holds. Then for any $a_m,b_m\in\mathbb{Z}_{> 0}$ we have the following two properties:
\begin{enumerate}
\item[$(a)$] $\mathrm{lgcd}(\Pi_{k=1}^{r}f_{i_k}^{a_k}, \Pi_{h=1}^{s}f_{j_h}^{b_h})=1$, for any $\{j_1,...,j_s\}\subseteq \{1,...,t\}\backslash \{i_1,...,i_r\}$;
\item[$(b)$] $f_i^{a_i}f_j^{b_j}=f_j^{b_j}f_i^{a_i}$. 
\end{enumerate}
\end{lem}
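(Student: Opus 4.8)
The plan is to establish $(a)$ first and then deduce $(b)$ from it. The tools I would rely on are all available under $(\#)$: the ring $R$ is a left and right principal ideal domain, the product of invariant polynomials is again invariant (so every $Rf_k^{a_k}$, and every product of such, is a two-sided ideal), and, as recalled in the proof of Proposition \ref{proposi} via \cite[p.38]{J}, each $Rf_k$ is a maximal two-sided ideal and these are pairwise distinct. I will also use that $R$ is Noetherian (being a PID), that maximal two-sided ideals are prime, and that for a monic $p$ one has $\dim_{\mathbb{F}_q} R/Rp = \deg p$.

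For $(a)$, set $P:=\prod_{k=1}^{r} f_{i_k}^{a_k}$ and $Q:=\prod_{h=1}^{s} f_{j_h}^{b_h}$. Since each factor lies in $N(R)$, so do $P$ and $Q$, whence $RP$ and $RQ$ are two-sided; moreover a short induction using $f_i R = R f_i$ gives $RP=\prod_{k}(Rf_{i_k})^{a_k}$ and $RQ=\prod_{h}(Rf_{j_h})^{b_h}$ as products of two-sided ideals. Writing $d:=\mathrm{lgcd}(P,Q)$, we have $Rd=RP+RQ$ (for invariant $P,Q$ the sums of left and right ideals coincide, so this is a two-sided ideal regardless of the handedness convention). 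I would then show $Rd=R$ by contradiction: if $Rd$ were proper it would be contained in some maximal two-sided ideal $M$. Then $RP\subseteq M$, and since $M$ is prime, one of the factors satisfies $Rf_{i_k}\subseteq M$, forcing $Rf_{i_k}=M$ by maximality; the same reasoning applied to $RQ$ yields $Rf_{j_h}=M$ for some $h$. Hence $Rf_{i_k}=Rf_{j_h}$, which contradicts the disjointness of $\{i_1,\dots,i_r\}$ and $\{j_1,\dots,j_s\}$ together with $Rf_i\neq Rf_j$ for $i\neq j$. Thus $Rd=R$ and $d=1$.

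For $(b)$ I may assume $i\neq j$, the case $i=j$ being immediate. By $(a)$ applied to the singletons $\{i\}$ and $\{j\}$, the two-sided ideals $I:=Rf_i^{a_i}$ and $J:=Rf_j^{b_j}$ are comaximal, $I+J=R$. Let $m$ be the monic generator of the two-sided ideal $I\cap J=Rm$. From the Chinese Remainder isomorphism $R/(I\cap J)\cong R/I\times R/J$, valid for comaximal two-sided ideals, I obtain $\deg m=\dim_{\mathbb{F}_q}R/(I\cap J)=\deg f_i^{a_i}+\deg f_j^{b_j}$. On the other hand $f_i^{a_i}f_j^{b_j}\in IJ\subseteq I\cap J=Rm$, and it is monic of degree exactly $\deg m$, so $f_i^{a_i}f_j^{b_j}=m$. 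The identical argument applied to $JI$ gives $f_j^{b_j}f_i^{a_i}=m$, and therefore $f_i^{a_i}f_j^{b_j}=f_j^{b_j}f_i^{a_i}$.

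The hard part will be the correct handling of the two-sided ideal theory in the noncommutative ring $R$, rather than any computation: I must be sure that the $Rf_k$ are genuinely maximal two-sided ideals, that maximal two-sided ideals are prime so that $RP\subseteq M$ descends to a single factor, and that the comaximal Chinese Remainder Theorem together with the additivity of $\dim_{\mathbb{F}_q}R/(I\cap J)$ remain valid for two-sided ideals. These are precisely the points where commutative intuition must be justified, and I would justify them by invoking \cite{J} and the principal ideal domain structure of $R$ recalled in $\S 1$.
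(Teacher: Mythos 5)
Your proof is correct, but it follows a genuinely different route from the paper's, most visibly in part (b). For (a), both arguments reduce to showing that the two-sided ideal $Rd=RP+RQ$ equals $R$; the paper does this by applying the uniqueness of the factorization of two-sided ideals from \cite[p.38]{J} --- since $Rd$ contains $RP$ and $RQ$, it factors simultaneously as $\Pi_{k}(Rf_{i_k})^{c_k}$ and as $\Pi_{h}(Rf_{j_h})^{d_h}$, and uniqueness forces all exponents to be zero --- whereas you replace the uniqueness theorem by the elementary facts that a proper two-sided ideal lies in a maximal two-sided ideal $M$ and that maximal two-sided ideals are prime, so that $M$ would have to equal both some $Rf_{i_k}$ and some $Rf_{j_h}$, contradicting $Rf_i\neq Rf_j$. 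For (b) the divergence is real: the paper argues independently of (a), reducing to $a_i=b_j=1$, invoking \cite[Lemma 4, p.38]{J} to get $Rf_i\cdot Rf_j=Rf_j\cdot Rf_i$, writing $f_if_j=uf_jf_i$ and $vf_if_j=f_jf_i$, deducing $uv=1$, and killing the unit $u$ by monicity; you instead deduce (b) from (a) via comaximality, the Chinese Remainder Theorem and a dimension count, identifying both $f_i^{a_i}f_j^{b_j}$ and $f_j^{b_j}f_i^{a_i}$ with the monic generator of $Rf_i^{a_i}\cap Rf_j^{b_j}$. Your route is more self-contained (primality of maximal two-sided ideals, CRT, and $\dim_{\mathbb{F}_q}R/Rp=\deg p$ are elementary) and has the structural payoff that the common value is exactly the left least common multiple of the two factors; the paper's argument is shorter and needs no CRT, at the price of leaning on one more cited result of Jacobson. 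One fine point to make explicit in your write-up: both proofs rest on the same nontrivial input, namely that each $Rf_k$ is a maximal two-sided ideal, and this is not literally stated in Proposition \ref{proposi} (whose proof establishes it only for the factors produced by Jacobson's theorem when $f\in N(R)$); under $(\#)$ it follows from irreducibility of $f_k$ in $N(R)$, the fact that two-sided ideals of $R$ are generated by invariant polynomials, and Lemma \ref{lemma1}, which is the same silent identification the paper makes when it applies Jacobson's uniqueness to the factorization given in $(\#)$.
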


\begin{proof}
$(a)$ For simplicity of notation write 
$g_i:=\Pi_{k=1}^{r}f_{i_k}^{a_k}$, $g_j:=\Pi_{h=1}^{s}f_{j_h}^{b_h}$ and $lgcd(g_i,g_j)=d_{ij}$, where $d_{ij}$ is a monic polynomial in $R$. 
Note that in fact $d_{ij}\in N(R)$. Since $Rg_i+Rg_j=Rd_{ij}$, we deduce that
there exist $a,b\in N(R)$ such that $g_i=ad_{ij}$ and $g_j=bd_{ij}$. Thus $Rg_i\subseteq Rd_{ij}$ and 
$Rg_j\subseteq Rd_{ij}$. By \cite[p.38]{J} we deduce that
$$Rd_{ij}=\Pi_{k=1}^{r}(Rf_{i_k})^{c_k}=\Pi_{h=1}^{s}(Rf_{j_h})^{d_h},$$ where $0\leq c_k\leq a_k$ and $0\leq d_h\leq b_h$.
From the uniqueness of the decomposition, we deduce that $c_k=0$ and $d_h=0$ for $k=1,...,r$ and $h=1,...,s$. Hence $Rd_{ij}=R$, that is,  
$d_{ij}=1$.

\medskip

\noindent $(b)$ Observe that the statement follows from $f_if_j=f_jf_i$. So, assume that $a_i=b_j=1$. From \cite[Lemma 4, p.38]{J} we deduce that $Rf_i\cdot Rf_j=Rf_j\cdot Rf_i$. This shows that
there exist $u,v\in R$ such that $f_if_j=uf_jf_i$ and $vf_if_j=f_jf_i$. Therefore we get $f_if_j=uf_jf_i=uvf_if_j$, i.e. $uv=1$.
So $u\in\mathbb{F}_q^*$ and $f_if_j=uf_jf_i$. Since all the $f_k$'s are monic polynomials, we see that $u=1$.
\end{proof}

\medskip

In line with \cite{RZ}, consider now the following subsets of $\mathbb{F}_q ^n$
\begin{equation}\label{(2)}
U_{i}:=\textrm{Ker} f_{i}^{\alpha _i}(T_f) 
\end{equation}
for $i=1,\ldots ,t$, where $T_f$ is as in $(**)$. Then we have the following properties for the $U_i$'s.

\begin{prop}\label{inv sub}
Under assumption $(\#)$, for the linear subspaces $U_{i}$ of $\mathbb{F}_q ^n$ we get the following properties:
\begin{enumerate}
\item[$(a)$] $U_i$ is a $(f,\Theta,\delta_{\beta}^{\theta})$-skew GC code; 
\item[$(b)$] $\dim U_i=\alpha_i\deg (f_i)$ for $i=1,...,t$;
\item[$(c)$] $\mathbb{F}_q ^n =U_1\oplus \cdots \oplus U_t$;
\item[$(d)$] if $\mathcal{C}$ is a $(f,\Theta,\delta_{\beta}^{\theta})$-skew GC code and $\mathcal{C}_i :=\mathcal{C}\cap U_i$ for $i=1,\ldots ,t$, 
then $\mathcal{C}_i$ is a $(f,\Theta,\delta_{\beta}^{\theta})$-skew GC code and $\mathcal{C}=\mathcal{C}_1 \oplus \cdots \oplus \mathcal{C}_t$;
\item[$(e)$] If $\alpha_i=1$ and $f_i$ is irreducible in $R$ for some $i\in \{1,...,t\}$, then $U_i$ is minimal with respect to the inclusion.
\end{enumerate}
\end{prop}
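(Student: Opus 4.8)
The plan is to handle the five items in turn, using the left $R$-module structure $p(X)\cdot \pi_f(\vec v)=\pi_f(p(T_f)(\vec v))$ of Remark \ref{rem 1bis} to transfer every statement to the quotient $R/Rf$. Throughout I set $g_i:=\prod_{j\neq i}f_j^{\alpha_j}$; since the factors commute pairwise by Lemma \ref{lemma2}$(b)$, one has $f=f_i^{\alpha_i}g_i=g_if_i^{\alpha_i}$, and $f_i^{\alpha_i}\in N(R)$ because a power of an invariant polynomial is again invariant. Note also that $f\in N(R)$ by Proposition \ref{proposi}, so that $fc\in fR=Rf$ for all $c$, which gives $f(T_f)=0$; this last fact will be used repeatedly.

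For $(a)$, since $f_i^{\alpha_i}\in N(R)$ is a left factor of $f$, Corollary \ref{cor} applied with $h=f_i^{\alpha_i}$ immediately shows that $U_i=\mathrm{Ker}\,f_i^{\alpha_i}(T_f)$ is a $(f,\Theta,\delta_{\beta}^{\theta})$-skew GC code. For $(b)$, Theorem \ref{ker skew}$(1)$ with $h=f_i^{\alpha_i}$ and $g=g_i$ gives $\pi_f(U_i)=Rg_i/Rf$; as $\pi_f$ is an isomorphism and $\dim_{\mathbb{F}_q}(Rg_i/Rf)=\deg f-\deg g_i$, I obtain $\dim U_i=n-\deg g_i=\alpha_i\deg f_i$.

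For $(c)$ I would first prove $\sum_{i=1}^t Rg_i=R$. Writing $\sum_i Rg_i=Rd$ with $d=\mathrm{lgcd}(g_1,\dots,g_t)$, the element $d$ is a common right divisor of all the $g_i$; but by the uniqueness of the factorization of $Rf$ into the maximal two-sided ideals $Rf_k$ (Proposition \ref{proposi} together with \cite[p.38]{J}), no irreducible $f_k$ can right-divide every $g_i$, since $f_k$ is absent from $g_k$. Hence $d=1$ and $\sum_i Rg_i=R$. Reducing modulo $Rf$ yields $\sum_i\pi_f(U_i)=R/Rf$, i.e. $\sum_i U_i=\mathbb{F}_q^n$; comparing with the dimension count $\sum_i\dim U_i=\sum_i\alpha_i\deg f_i=\deg f=n$ from $(b)$ forces the sum to be direct.

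For $(d)$, the subspace $\mathcal{C}_i=\mathcal{C}\cap U_i$ is an intersection of two $T_f$-invariant subspaces, hence $T_f$-invariant, i.e. a skew GC code by Remark \ref{rem 1}. The decomposition is the delicate point, as I must realize the projections onto the $U_i$ as polynomials in $T_f$. Using $\sum_i Rg_i=R$, choose $u_i\in R$ with $\sum_i u_ig_i=1$ and set $e_i:=u_ig_i$. Commuting $f_i^{\alpha_i}$ past $u_i$ by invariance gives $f_i^{\alpha_i}e_i=f_i^{\alpha_i}u_ig_i=u_i'f_i^{\alpha_i}g_i=u_i'f\in Rf$, so that $f_i^{\alpha_i}(T_f)\circ e_i(T_f)=0$ and therefore $\mathrm{Im}\,e_i(T_f)\subseteq U_i$; moreover $\sum_i e_i(T_f)=\mathrm{id}$. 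By $(c)$ this means each $e_i(T_f)$ is exactly the projection of $\mathbb{F}_q^n$ onto $U_i$ along $\bigoplus_{j\neq i}U_j$. Since $e_i(T_f)$ is a polynomial in $T_f$ and $\mathcal{C}$ is $T_f$-invariant, Remark \ref{rem 1} gives $e_i(T_f)(\vec c)\in\mathcal{C}\cap U_i=\mathcal{C}_i$ for every $\vec c\in\mathcal{C}$, whence $\mathcal{C}=\bigoplus_i\mathcal{C}_i$.

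Finally, for $(e)$ with $\alpha_i=1$ I would identify $\pi_f(U_i)=Rg_i/Rf\cong R/Rf_i$ as left $R$-modules: the surjection $r\mapsto[rg_i]$ has kernel $Rf_i$, because $rg_i\in Rf=Rf_ig_i$ forces $r\in Rf_i$ as $R$ is a domain and $g_i\neq 0$. By Proposition \ref{corr. ideal2}, the skew GC codes contained in $U_i$ correspond to the left submodules of $R/Rf_i$, hence to the right divisors of $f_i$; the irreducibility of $f_i$ in $R$ leaves only $0$ and $U_i$, so $U_i$ is minimal with respect to inclusion.

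The main obstacle is the decomposition in $(c)$--$(d)$, which is the non-commutative, pseudo-linear analogue of the primary decomposition theorem: proving $\sum_i Rg_i=R$ and that the idempotents $e_i(T_f)$ are genuine projections onto the $U_i$. Both steps rely essentially on the invariance hypotheses ($f_i\in N(R)$, hence $f\in N(R)$) to slide the factors $f_i^{\alpha_i}$ past arbitrary elements of $R$ and to guarantee $f(T_f)=0$; keeping the left/right divisibility straight throughout is where care is required.
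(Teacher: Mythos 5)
Your proof is correct, and on items $(a)$, $(b)$, the comaximality $R\widehat{f}_1+\cdots+R\widehat{f}_t=R$ in $(c)$, and item $(e)$ it runs essentially along the paper's path: Corollary \ref{cor}, Theorem \ref{ker skew}(1), and Jacobson's factorization of two-sided ideals; in $(e)$ the paper cancels divisors explicitly ($f=h'g'$, $g'=q\widehat{f}_1$, hence $f_1=h'q$, then irreducibility), while you pass through the left-module isomorphism $\pi_f(U_i)\cong R/Rf_i$, which is the same argument in structural clothing. The genuine divergence, and the more valuable part of your write-up, is precisely where the paper's own proof is loose. For directness in $(c)$, the paper only proves $U_i\cap U_j=\{\vec{0}\}$ for $i\neq j$, which for $t\geq 3$ does not by itself yield a direct sum (three distinct lines in a plane have pairwise trivial intersections); your dimension count $\sum_i\dim U_i=\deg f=n$ combined with $\sum_i U_i=\mathbb{F}_q^n$ settles it correctly. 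For $(d)$, the paper simply asserts $\mathcal{C}\cap(U_1\oplus\cdots\oplus U_t)=(\mathcal{C}\cap U_1)\oplus\cdots\oplus(\mathcal{C}\cap U_t)$, but intersection does not distribute over direct sums of arbitrary subspaces; the missing justification is exactly your construction of the idempotents $e_i(T_f)=(u_ig_i)(T_f)$, which are polynomials in $T_f$, hence preserve $\mathcal{C}$ by Remark \ref{rem 1}, and which are the projections onto the $U_i$ along the complementary summands. In effect you prove inside $(d)$ the portion of Proposition \ref{inv sub2} that the paper only develops afterwards, so your proof is self-contained where the paper's has a forward gap. Two details you should make explicit: in $(c)$, the conclusion $d=\mathrm{lgcd}(g_1,\dots,g_t)=1$ requires observing that $Rd=\sum_i Rg_i$ is a two-sided ideal (each $Rg_i$ is two-sided since $g_i\in N(R)$, and a sum of two-sided ideals is two-sided), so that the unique factorization of \cite[p.~38]{J} applies to $Rd$ and forces every exponent of $Rf_k$ in it to vanish; and in $(d)$, the identity $(f_i^{\alpha_i}e_i)(T_f)=f_i^{\alpha_i}(T_f)\circ e_i(T_f)$ rests on the multiplicativity of evaluation at $T_f$ (\cite[Lemma 2.7]{BL}, as the paper itself uses) together with $f(T_f)=0$ from Corollary \ref{ff}.
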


\begin{proof}
$(a)$ Since $f_{i}^{\alpha _i}\in N(R)$, from Corollary \ref{cor} we know that $\textrm{Ker} f_{i}^{\alpha _i}(T_f)$ is a 
$(f,\Theta,\delta_{\beta}^{\theta})$-skew GC code.

\smallskip

$(b)$ By Theorem \ref{ker skew}(1) and Lemma \ref{lemma2} we have $R\left(\frac{f}{f_i^{\alpha_i}} \right)=\pi_f(\mathrm{Ker}\ f_i^{\alpha_i}(T_f))$. 
Thus we deduce that
$$\dim U_i=\dim \mathrm{Ker}\ f_i^{\alpha_i}(T_f)=\deg f - \deg \left(\frac{f}{f_i^{\alpha_i}}\right) = \deg f_i^{\alpha_i}=\alpha_i\deg f_i.$$

\smallskip

$(c)$ Since $R$ is a principal ideal domain and the sum of two sided ideal is a two sided ideal, write $Rm=R\widehat{f }_1+ \dots + R\widehat{f }_t$ for some $m\in N(R)$, where $\widehat{f }_i:=\frac{f }{f _{i}^{\alpha _i}}$ for every $i=1,...,t$. Moreover, by Lemma \ref{lemma1} note that $\widehat{f }_i=a_im$ for some $a_i\in N(R)$ and every $i=1,...,t$.
Since
$$(Rf_1)^{\alpha_1}\cdot ... \cdot (Rf_{i-1})^{\alpha_{i-1}}\cdot (Rf_{i+1})^{\alpha_{i+1}}\cdot ... \cdot (Rf_t)^{\alpha_t}=R\widehat{f }_i\subset Rm,$$
from \cite[p. 38]{J} it follows that 
$$Rm=(Rf_1)^{\beta_1}\cdot ... \cdot (Rf_{i-1})^{\beta_{i-1}}\cdot (Rf_{i+1})^{\beta_{i+1}}\cdot ... \cdot (Rf_t)^{\beta_t},$$ where $0\leq\beta_k\leq\alpha_k$ for $k=1,...,i-1,i+1,...,t$. So we have
$$Rm =(Rf_2)^{\beta_2} ... (Rf_t)^{\beta_t}=(Rf_1)^{\beta_1}(Rf_3)^{\beta_3} ... (Rf_t)^{\beta_t}=...= (Rf_1)^{\beta_1} ... (Rf_{t-1})^{\beta_{t-1}},$$
and by Theorem 9 of \cite[p. 38]{J} we deduce that $Rm=R$, i.e. $R=R\widehat{f }_1+ \dots + R\widehat{f }_t$.
Thus $1=a_1\widehat{f }_1+ \dots + a_t\widehat{f }_t$ for some $a_1,...,a_t\in R$.
This gives $X^h=(X^ha_1)\widehat{f }_1+ \dots + (X^ha_t)\widehat{f }_t$ for every $h=0,...,n-1$. Therefore, for any $\vec{v}=(v_0,...,v_{n-1})\in\mathbb{F}^n_q$, we get
$$\vec{v}=\sum_{i=0}^{n-1}v_i\pi_f^{-1}(X^i)=\sum_{i=0}^{n-1}v_i\pi_f^{-1}\left( \sum_{j=1}^{t} (X^ia_j)\widehat{f }_j \right)=$$
$$=\sum_{j=1}^{t}\pi_f^{-1}\left[\left(\sum_{i=0}^{n-1}v_i X^i\right)a_j\widehat{f }_j \right]\in \pi_f^{-1}\left(R\widehat{f }_1\right)+ \dots + \pi_f^{-1}\left(R\widehat{f }_t\right),$$ i.e. $\vec{v}\in U_1+ \dots + U_t$ by Theorem \ref{ker skew}(1). Hence $\mathbb{F}_q ^n =U_1 + \cdots + U_t$.

Finally, observe that $R\widehat{f }_i\cap R\widehat{f }_j=RM_{ij}$ with $M_{ij}\in N(R)$ for any $i\neq j$. Moreover, $M_{ij}=a_i\widehat{f }_i=a_j\widehat{f }_j$ for some $a_i,a_j\in N(R)$. This gives $a_if_j^{\alpha_j}=a_jf_i^{\alpha_i}$ and then $Ra_i\cdot (Rf_j)^{\alpha_j}=Ra_j\cdot (Rf_i)^{\alpha_i}$.
From \cite[Theorem 9, p.38]{J} it follows that $Ra_i=Rh\cdot (Rf_i)^{\alpha_i}$ for some $h\in R$, i.e. $a_i=af_i^{\alpha_i}$ for some
$a\in N(R)$ by Lemma \ref{lemma1}. Therefore, $M_{ij}=af$ and we conclude that 
$$U_i\cap U_j= \pi_f^{-1}(R\widehat{f }_i\cap R\widehat{f }_j)=\pi_f^{-1}(R(af))\subseteq \pi_f^{-1}(Rf)=\vec{0}$$ for any $i\neq j$, that is, $\mathbb{F}_q ^n =U_1\oplus \cdots \oplus U_t$.

\smallskip

$(d)$ From (a), it follows that each $\mathcal{C}_i :=\mathcal{C}\cap U_i$ is a $(f,\Theta,\delta_{\beta}^{\theta})$-skew GC code of $\mathbb{F}_q ^n$.
Furthermore, from (c) we can deduce that 
$$\mathcal{C}=\mathcal{C}\cap\mathbb{F}_q ^n=\mathcal{C}\cap\left(U_1 \oplus \cdots \oplus U_t\right)
=\left(\mathcal{C}\cap U_1\right) \oplus \cdots \oplus \left(\mathcal{C}\cap U_t\right)=\mathcal{C}_1 \oplus \cdots \oplus \mathcal{C}_t.$$

$(e)$ For simplicity, assume that $i=1$. Let $U$ be a $(f,\Theta,\delta_{\beta}^{\theta})$-skew GC code such that
$$\{\vec{0}\}\subseteq U\subseteq U_1:=\textrm{Ker} f_{1}(A).$$ Then by Proposition \ref{corr. ideal2} and Theorem \ref{ker skew}(1) we know that there exists a right divisor
$g'\in R$ of $f$ such that $f=h'g'$ and 
$Rg'=\pi_f(U)\subseteq \pi_f(U_1)=R\frac{f}{f_1}$. Hence there is a polynomial $q\in R$ such that $g'=q\cdot f _{2}^{\alpha _1}\cdot \ldots \cdot f _{t}^{\alpha
_t}$ and this gives
$$f _{1}\cdot f _{2}^{\alpha _2}\cdot \ldots \cdot f _{t}^{\alpha_t}=f=h'g'=(h' q)\cdot f _{2}^{\alpha _1}\cdot \ldots \cdot f _{t}^{\alpha
_t},$$ i.e. $f _{1}=h' q$. Since $f_1$ is irreducible in $R$, we conclude that either $\deg q = 0$ or $q=f_1$. Therefore, we have either $U=\pi_f^{-1}(Rg')=\pi_f^{-1}(R\frac{f}{f_1})=U_1$ or
$U=\pi_f^{-1}(Rg')=\pi_f^{-1}(Rf)=\{\vec{0}\}$.
\end{proof}

\begin{prop}\label{f}
Let $f\in R$. Then $f\in N(R) \iff f(T_f)=0$.
\end{prop}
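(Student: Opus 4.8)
The plan is to translate the analytic condition $f(T_f)=0$ into a purely ring-theoretic statement about the left ideal $Rf$, and then to recognize that statement as invariance of $f$. The bridge is Remark \ref{rem 1bis} together with the fact that $\pi_f\colon\mathbb{F}_q^n\to R/Rf$ is a linear isomorphism, since it carries the standard basis of $\mathbb{F}_q^n$ to the basis $[1],[X],\dots,[X^{n-1}]$ of the $n$-dimensional space $R/Rf$.

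First I would record the key reformulation. Applying Remark \ref{rem 1bis} with $p=f$ gives $f\cdot\pi_f(\vec{v})=\pi_f(f(T_f)(\vec{v}))$ for every $\vec{v}\in\mathbb{F}_q^n$. As $\pi_f$ is bijective, $f(T_f)=0$ holds if and only if $f\cdot\pi_f(\vec{v})=0$ in $R/Rf$ for all $\vec{v}$, and since $\pi_f(\vec{v})$ ranges over every class $[c]\in R/Rf$, this amounts to $fc\in Rf$ for all $c\in R$ with $\deg c<n$. A short computation upgrades this to arbitrary $c\in R$: writing $c=qf+s$ with $\deg s<n$, one has $fc=(fq)f+fs\in Rf$, because $fs\in Rf$ by the degree-$<n$ case while $(fq)f\in Rf$ automatically. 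Hence
$$f(T_f)=0\iff fR\subseteq Rf.$$
The forward implication of the proposition is then immediate: if $f\in N(R)$ then $fR=Rf$, so in particular $fR\subseteq Rf$ and therefore $f(T_f)=0$.

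For the converse I must promote the inclusion $fR\subseteq Rf$ to the equality $fR=Rf$, and this is the main obstacle. The idea is that the inclusion allows me to define a map $\sigma\colon R\to R$ by the rule $fc=\sigma(c)\,f$, where $\sigma(c)$ is the left quotient of $fc$ by $f$; this is well defined and unique because $R$ is a domain and $f$ is monic. I would then verify that $\sigma$ is an injective, degree-preserving ring endomorphism: additivity and multiplicativity follow from $f(c+c')=(\sigma(c)+\sigma(c'))f$ and $f(cc')=(fc)c'=\sigma(c)\sigma(c')f$, injectivity from the domain property, and degree preservation from the fact that degrees add in $R$ (as $\theta$ is an automorphism, leading coefficients never cancel). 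Comparing leading coefficients in $fa=\sigma(a)f$ and $fX=\sigma(X)f$ shows $\sigma(a)=\theta^{\,n}(a)$ for $a\in\mathbb{F}_q$ and $\sigma(X)=X+\gamma$ for some $\gamma\in\mathbb{F}_q$.

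Finally I would conclude by surjectivity of $\sigma$. Since $\theta^{\,n}$ is onto $\mathbb{F}_q$ and $X=\sigma\left(X-\theta^{-n}(\gamma)\right)$ lies in the image, the subring $\sigma(R)$ contains $\mathbb{F}_q$ and $X$, so $\sigma(R)=R$. Hence every $r\in R$ equals $\sigma(s)$ for some $s\in R$, giving $rf=\sigma(s)f=fs\in fR$; this yields $Rf\subseteq fR$, so $Rf=fR$ and $f\in N(R)$. The delicate points are the well-definedness of $\sigma$ and the identification of $\sigma$ on the generators $\mathbb{F}_q$ and $X$; everything after that is a formal bijectivity argument.
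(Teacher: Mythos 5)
Your proof is correct, and it follows the paper's reduction while also repairing a step that the paper leaves implicit. Both arguments turn on the same equivalence: using Remark \ref{rem 1bis} and division by the monic polynomial $f$, one gets $f(T_f)=0$ if and only if $fR\subseteq Rf$; this is exactly what the paper's converse direction establishes (first for $\deg g<\deg f$, then in general via $g=qf+r$), after which it simply asserts $f\in N(R)$. But $N(R)$ is defined by the \emph{equality} $Rf=fR$, so the paper's proof, as written, ends with only one inclusion. Your endomorphism argument supplies the missing half: the map $\sigma$ determined by $fc=\sigma(c)f$ is well defined and unique because $R$ is a domain and $f$ is monic; it is an injective, degree-preserving ring endomorphism; comparing leading coefficients gives $\sigma(a)=\theta^{n}(a)$ on constants and $\sigma(X)=X+\gamma$; and since $\theta^{n}$ is onto $\mathbb{F}_q$, the image of $\sigma$ contains $\mathbb{F}_q$ and $X=\sigma\left(X-\theta^{-n}(\gamma)\right)$, hence all of $R$, so that surjectivity converts $fR\subseteq Rf$ into $Rf\subseteq fR$. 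Note that this is precisely where the hypothesis that $\theta$ is an automorphism of a \emph{finite} field enters; alternatively, one could finish by observing that $\sigma$ maps the finite set of polynomials of degree at most $d$ injectively into itself, hence bijectively. So your route costs an extra construction but buys a self-contained proof of the converse, whereas the paper's shorter version is only valid because an argument like yours can be supplied behind its final sentence ``this shows that $f\in N(R)$.''
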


\begin{proof}
Assume that $f\in N(R)$. Then for every $\vec{v}=\pi_f^{-1}(v)\in\mathbb{F}_q^n$, from Remark \ref{rem 1bis} we conclude that
$$\pi_f(f(T_f)\vec{v})=f\pi_f(\vec{v})=fv=v'f=0\in R/Rf,$$ for some $v'\in R$, i.e. $f(T_f)(\vec{v})=\vec{0}$. Finally, suppose that 
$f(T_f)=0$. Consider $g\in R$. If $\deg g < \deg f$, then 
$$f\cdot g=f\cdot \pi_f (\pi_f^{-1}(g))=\pi_f (f(T_f)\pi_f^{-1}(g))=0\in R/Rf,$$ i.e. there exists $k\in R$ such that $f\cdot g=k\cdot f$.
On the other hand, if $\deg f \leq \deg g$ then there are $q,r\in R$ such that $g=qf+r$ with $\deg r < \deg f$. Thus by the above argument,
we can conclude that $f\cdot g=f\cdot (qf+r)=fqf+fr=fqf+r'f=(fq+r')f$ for some $r'\in R$. This shows that $f\in N(R)$.
\end{proof}

\smallskip

\begin{obs}
When $\delta_{\beta}=0$, then $f(T_f)=0$ if and only if $f\in\mathbb{F}_q^{\theta}[X^k;\theta, 0]$ with $k$ the orden of $\theta$.
If $\delta_{\beta}\neq 0$, then {\em Proposition \ref{f}} gives a criterion to know when $f\in N(R)$.
\end{obs}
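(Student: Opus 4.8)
The plan is to reduce the whole statement to Proposition \ref{f}. Since here $\delta_\beta^\theta=0$, the ring is $R=\mathbb{F}_q[X;\theta,0]$ with rule $X\cdot a=\theta(a)X$, and Proposition \ref{f} already gives $f(T_f)=0\iff f\in N(R)$. So I would only need to show that, writing $k$ for the order of $\theta$ and $\mathbb{F}_q^\theta$ for its fixed field, a monic $f$ as in (*) lies in $N(R)$ if and only if $f\in\mathbb{F}_q^\theta[X^k;\theta,0]$. The direction ($\Leftarrow$) is immediate: for $a\in\mathbb{F}_q^\theta$ one has $X\cdot a=\theta(a)X=aX$, while $X^k\cdot a=\theta^k(a)X^k=aX^k$ and $X^k\cdot X=X\cdot X^k$ since $\theta^k=id$; hence every monomial $bX^{ik}$ with $b\in\mathbb{F}_q^\theta$ commutes with every scalar and with $X$, so $\mathbb{F}_q^\theta[X^k;\theta,0]$ sits inside the center of $R$ and a fortiori inside $N(R)$.

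For ($\Rightarrow$), write $f=\sum_{j=0}^n c_jX^j$ with $c_n=1$ and assume $Rf=fR$. First I would recover the support of $f$ by multiplying on the right by scalars: for each $a\in\mathbb{F}_q$ the element $fa$ lies in $fR=Rf$ and has degree at most $n$ with leading coefficient $\theta^n(a)$, so comparing leading terms forces $fa=\theta^n(a)\,f$. Reading off the coefficient of $X^j$ gives $c_j\,\theta^j(a)=\theta^n(a)\,c_j$ for every $a$, whence $c_j\neq 0$ implies $\theta^{\,n-j}=id$, i.e.\ $k\mid(n-j)$. Thus the nonzero coefficients of $f$ occur only in degrees $j\equiv n\pmod k$.

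Next I would multiply on the right by $X$ to control the coefficients themselves. As $fX\in fR=Rf$ has degree $n+1$, there are $\alpha,\beta\in\mathbb{F}_q$ with $fX=(\alpha X+\beta)f=\alpha(Xf)+\beta f$, and $Xf=\sum_j\theta(c_j)X^{j+1}$. Comparing the coefficient of $X^{n+1}$ gives $\alpha=1$; comparing that of $X^n$ and using $c_{n-1}=0$ (valid for $k>1$ by the support condition) gives $\beta c_n=0$, hence $\beta=0$. The remaining comparisons collapse to $\theta(c_j)=c_j$, so $c_j\in\mathbb{F}_q^\theta$ for all $j$. Finally, the constant term of $f$ is $-f_0\neq 0$, so $0$ belongs to the support and the congruence $k\mid(n-j)$ forces $k\mid n$; combined with $j\equiv n\equiv 0\pmod k$ this places every nonzero term in a degree divisible by $k$, giving $f=\sum_i c_{ik}X^{ik}\in\mathbb{F}_q^\theta[X^k;\theta,0]$. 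The case $\theta=id$ is automatic, since then $k=1$, $\mathbb{F}_q^\theta=\mathbb{F}_q$ and $\mathbb{F}_q^\theta[X^k;\theta,0]=R$, consistent with $f(A)=0$ for the companion matrix $A$.

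The main obstacle is the converse, and inside it the coordinated control of support and coefficients. The scalar commutation $fa=\theta^n(a)f$ only locates the support modulo $k$, so it is the commutation with $X$ that simultaneously kills the shift coefficient $\beta$ and forces the $c_j$ into the fixed field $\mathbb{F}_q^\theta$; the order of $\theta$ enters precisely through the vanishing of $c_{n-1}$, which needs $k>1$. The genuinely delicate step is upgrading the support from $n+k\mathbb{Z}$ to $k\mathbb{Z}$, which is exactly where one must invoke $f_0\neq 0$. The second assertion of the remark requires no separate proof: it is merely the reading of Proposition \ref{f} as a computable membership test for $N(R)$ when $\delta_\beta^\theta\neq 0$.
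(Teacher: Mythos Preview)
The paper does not supply a proof for this remark: it records the equivalence as a known fact and later (in the proof of Proposition~\ref{propos}) simply invokes the classical description of invariant elements of $\mathbb{F}_q[X;\theta]$, writing any monic invariant polynomial as $X^{t}\cdot b$ with $b\in\mathbb{F}_q^{\theta}[X^{k};\theta]$, without further justification. So there is no ``paper's proof'' to compare against; what you have done is to supply the missing argument.

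Your argument is correct. Reducing via Proposition~\ref{f} to the characterization of monic elements of $N(R)$ in $\mathbb{F}_q[X;\theta,0]$ is exactly the right move, and your two commutation tests (against scalars to pin down the support modulo $k$, then against $X$ to force the coefficients into $\mathbb{F}_q^{\theta}$) are the standard and efficient way to prove that characterization from scratch. The computations check out: from $fa\in Rf$ and degree comparison one gets $fa=\theta^{n}(a)f$, hence $c_j\theta^{j}(a)=\theta^{n}(a)c_j$ and so $k\mid(n-j)$ whenever $c_j\neq 0$; then $fX\in Rf$ with $\alpha=1$, $\beta=0$ yields $fX=Xf$ and thus $\theta(c_j)=c_j$.

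One point worth flagging explicitly in the write-up: the hypothesis $f_0\neq 0$ that you invoke to pass from ``support $\subseteq n+k\mathbb{Z}$'' to ``support $\subseteq k\mathbb{Z}$'' is \emph{not} part of the standing hypothesis~$(*)$; it is only imposed at the beginning of \S3. Without it the remark is literally false (e.g.\ $f=X$ is invariant but not in $\mathbb{F}_q^{\theta}[X^{k}]$ when $k>1$), and the correct general statement is the one the paper uses in Proposition~\ref{propos}: a monic invariant polynomial has the form $X^{t}\cdot b$ with $b\in\mathbb{F}_q^{\theta}[X^{k};\theta]$. You already isolate this as the ``genuinely delicate step'', so just make the dependence on $f_0\neq 0$ an explicit hypothesis rather than a side remark.
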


Having in mind Proposition \ref{proposi}, we have the following

\begin{coro}\label{ff}
$(\#)$ holds $\iff$ $f(T_f)=0$.
\end{coro}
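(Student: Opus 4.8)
The plan is to obtain the statement as an immediate chain of the two preceding results, since both Proposition \ref{proposi} and Proposition \ref{f} are biconditionals pivoting on the membership $f\in N(R)$. Concretely, I would write
$$(\#)\ \text{holds}\iff f\in N(R)\iff f(T_f)=0,$$
where the first equivalence is Proposition \ref{proposi} and the second is Proposition \ref{f}. Reading off the two ends of this chain yields exactly the asserted equivalence between $(\#)$ and $f(T_f)=0$.

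The one point that needs care is that Proposition \ref{proposi} was introduced under the running assumption that either $\theta\neq id$ or $\delta_{\beta}^{\theta}\neq 0$, whereas Corollary \ref{ff} carries no such restriction. First I would therefore dispose of the remaining case $\theta=id$ (so that $\delta_{\beta}^{\theta}=0$). Here the ring $R=\mathbb{F}_q[X]$ is commutative, hence $Rp=pR$ for every $p$, i.e. $N(R)=R$ and in particular $f\in N(R)$; moreover the earlier remark records that $(\#)$ always holds in this case. Thus both sides of the equivalence $(\#)\iff f\in N(R)$ are automatically true, so the conclusion of Proposition \ref{proposi} is valid here as well. Since Proposition \ref{f} is stated with no restriction on $\theta$, the second equivalence $f\in N(R)\iff f(T_f)=0$ holds in all cases, and the chain closes.

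I do not expect any genuine obstacle: the corollary is a formal consequence of two already-established equivalences, and the only verification required is the harmless boundary case $\theta=id$, for which both $(\#)$ and $f(T_f)=0$ hold unconditionally (the latter because $T_f$ then acts as the companion matrix $A$, whose minimal polynomial is $f$, so $f(T_f)=f(A)=0$). Combining this boundary observation with the generic case covered by Propositions \ref{proposi} and \ref{f} completes the argument.
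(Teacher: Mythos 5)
Your proof is correct and follows essentially the same route as the paper, which presents Corollary \ref{ff} as an immediate consequence of chaining Proposition \ref{proposi} ($(\#)\iff f\in N(R)$) with Proposition \ref{f} ($f\in N(R)\iff f(T_f)=0$). Your explicit treatment of the boundary case $\theta=id$ (where $N(R)=R$, $(\#)$ holds by the paper's remark, and $f(A)=0$ since $A$ is the companion matrix of $f$) is a point of care the paper leaves implicit, but it does not change the argument.
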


Let $f$ be as in $(\#)$ and write $f=f_i^{\alpha_i}\cdot \widehat{f }_i$. By Lemma \ref{lemma2} we know that $\mathrm{lgcd}(f_i^{\alpha_i}, \widehat{f }_i)=1$. Thus there exist 
$a_i,b_i\in R$ such that $a_if_i^{\alpha_i}+b_i\widehat{f }_i=1$. So we get $(b_i\widehat{f }_i)(a_if_i^{\alpha_i})+(b_i\widehat{f }_i)^2=b_i\widehat{f }_i$, i.e.
$(b_i\widehat{f }_i)^2+c_if=b_i\widehat{f }_i,$ for some $c_i\in R$. Therefore by Lemma 2.7 of \cite{BL} and Corollary \ref{ff} we deduce that 
$$b_i(T_f)\widehat{f }_i(T_f)=(b_i(T_f)\widehat{f }_i(T_f))^2+c_i(T_f)f(T_f)=(b_i(T_f)\widehat{f }_i(T_f))^2.$$
As in \cite{RZ}, define $e_i(T_f):=b_i(T_f)\widehat{f }_i(T_f)$. In the same spirit of \cite{RZ} and for the convenience of the reader, we
give and prove the following result.

\begin{prop}\label{inv sub2}
Under assumption $(\#)$, we get the following properties:
\begin{enumerate}
\item[$(a)$] $e_i(T_f)^2=e_i(T_f)$;
\item[$(b)$] $e_i(T_f)e_j(T_f)=0$ for $i\neq j$;
\item[$(c)$] $e_i(T_f)(\vec{v}_j)=0$ for every $\vec{v}_j\in U_j$ with $j\neq i$;
\item[$(d)$] $\vec{v}\in U_i \iff e_i(T_f)(\vec{v})=\vec{v}$; moreover, if $T$ is an idempotent endomorphism of $\mathbb{F}_q^{n}$ such that 
$\vec{v}\in U_i \iff T(\vec{v})=\vec{v}$, then $T=e_i(T_f)$;
\item[$(e)$] $\sum_{i=1}^{t} e_i(T_f)=id$;
\item[$(f)$] If $\theta =id$, then $U_i=<e_i(A)_1,..., e_i(A)_n>$, where $e_i(A)_j$ is the $j^{th}$ row of $e_i(A)$. 
\end{enumerate}
\end{prop}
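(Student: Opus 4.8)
The plan is to prove the six statements in order, repeatedly exploiting three facts already available: the evaluation map $p \mapsto p(T_f)$ is a ring homomorphism from $R$ to $\mathrm{End}(\mathbb{F}_q^n)$ (iterate Remark \ref{rem 1bis}, i.e.\ Lemma 2.7 of \cite{BL}); one has $f(T_f)=0$ by Corollary \ref{ff}; and the invariant factors $f_k$ commute pairwise and generate coprime ideals by Lemma \ref{lemma2}. Part $(a)$ is precisely the displayed identity preceding the statement, so I would simply record $e_i(T_f)^2 = e_i(T_f)$. Throughout, the B\'ezout relation $a_i f_i^{\alpha_i} + b_i\widehat{f}_i = 1$ in $R$, evaluated at $T_f$ via the homomorphism property, yields the operator identity $a_i(T_f)f_i^{\alpha_i}(T_f) + e_i(T_f) = \mathrm{id}$, which I would use constantly.

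For $(b)$ I would note that for $i\neq j$ the product $\widehat{f}_i\widehat{f}_j$ is divisible by $f$: since $\widehat{f}_i$ omits only the factor $f_i^{\alpha_i}$ while $\widehat{f}_j$ contains it, Lemma \ref{lemma2}$(b)$ lets me reorder the commuting invariant factors so that $f = f_1^{\alpha_1}\cdots f_t^{\alpha_t}$ appears as a two-sided divisor. Moving the stray factor $b_j$ past $\widehat{f}_i\in N(R)$ (using $\widehat{f}_i R = R\widehat{f}_i$) and then applying the homomorphism property reduces $e_i(T_f)e_j(T_f)$ to a multiple of $f(T_f)=0$. Part $(c)$ is similar but easier: for $j\neq i$ the factor $f_j^{\alpha_j}$ right-divides $\widehat{f}_i$, so $\widehat{f}_i(T_f)$ already annihilates $U_j=\mathrm{Ker}\,f_j^{\alpha_j}(T_f)$, and hence so does $e_i(T_f)$.

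For $(d)$, the forward implication follows by applying the operator identity $a_i(T_f)f_i^{\alpha_i}(T_f)+e_i(T_f)=\mathrm{id}$ to a vector of $U_i$, on which $f_i^{\alpha_i}(T_f)$ vanishes. For the converse I would first check $f_i^{\alpha_i}(T_f)\,e_i(T_f)=0$ by commuting $f_i^{\alpha_i}\in N(R)$ past $b_i$, which turns $f_i^{\alpha_i}\,b_i\,\widehat{f}_i$ into $b_i' f$; hence $e_i(T_f)(\vec v)=\vec v$ forces $f_i^{\alpha_i}(T_f)(\vec v)=\vec 0$, i.e.\ $\vec v\in U_i$. The uniqueness clause is the delicate point: for an idempotent $T$ I would decompose an arbitrary vector $\vec v=\vec v_1\oplus\cdots\oplus\vec v_t$ along $\mathbb{F}_q^n=U_1\oplus\cdots\oplus U_t$ (Proposition \ref{inv sub}$(c)$) and compare $T$ with $e_i(T_f)$ summand by summand; they plainly agree on $U_i$, where both act as the identity, and I must argue they agree as $0$ on each $U_j$ with $j\neq i$. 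This is exactly where the bare hypothesis ``$\vec v\in U_i\iff T(\vec v)=\vec v$'' is tightest, since idempotency pins down the image but not a priori the kernel, so I expect to have to lean on the compatibility of $T$ with the $T_f$-primary decomposition here; this is the main obstacle of the whole proposition.

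Finally, $(e)$ drops out of $(c)$--$(d)$: writing $\vec v=\sum_j \vec v_j$ with $\vec v_j\in U_j$, part $(c)$ kills the off-diagonal terms and part $(d)$ gives $e_i(T_f)(\vec v_i)=\vec v_i$, so $\big(\sum_i e_i(T_f)\big)(\vec v)=\sum_i \vec v_i=\vec v$. For $(f)$, when $\theta=\mathrm{id}$ the map $T_f$ is right multiplication by $A$, hence $e_i(T_f)$ is right multiplication by the matrix $e_i(A)$; being idempotent by $(a)$, its image equals its fixed space, which by $(d)$ is $U_i$, and the image of right multiplication by $e_i(A)$ is exactly the row span $\langle e_i(A)_1,\ldots,e_i(A)_n\rangle$. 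Apart from the uniqueness clause in $(d)$, I expect the only friction to be the bookkeeping of the noncommutative reorderings, all of which are licensed by $f_k\in N(R)$ and Lemma \ref{lemma2}.
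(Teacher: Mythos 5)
Your parts $(a)$, $(b)$, $(c)$, the two implications of $(d)$, $(e)$ and $(f)$ are correct and essentially identical to the paper's proof: the same identity $a_i(T_f)f_i^{\alpha_i}(T_f)+e_i(T_f)=\mathrm{id}$, the same reduction of $e_ie_j$ and of $f_i^{\alpha_i}e_i$ to left multiples of $f$ via commutations licensed by $N(R)$, and the same use of $\mathbb{F}_q^n=U_1\oplus\cdots\oplus U_t$ for $(e)$. (For $(f)$ the paper checks the inclusion $\langle e_i(A)_1,\ldots,e_i(A)_n\rangle\subseteq U_i$ by computing $e_i(A)f_i^{\alpha_i}(A)=(sf)(A)=0$ rather than by your ``image of an idempotent equals its fixed space'' remark; the two are interchangeable.)

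The one thing you leave unproved is the uniqueness clause of $(d)$, so as a proof your proposal is incomplete at exactly that point --- but your diagnosis of why it is hard is correct, and in fact it pinpoints a soft spot in the paper's own argument. The paper proceeds as follows: idempotency and the fixed-space hypothesis give $\mathrm{Im}(T)=U_i$ and $\mathbb{F}_q^n=\mathrm{Im}(T)\oplus\ker(T)=U_i\oplus\ker(T)$; the paper then \emph{asserts} $\ker(T)=U_1\oplus\cdots\oplus U_{i-1}\oplus U_{i+1}\oplus\cdots\oplus U_t$, after which the computation $T(\vec v)=T(\vec v_i)=\vec v_i=e_i(T_f)(\vec v_i)=e_i(T_f)(\vec v)$ (using $(c)$ and $(d)$) finishes the claim. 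That assertion, however, does not follow from the two direct-sum decompositions alone, since complements of $U_i$ are not unique: for $t\geq 2$, a linear projection onto $U_i$ along any complement other than $\bigoplus_{j\neq i}U_j$ is idempotent with fixed space exactly $U_i$, yet differs from $e_i(T_f)$. So the missing ingredient --- in your sketch and in the paper --- is precisely what you suspected: one must know that $T$ is compatible with the primary decomposition, i.e.\ that $T$ kills each $U_j$ with $j\neq i$. Once $T$ is assumed to preserve each $U_j$ (for instance because $T$ commutes with the maps $f_j^{\alpha_j}(T_f)$, as an endomorphism of the underlying module structure), your summand-by-summand comparison closes immediately: for $\vec v_j\in U_j$ one has $T(\vec v_j)\in U_j$, while $T(T(\vec v_j))=T(\vec v_j)$ makes $T(\vec v_j)$ a fixed vector, hence $T(\vec v_j)\in U_i\cap U_j=\{\vec 0\}$; therefore $T$ and $e_i(T_f)$ agree on every $U_j$ and hence on $\mathbb{F}_q^n$.
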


\begin{proof}
$(a)$ It follows from the definition of $e_i(T_f)$.

$(b)$ By \cite[Lemma 2.7]{BL} we deduce that
$$e_i(T_f)e_j(T_f)=(e_ie_j)(T_f)=(b_i\widehat{f }_ib_j\widehat{f }_j)(T_f)=(cf)(T_f)=c(T_f)f(T_f)=0,$$
for some $c\in R$.

$(c)$ Let $\vec{u_j}\in U_j$. Then there is a $t\in R$ such that 
$$e_i(T_f)(\vec{u_j})=(b_i\widehat{f }_i)(T_f)(\vec{u_j})=(tf_j^{\alpha_j})(T_f)(\vec{u_j})=
t(T_f)f_j^{\alpha_j}(T_f)(\vec{u_j})=0.$$

$(d)$ Let $\vec{v}\in U_i$. Note that 
$$a_i(T_f)f_i^{\alpha_i}(T_f)+e_i(T_f)=a_i(T_f)f_i^{\alpha_i}(T_f)+b_i(T_f)\widehat{f }_i(T_f)=id.$$
Hence $e_i(T_f)(\vec{v})=a_i(T_f)f_i^{\alpha_i}(T_f)(\vec{v})+e_i(T_f)(\vec{v})=\vec{v}$. On the other hand, 
if $e_i(T_f)(\vec{v})=\vec{v}$ then there exists a $c\in R$ such that
$$f_i^{\alpha_i}(T_f)(\vec{v})=(f_i^{\alpha_i}e_i)(T_f)(\vec{v})=(f_i^{\alpha_i}b_i\widehat{f }_i)(T_f)(\vec{v})
=(cf)(T_f)(\vec{v})=c(T_f)f(T_f)(\vec{v})=0,$$ i.e. $\vec{v}\in U_i$. 

Let $T$ be an idempotent endomorphism, that is $T^2=T$, such that $\vec{v}\in U_i \iff T(\vec{v})=\vec{v}$.
Then $Im(T)=U_i$ and for every $\vec{v}\in\mathbb{F}_q^n$ we can write $\vec{v}=[\vec{v}-T(\vec{v})]+T(\vec{v})$, where
$\vec{v}-T(\vec{v})\in\ker T$ and $T(\vec{v})\in Im(T)$. Note that $Im(T)\cap\ker(T)=\vec{0}$ since $T^2=T$. Thus by Proposition 
\ref{inv sub} we see that
$$\mathbb{F}_q^n=U_1\oplus ... \oplus U_t=Im(T)\oplus\ker (T)=U_i\oplus\ker (T),$$ i.e. $\ker (T)=U_1\oplus ... \oplus U_{i-1}
\oplus U_{i+1}\oplus ... \oplus U_t$. Then for any $\vec{v}\in\mathbb{F}_q^n$ we have $\vec{v}=\vec{v}_1+...+\vec{v}_t$, with $\vec{v}_j\in U_j$, and
by $(c)$ we can conclude that
$$T(\vec{v})=T(\vec{v}_1+...+\vec{v}_t)=T(\vec{v}_1)+...+T(\vec{v}_t)=T(\vec{v}_i)=\vec{v}_i=e_i(T_f)\vec{v}_i=e_i(T_f)(\vec{v}),$$
i.e. $T=e_i(T_f)$.

$(e)$ For every $\vec{v}\in\mathbb{F}_q^n$, we have $\vec{v}=\vec{v}_1+...+\vec{v}_t$ with $\vec{v}_i\in U_i$ for any $i=1,...,t$. Thus by $(c)$ and $(d)$ we
obtain that
$$\left( \sum_{i=1}^{t} e_i(T_f)\right)(\vec{v})=\sum_{i=1}^{t}e_i(T_f)(\vec{v}_i)=\sum_{i=1}^{t} \vec{v}_i=id(\vec{v}).$$

$(f)$ If $\vec{u}_i\in U_i$, then from $(d)$ it follows that $\vec{u}_i=\vec{u}_ie_i(A)\in<e_i(A)_1,..., e_i(A)_n>$, i.e. $U_i\subseteq <e_i(A)_1,..., e_i(A)_n>$. 
Moreover, note that there exists a polynomial $s\in R$ such that $e_i(A)\cdot f_i^{\alpha_i}(A)=(e_if_i^{\alpha_i})(A)=(sf)(A)=s(A)f(A)=0$, that is, all the rows of $e_i(A)$ 
belong to $U_i$. Hence $<e_i(A)_1,..., e_i(A)_n>\subseteq U_i$, i.e. $U_i=<e_i(A)_1,..., e_i(A)_n>$.
\end{proof}

\medskip

\section{Further properties of skew GC codes with $\delta_{\beta}^{\theta}=0$}

In this last section, we give some further results when the derivation $\delta_{\beta}^{\theta}$ is zero, e.g., when either $\beta =0$, or $\theta=id$.

\subsection{The minimal polynomial of a semi-linear transformation}

From Proposition \ref{f} we know that $f\in N(R)$ if and only if $f(T_f)=0$. In this subsection we show how to construct the minimal polynomial of any semi-linear transformation $T:=\Theta\circ M$ defined over $\mathbb{F}_q^n$.

\begin{lem}\label{lemma}
Let $T:=\Theta\circ M$ be a $\theta$-semi-linear transformation on $\mathbb{F}^n_q$. Then
$$T\cdot\lambda = \theta(\lambda)\cdot T\qquad \forall \lambda\in \mathbb{F}_q .$$
\end{lem}

\begin{proof}
Take a vector $\vec{v}\in\mathbb{F}_q^n$ and an element $\lambda\in \mathbb{F}_q$. Then
$$(T\cdot\lambda)(\vec{v})=T(\lambda\vec{v})=(\lambda\vec{v})\Theta\circ M=((\vec{v})\Theta\circ M)\theta(\lambda)=\theta(\lambda)\left(T(\vec{v})\right)
=\left(\theta(\lambda)T\right)(\vec{v}),$$
that is, $T\cdot\lambda =\theta(\lambda)\cdot T$ for any $\lambda\in \mathbb{F}_q$.
\end{proof}

\noindent Note that by Lemma \ref{lemma} one can consider the surjective ring homomorphism
$$\sigma :\mathbb{F}_q[X;\theta]\rightarrow \mathbb{F}_q[T;\theta],$$ 
defined by $p\mapsto p(T)$, where $\mathbb{F}_q[Z;\theta]:=\mathbb{F}_q[Z;\theta,0]$.

\smallskip

First of all, let us show that for any semi-linear transformation $T=\Theta \circ M$, there exists always a unique monic minimal polynomial $m_T\in \mathbb{F}_q[X;\theta]$ 
such that $m_T(T)=O$, where $O:\mathbb{F}_q^n\to\mathbb{F}_q^n$ is the null-map. 

Let $k$ be the order of $\theta$, that is, $\theta^k=id$. Then by \cite[Lemma 4]{TT} we have
\begin{equation*}
\begin{split}
T^k&=(\Theta \circ M)^k\\
&=\Theta ^kM_{\theta ^{k-1}}M_{\theta ^{k-2}}\ldots M_{\theta}M\\
&=id \circ B=B,
\end{split}
\end{equation*}
where $B:=M_{\theta ^{k-1}}M_{\theta ^{k-2}}\ldots M_{\theta}M$ is a matrix with coefficients in $\mathbb{F}_q$. Therefore, there exists a (minimal) polynomial 
$m_B=X^h+b_{h-1}X^{h-1}+\ldots +b_1X+b_0 \in \mathbb{F}_q[X]$ such that
$$m_B(B)=O=(T^k)^h+b_{h-1}(T^k)^{h-1}+\ldots +b_1(T^k)+b_0(id).$$
This gives a polynomial $m:=m_B(X^k)\in\mathbb{F}_q[X;\theta]$ such that $m(T)=O$, showing the existence of a unique minimal monic polynomial $m_T\in\mathbb{F}_q[X;\theta]$ such that $m_T(T)=O$. In fact, we can prove the following 

\begin{prop}\label{propos}
Let $T:=\Theta\circ M$ be a $\theta$-semi-linear transformation on $\mathbb{F}^n_q$ such that $M$ is an invertible $n\times n$ matrix.
Then the unique minimal monic polynomial $m_T\in \mathbb{F}_q[X;\theta]$ 
such that $m_T(T)=O$ is given by $m_T=m_B(X^k)\in\mathbb{F}_q^{\theta}[X;\theta]$, where $k$ is the order of $\theta$, $m_B$ is the minimal monic polynomial of 
the $n\times n$ matrix $B:=M_{\theta ^{k-1}}M_{\theta ^{k-2}}\ldots M_{\theta}M$ and $\mathbb{F}_q^{\theta}$ is the field fixed by $\theta$.
\end{prop}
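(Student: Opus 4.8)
The plan is to reduce the statement to two divisibility bounds on degrees and then pin down the coefficients, exploiting the invertibility of $M$ in a crucial way. Since the existence and uniqueness of the monic minimal polynomial $m_T\in\mathbb{F}_q[X;\theta]$ and the identity $m_B(X^k)(T)=O$ have already been established above, it remains to prove $m_T=m_B(X^k)$ and that its coefficients are fixed by $\theta$. First I would record two elementary facts. On one hand, the evaluation map $\sigma\colon\mathbb{F}_q[X;\theta]\to\mathbb{F}_q[T;\theta]$, $p\mapsto p(T)$, is a ring homomorphism (by Lemma \ref{lemma}), so its kernel is a \emph{two-sided} ideal; as $\mathbb{F}_q[X;\theta]$ is a principal ideal domain, $\ker\sigma=\mathbb{F}_q[X;\theta]\,m_T$ with $m_T$ invariant, i.e. $m_T\in N(R)$. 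On the other hand, since $m_B(X^k)\in\ker\sigma$, the generator $m_T$ right-divides $m_B(X^k)$, whence $\deg m_T\le k\deg m_B$.

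The crux is to prove that $m_T$ is actually a polynomial in $X^k$, and here the hypothesis that $M$ (hence $B$ and $T$) is invertible is essential; I expect this step to be the main obstacle. Writing $m_T=\sum_{i=0}^{d}c_iX^i$ with $c_d=1$, invariance gives $\lambda m_T\in m_T\,\mathbb{F}_q[X;\theta]$ for every $\lambda\in\mathbb{F}_q$, and comparing degrees and leading coefficients forces $\lambda m_T=m_T\,\theta^{-d}(\lambda)$; comparing the coefficient of each $X^i$ then yields $\lambda c_i=c_i\,\theta^{i-d}(\lambda)$ for all $\lambda$. Hence $c_i\neq 0$ implies $\theta^{i-d}=\mathrm{id}$, i.e. $i\equiv d\pmod k$. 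Factoring out the least surviving power of $X$, one writes $m_T=\tilde P(X^k)\,X^{d_0}$ with $d_0:=d\bmod k$ and $\tilde P\in\mathbb{F}_q[X]$, where $X^k$ is central. Evaluating at $T$ and using that $T^{d_0}$ is invertible gives $\tilde P(B)=O$, so $\tilde P(X^k)(T)=O$ with $\deg\tilde P(X^k)=d-d_0$; minimality of $m_T$ forces $d_0=0$, so $k\mid d$ and $m_T=G(X^k)$ for a monic $G\in\mathbb{F}_q[X]$.

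Finally I would identify $G$ and locate its coefficients. From $m_T(T)=G(B)=O$ and the minimality of $m_B$ we get $m_B\mid G$ in $\mathbb{F}_q[X]$, so $\deg m_T=k\deg G\ge k\deg m_B$; combined with $\deg m_T\le k\deg m_B$ this yields $\deg G=\deg m_B$, and since both are monic with $m_B\mid G$ we conclude $G=m_B$, that is $m_T=m_B(X^k)$. To see that the coefficients are fixed by $\theta$, I would note that a direct computation using $\theta^k=\mathrm{id}$ gives $B_\theta=MBM^{-1}$, so $B$ and $B_\theta$ are similar and share the same minimal polynomial; applying $\theta$ to the coefficients of $m_B$ shows that the minimal polynomial of $B_\theta$ is $(m_B)_\theta$, whence $(m_B)_\theta=m_B$. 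Therefore $m_B\in\mathbb{F}_q^{\theta}[X]$ and $m_T=m_B(X^k)\in\mathbb{F}_q^{\theta}[X^k;\theta]$, as claimed.
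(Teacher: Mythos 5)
Your proof is correct, and its overall skeleton matches the paper's: show that $m_T$ generates a two-sided ideal and hence is an invariant polynomial, deduce that $m_T$ is a polynomial in $X^k$ up to a left factor that is a power of $X$, use invertibility of $T$ to remove that power, and then identify the result with $m_B(X^k)$ by the two-sided degree comparison. The genuine difference is in how the structural input is obtained. The paper, after proving invariance by an explicit division argument, simply invokes (without proof) the classification of invariant polynomials in $\mathbb{F}_q[X;\theta]$, namely $m_T=X^t\cdot b$ with $b\in\mathbb{F}_q^{\theta}[X^k;\theta]$, which delivers both the ``polynomial in $X^k$'' shape and the fixed-field coefficients in one stroke. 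You instead prove what you need from scratch: the commutation relation $\lambda m_T=m_T\,\theta^{-d}(\lambda)$ forces nonzero coefficients to sit only in degrees congruent to $d$ modulo $k$, giving $m_T=\tilde P(X^k)X^{d_0}$ with $\tilde P\in\mathbb{F}_q[X]$, and the fixed-field property is recovered at the end from the similarity $B_{\theta}=MBM^{-1}$ (which again uses $\theta^k=\mathrm{id}$ and the invertibility of $M$), since similar matrices over $\mathbb{F}_q$ have the same minimal polynomial while applying $\theta$ entrywise sends $m_B$ to the minimal polynomial of $B_{\theta}$. Your route is longer but self-contained, and the closing similarity argument is a clean observation absent from the paper; the paper's route is shorter at the cost of a citation-level appeal to the structure theory of invariant polynomials (essentially Jacobson's results, which it references elsewhere).
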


\begin{proof}
Let $m_T\in \mathbb{F}_q[X;\theta]$ be the unique minimal monic polynomial such that $m_T(T)=O$. Take any polynomial $a\in \mathbb{F}_q[X;\theta]$
and write $a\cdot m_T=m_T\cdot q+r$, where $r$ is the reminder of the left division of $a\cdot m_T$ by $m_T$. Since $\sigma$ is a ring homomorphism, we have
$$O=a(T)\cdot m_T(T)=\sigma (a)\cdot \sigma(m_T)=\sigma(a\cdot m_T)=\sigma (m_T\cdot q+r)=\sigma (m_T\cdot q) + \sigma (r) = $$
$$=\sigma (m_T)\cdot \sigma (q) + \sigma (r)=m_T(T)\cdot q(T) + r(T) = r(T).$$
By the minimality of $m_T$, we deduce that $r=0$. Then $a\cdot m_T=m_T\cdot q$ for some $q\in \mathbb{F}_q[X;\theta]$. This shows that $Rm_T=m_TR=(m_T)$, 
i.e. $m_T$ is an invariant polynomial in $\mathbb{F}_q[X;\theta]$. So $m_T=X^t\cdot b=b\cdot X^t$ for an integer $t\in\mathbb{Z}_{\geq 0}$ and some 
$b\in\mathbb{F}_q^{\theta}[X^k;\theta]$.
Since $M$ is an invertible $n\times n$ matrix, we deduce that $T$ is an invertible map. Hence $O=m_T(T)=T^t b(T)$ implies $b(T)=O$, i.e. $m_T=b$ with
$\deg m_T=sk$ for some $s\in\mathbb{Z}_{\geq 1}$ by the minimality of $m_T$.
Since $T^k=B$, we conclude that $O=m_T(T)=p(B)$ for a polynomial $p\in\mathbb{F}_q^{\theta}[X;\theta]$ with $s=\deg p\geq\deg m_B$. Thus 
we get $m_T(T)=m_B(X^k)\in\mathbb{F}_q^{\theta}[X;\theta]$.
\end{proof}

This allows us to obtain the following

\begin{teo}\label{matrixpol3}
Let $T:=\Theta\circ M$ be a $\theta$-semi-linear transformation on $\mathbb{F}_q^n$ such that $M$ is an invertible $n\times n$ matrix.
Then there exists a ring isomorphism
$$\mathbb{F}_q[X;\theta]/(m_B(X^k))\cong \mathbb{F}_q[T;\theta]$$
defined by $[p]\mapsto p(T)$, where $[p]$ is the class of a polynomial $p\in\mathbb{F}_q[X;\theta]$,
$k$ is the order of $\theta$ and $m_B\in\mathbb{F}_q^{\theta}[X;\theta]$ is the monic minimal polynomial of 
the $n\times n$ matrix $B:=M_{\theta ^{k-1}}M_{\theta ^{k-2}}\ldots M_{\theta}M$.
\end{teo}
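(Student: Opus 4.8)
The plan is to realize the stated map as the isomorphism furnished by the First Isomorphism Theorem for rings, applied to the surjective ring homomorphism $\sigma\colon\mathbb{F}_q[X;\theta]\to\mathbb{F}_q[T;\theta]$, $p\mapsto p(T)$, introduced right after Lemma \ref{lemma}. Since $\sigma$ is a ring epimorphism, its kernel $\ker\sigma$ is a two-sided ideal and $\sigma$ factors through an isomorphism $\mathbb{F}_q[X;\theta]/\ker\sigma\cong\mathbb{F}_q[T;\theta]$ sending $[p]$ to $p(T)$, which is exactly the prescribed map. Hence the whole statement reduces to the single identity $\ker\sigma=(m_B(X^k))$.

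First I would recall from Proposition \ref{propos} and its proof that $m_T:=m_B(X^k)$ is the unique minimal monic polynomial with $m_T(T)=O$ and that it is \emph{invariant}, i.e. $Rm_T=m_TR=(m_T)$ with $R=\mathbb{F}_q[X;\theta]$. In particular $(m_B(X^k))$ is a genuine two-sided ideal, so the quotient on the left-hand side of the claimed isomorphism is really a ring and the statement is meaningful.

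Next I would verify the two inclusions defining $\ker\sigma=(m_T)$. The inclusion $(m_T)\subseteq\ker\sigma$ is immediate: $\sigma(m_T)=m_T(T)=O$, so $m_T\in\ker\sigma$, and since $\ker\sigma$ is a two-sided ideal it contains the whole ideal $(m_T)=Rm_T$. For the reverse inclusion, take $p\in\ker\sigma$, i.e. $p(T)=O$, and use that $R$ is right Euclidean to write $p=q\cdot m_T+r$ with $\deg r<\deg m_T$. Applying $\sigma$ and exploiting that it is a ring homomorphism gives $O=p(T)=\sigma(q)\sigma(m_T)+\sigma(r)=\sigma(q)\cdot O+r(T)=r(T)$, so $r(T)=O$ with $\deg r<\deg m_T$; by the minimality of $m_T$ this forces $r=0$, whence $p=q\cdot m_T\in Rm_T=(m_T)$. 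Thus $\ker\sigma=(m_T)=(m_B(X^k))$, and substituting this into the First Isomorphism Theorem yields the desired isomorphism $\mathbb{F}_q[X;\theta]/(m_B(X^k))\cong\mathbb{F}_q[T;\theta]$, $[p]\mapsto p(T)$.

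The routine ingredients here are the division with remainder and the bookkeeping of the homomorphism property of $\sigma$; the only point that genuinely draws on earlier work is the invariance and minimality of $m_T$, which is precisely what Proposition \ref{propos} supplies. Consequently the main obstacle does not lie in this theorem itself—it was already dispatched in Proposition \ref{propos}—and the present argument is essentially a clean application of the First Isomorphism Theorem once $\ker\sigma$ has been identified.
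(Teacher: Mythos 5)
Your proposal is correct and follows essentially the same route as the paper: both apply the First Isomorphism Theorem to the surjective homomorphism $\sigma\colon p\mapsto p(T)$ and identify $\ker\sigma$ with the two-sided ideal $(m_T)=(m_B(X^k))$ supplied by Proposition \ref{propos}. The only difference is that the paper asserts $\ker(\sigma)=(m_T)$ directly as a consequence of Proposition \ref{propos}, whereas you spell out the Euclidean-division argument (together with the invariance of $m_T$, which makes $(m_T)$ two-sided) that justifies this identification.
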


\begin{proof}
Consider the ring homomorphism $\sigma :\mathbb{F}_q[X;\theta]\rightarrow \mathbb{F}_q[T;\theta]$, defined by $p\mapsto p(T)$. By construction, $\sigma$ is surjective. 
Moreover, by Proposition \ref{propos} note that $\ker (\sigma )=(m_T)$. Then there exists an isomorphism $\overline{\sigma }$ between $\mathbb{F}_q[X;\theta]/(m_T)$ and $\mathbb{F}_q[T;\theta]$, where $\overline{\sigma }$ is defined by $[p]\mapsto p(T)$ and $[p]$ is the class of a polynomial $p$.
\end{proof}

\begin{obs}\label{rem1}
If $p=q$ in $\mathbb{F}_q[X;\theta]$, then $[p]=[q]$ in $\mathbb{F}_q[X;\theta]/(m_B(X^k))$. Thus $p(T)=q(T)$ via $\overline{\sigma }$.
Moreover, when $\theta = id$, from {\em Theorem \ref{matrixpol3}} we deduce that there exists a ring isomorphism
$$\mathbb{F}_q[X]/(m_M)\cong \mathbb{F}_q[M]$$
defined by $[p]\mapsto p(M)$, where $[p]$ is the class of a polynomial $p\in\mathbb{F}_q[X]$ 
and $m_M\in\mathbb{F}_q[X]$ is the monic minimal polynomial of the $n\times n$ matrix $M$. 
\end{obs}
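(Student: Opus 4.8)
The plan is to treat the two assertions separately, since both follow almost immediately from Theorem \ref{matrixpol3} together with the elementary fact that the canonical quotient map of a ring is well defined.

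For the first assertion, I would begin by noting that if $p=q$ as elements of $\mathbb{F}_q[X;\theta]$, then $p-q=0$ lies in every two-sided ideal, in particular in $(m_B(X^k))$; hence their classes coincide, $[p]=[q]$, in $\mathbb{F}_q[X;\theta]/(m_B(X^k))$. This is just the statement that the canonical projection $\mathbb{F}_q[X;\theta]\to\mathbb{F}_q[X;\theta]/(m_B(X^k))$ is a well-defined ring homomorphism. Applying the isomorphism $\overline{\sigma}$ of Theorem \ref{matrixpol3}, which sends $[p]\mapsto p(T)$, and using that $\overline{\sigma}$ is in particular a single-valued map, I would conclude $p(T)=\overline{\sigma}([p])=\overline{\sigma}([q])=q(T)$.

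For the second assertion, the strategy is to specialize Theorem \ref{matrixpol3} to $\theta=id$. First I would observe that the order of $\theta=id$ is $k=1$, so that $X^k=X$ and $\mathbb{F}_q[X;\theta]=\mathbb{F}_q[X;id]=\mathbb{F}_q[X]$ is the ordinary polynomial ring. Next, since $\theta=id$ fixes every entry of $M$, we have $M_{\theta^i}=M$ for all $i$, and with $k=1$ the defining product collapses to $B=M$; consequently $m_B=m_M$ is precisely the classical monic minimal polynomial of $M$. Finally, because $T=\Theta\circ M=M$ when $\theta=id$, the target ring becomes $\mathbb{F}_q[T;\theta]=\mathbb{F}_q[M]$. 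Substituting these identifications into the isomorphism of Theorem \ref{matrixpol3} yields $\mathbb{F}_q[X]/(m_M)\cong\mathbb{F}_q[M]$ with $[p]\mapsto p(M)$, as claimed.

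There is no genuine obstacle here: the first part is a formal consequence of the well-definedness of the quotient map and the isomorphism already established, while the second is a routine specialization. The only point demanding a little care is checking that all the skew-theoretic data degenerate correctly at $\theta=id$ — namely that $k=1$, $B=M$, and hence $m_B=m_M$ — so that the general isomorphism of Theorem \ref{matrixpol3} indeed reduces to the classical statement about the minimal polynomial of a matrix.
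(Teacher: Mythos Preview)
Your proposal is correct and matches the paper's intent: the remark is stated without proof in the paper, and its content is exactly the trivial well-definedness of the quotient map together with the specialization of Theorem~\ref{matrixpol3} to $\theta=id$ (so that $k=1$, $B=M$, $m_B=m_M$, and $T=M$), precisely as you lay out.
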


\subsection{BCH lower bounds for the minimum distance}

In this subsection we show some results which give lower bounds for the distance of a skew GC-code.

Assume that 
$$f=X^{n}-f_{n-1}X^{n-1}-\ldots -f_1X-f_0,$$ where $f_{n-1},\ldots ,f_1,f_0 \in \mathbb{F}_q$ and $f_0\neq 0$.

\begin{lem}\label{inverse}
In $R/Rf$ we have
$$\alpha\cdot X=1\qquad\mathrm{and}\qquad X\cdot\beta=1,$$
where $$\alpha:=f_0^{-1}X^{n-1}-f_0^{-1}f_{n-1}X^{n-2}-\ldots -f_0^{-1}f_{2}X-f_0^{-1}f_1$$ and 
$$\beta:=\theta^{-1}(f_0^{-1})X^{n-1}-\theta^{-1}(f_0^{-1}f_{n-1})X^{n-2}-\ldots -\theta^{-1}(f_0^{-1}f_1).$$
In particular, when $\theta=id$, we get $X\cdot \alpha=\alpha\cdot X=1$.
\end{lem}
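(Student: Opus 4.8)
The plan is to verify both one-sided identities by a direct computation inside $R$, exploiting that throughout this subsection $\delta_\beta^\theta=0$, so the only commutation rule in play is $X\cdot a=\theta(a)X$ for $a\in\mathbb{F}_q$, and then to collapse the result modulo $f$ using the relation $X^n\equiv f_{n-1}X^{n-1}+\cdots+f_1X+f_0$ that holds in $R/Rf$ (here $f_0\neq 0$ guarantees that $f_0^{-1}$ exists).

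First I would treat $\alpha\cdot X$, which is the easy direction. Writing $\alpha=f_0^{-1}(X^{n-1}-f_{n-1}X^{n-2}-\cdots-f_2X-f_1)$, right-multiplication by $X$ simply raises every exponent by one and introduces no twist at all, since each scalar coefficient already sits to the left of its monomial. This yields $\alpha\cdot X=f_0^{-1}(X^{n}-f_{n-1}X^{n-1}-\cdots-f_1X)$. Replacing $X^n$ by $\sum_{i=0}^{n-1}f_iX^i$ then makes every monomial of degree between $1$ and $n-1$ cancel against its twin, leaving only the constant term $f_0$, so that $\alpha\cdot X=f_0^{-1}f_0=1$ in $R/Rf$.

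Next I would handle $X\cdot\beta$, which is where the noncommutativity actually enters. Left-multiplication by $X$ sends each monomial $bX^i$ to $\theta(b)X^{i+1}$, so every coefficient of $\beta$ gets hit by $\theta$. The point is that the coefficients of $\beta$ are by construction the $\theta^{-1}$ of exactly the coefficients appearing in $\alpha\cdot X$; hence the outer $\theta$ cancels the built-in $\theta^{-1}$ term by term, and one obtains precisely $X\cdot\beta=f_0^{-1}(X^n-f_{n-1}X^{n-1}-\cdots-f_1X)$, which is the very expression already reduced in the previous step. The same substitution then gives $X\cdot\beta=1$. The concluding remark for $\theta=id$ is immediate: there $\theta^{-1}=id$ forces $\beta=\alpha$ and the ring is commutative, so the two one-sided statements merge into $X\cdot\alpha=\alpha\cdot X=1$.

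The only genuinely delicate point — and the step I would be most careful with — is the index-by-index bookkeeping of the $\theta$-twist in $X\cdot\beta$: one must confirm that the $\theta^{-1}$ inserted in the definition of each coefficient of $\beta$ is matched against exactly the $\theta$ produced by $X\cdot(bX^i)=\theta(b)X^{i+1}$, so that both computations land on the same polynomial \emph{before} the reduction $X^n\equiv\sum_{i=0}^{n-1}f_iX^i$ is applied. Once that alignment is checked, everything reduces to the routine telescoping cancellation already carried out for $\alpha\cdot X$.
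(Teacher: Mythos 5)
Your proof is correct and is essentially the same as the paper's: both arguments rest on the same twist bookkeeping (right multiplication by $X$ introduces no twist, while left multiplication by $X$ applies $\theta$ and cancels the $\theta^{-1}$ built into $\beta$), after which both products equal $f_0^{-1}X^{n}-f_0^{-1}f_{n-1}X^{n-1}-\ldots -f_0^{-1}f_1X$ and reduce to $1$ via $X^{n}\equiv f_{n-1}X^{n-1}+\ldots +f_1X+f_0$ in $R/Rf$. The paper merely writes this identical computation as a chain of equivalences starting from $f=0$, rather than expanding $\alpha\cdot X$ and $X\cdot\beta$ directly as you do.
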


\begin{proof} It is sufficient to note that in $R/Rf$ we have the following equivalences:
\begin{equation*}
\begin{split}
&X^{n}-f_{n-1}X^{n-1}-\ldots -f_1X-f_0=0 \iff \\
& \iff (X^{n-1}-f_{n-1}X^{n-2}-\ldots -f_1)\cdot X=f_0\\
& \iff (f_0^{-1}X^{n-1}-f_0^{-1}f_{n-1}X^{n-2}-\ldots -f_0^{-1}f_{2}X-f_0^{-1}f_1)\cdot X=1
\end{split}
\end{equation*}
and
\smallskip
\begin{equation*}
\begin{split}
&X^{n}-f_{n-1}X^{n-1}-\ldots -f_1X-f_0=0 \\
& \iff f_0^{-1}X^{n}-f_0^{-1}f_{n-1}X^{n-1}-\ldots -f_0^{-1}f_1X=1 \\
& \iff X\cdot (\theta^{-1}(f_0^{-1})X^{n-1}-\theta^{-1}(f_0^{-1}f_{n-1})X^{n-2}-\ldots -\theta^{-1}(f_0^{-1}f_1))=1. \\
\end{split}
\end{equation*}
\end{proof}

Let $\vec{v}\in \mathbb{F}_q^n$. We will denote by $\mathrm{wt}(v)$ the Hamming weigth of $\vec{v}$, where $\pi _f(\vec{v})=v \in R/Rf$.

\begin{lem}\label{vtv}
Let $\mathcal{C}\subset \mathbb{F}_q^n$ be a skew GC code, and consider a polynomial $c\in \pi _{f}(\mathcal{C} )$ with weight $wt(c)=w$. 
Then, there exists $h\in R$ such that
$$h\cdot c=1+\displaystyle \sum _{i=1}^{w-1}c_iX^{a_i} \in \pi _f(\mathcal{C}),$$
where $c_i \in \mathbb{F}_q^*$ and $a_i \in \mathbb{N}$ with $a_i\leq n-1$ for $i=1,\ldots ,w-1$. Furthermore,
$$\mathrm{wt}(h\cdot c)=\mathrm{wt}\left( 1+\displaystyle \sum _{i=1}^{w-1}c_iX^{a_i}\right)=w.$$
\end{lem}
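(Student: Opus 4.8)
The plan is to produce $h$ as a scalar multiple of a suitable power of $X$, exploiting that on $R/Rf$ left multiplication by $X$ is an invertible operator whose inverse shifts exponents \emph{down}. First I would fix the reduced representative $c=\sum_{i=0}^{w-1}d_iX^{e_i}$ with $d_i\in\mathbb{F}_q^*$ and $0\le e_0<e_1<\cdots<e_{w-1}\le n-1$; here $w\ge 1$ and $d_0X^{e_0}$ is the lowest term. Denote by $L_X\colon R/Rf\to R/Rf$, $[g]\mapsto[Xg]$, left multiplication by $X$; this is well defined because $Rf$ is a left ideal (and $X\cdot Rf\subseteq Rf$), it is additive and $\theta$-semilinear, and by Remark \ref{rem 1bis} it coincides with $\pi_f\circ T_f\circ\pi_f^{-1}$. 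Since $f_0\neq0$ the matrix $A$ in {\em (**)} is invertible, so $T_f$, and hence $L_X$, is a bijection. Moreover one checks directly that $L_X^{j}=L_{X^{j}}$ for every $j\ge0$ (distinct representatives differ by an element of $Rf$, which $X$ preserves), and that $L_X([X^{e}])=[X^{e+1}]$ for $0\le e\le n-2$.

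The next step is to understand $L_X^{-1}$. Because $R/Rf$ is finite, $L_X$ has finite multiplicative order $m$, so $L_X^{-1}=L_X^{m-1}=L_{X^{m-1}}$; in particular $L_X^{-1}$ is again left multiplication by a genuine polynomial of $R$. On the low monomials it acts as a decreasing shift: inverting the relations above gives $L_X^{-1}([X^{e}])=[X^{e-1}]$ for $1\le e\le n-1$, and since $L_X$ is $\theta$-semilinear, $L_X^{-1}$ is $\theta^{-1}$-semilinear, i.e. $L_X^{-1}(d'[X^{e}])=\theta^{-1}(d')[X^{e-1}]$ for $1\le e\le n-1$. Applying $L_X^{-1}$ to $c$ a total of $e_0$ times is therefore legitimate (before the $k$-th application the smallest exponent present is $e_0-(k-1)\ge1$, so no reduction modulo $f$ ever occurs), and yields
$$L_X^{-e_0}(c)=\sum_{i=0}^{w-1}\theta^{-e_0}(d_i)\,[X^{e_i-e_0}],$$
whose exponents lie in $\{0,1,\dots,e_{w-1}-e_0\}\subseteq\{0,\dots,n-1\}$. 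As $\theta^{-e_0}$ is a field automorphism it sends each $d_i\ne0$ to a nonzero scalar, so the weight is still $w$ and the constant term $\theta^{-e_0}(d_0)$ is nonzero.

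Finally I would normalise. Put $\lambda:=\theta^{-e_0}(d_0)^{-1}\in\mathbb{F}_q^*$ and $h:=\lambda X^{(m-1)e_0}\in R$; then, using $L_{\lambda X^{(m-1)e_0}}=L_\lambda\circ L_{X^{(m-1)e_0}}$ and $L_X^{(m-1)e_0}=L_X^{-e_0}$,
$$h\cdot c=\lambda\, L_X^{-e_0}(c)=1+\sum_{i=1}^{w-1}\lambda\,\theta^{-e_0}(d_i)\,X^{e_i-e_0},$$
which has the required shape with $c_i=\lambda\,\theta^{-e_0}(d_i)\in\mathbb{F}_q^*$ and $a_i=e_i-e_0\in\{1,\dots,n-1\}$, and weight exactly $w$; moreover $h\cdot c\in\pi_f(\mathcal{C})$ since $\pi_f(\mathcal{C})$ is a left ideal. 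The step I expect to be the main obstacle is precisely the passage through $L_X^{-1}$: Lemma \ref{inverse} only furnishes a \emph{left} inverse $\alpha$ of $X$, and when $f\notin N(R)$ the quotient $R/Rf$ is merely a left module, so $L_\alpha$ is \emph{not} the inverse of $L_X$ and does not shift exponents cleanly (indeed $\alpha X=1+f_0^{-1}f$ forces extra reduction terms under $L_\alpha$). The crux is thus to bypass $\alpha$ altogether and invert $L_X$ as a finite-order bijection, which simultaneously guarantees that its inverse is left multiplication by a polynomial and that shifting \emph{down} keeps all exponents in range, so that the weight is preserved.
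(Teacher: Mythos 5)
Your proof is correct, and it reaches the paper's normal form by a genuinely different device. The paper's proof is a one-liner built on Lemma \ref{inverse}: writing $c=b_{i_0}X^{i_0}\cdot u$ with $u=1+\sum_{j=1}^{w-1}\theta^{-i_0}(b_{i_0}^{-1}b_{i_j})X^{i_j-i_0}$ (this factorization is where $\delta_{\beta}^{\theta}=0$ enters), it takes $h=\alpha^{i_0}b_{i_0}^{-1}$, where $\alpha$ is the explicit left inverse of $X$ modulo $Rf$, and cancels $[\alpha^{i_0}X^{i_0}u]=[u]$ telescopically. That cancellation is exactly the step you flagged as delicate: since $\alpha X=1+f_0^{-1}f$ in $R$, one needs each error term of the form $\alpha^{j}f_0^{-1}fX^{j}u$ to lie in $Rf$, which is justified because under the standing hypothesis $(\#)$ (declared ``from now on'' in Section 2) $f$ is a product of invariant polynomials, hence $f\in N(R)$, $fR=Rf$, and $R/Rf$ is a genuine ring in which $[\alpha][X]=[1]$ can be iterated. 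Your route --- inverting $L_X=\pi_f\circ T_f\circ\pi_f^{-1}$ as a finite-order bijection (only $f_0\neq 0$ is used) and taking $h=\lambda X^{(m-1)e_0}$ --- never touches Lemma \ref{inverse} and needs no invariance of $f$; indeed your observation that $L_\alpha\circ L_X=\mathrm{id}+f_0^{-1}L_f$, combined with Remark \ref{rem 1bis} and Proposition \ref{f}, shows the $\alpha$-route genuinely breaks down when $f\notin N(R)$, so your argument is strictly more general. The trade-off: the paper gets an explicit, degree-controlled $h$ in one line (Lemma \ref{inverse} exists precisely to serve this proof), while you pay with the auxiliary order $m$ of $L_X$ but gain that Lemma \ref{vtv} --- and hence the BCH-type bounds resting on it --- holds for every $f$ with nonzero constant term, whether or not $(\#)$ is in force.
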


\begin{proof}
Since $\mathrm{wt}(c)=w$, we can write $c=b_{i_0}X^{i_0}+b_{i_1}X^{i_1}+\ldots +b_{i_{w-1}}X^{i_{w-1}}$, where $i_0<\ldots <i_{w-1}$ and $b_{i_j}\neq 0$ for $j=0,\ldots ,w-1$. Hence
$$c=b_{i_0}X^{i_0}(1+\theta^{-i_0}(b_{i_0}^{-1}b_{i_1})X^{i_1-i_0}+\ldots +\theta^{-i_0}(b_{i_0}^{-1}b_{i_{w-1}})X^{i_{w-1}-i_0}).$$
Since $\mathcal{C}$ is a skew GC code, we can conclude that
$$\alpha^{i_0}b_{i_0}^{-1}c=1+\theta^{-i_0}(b_{i_0}^{-1}b_{i_1})X^{i_1-i_0}+\ldots +\theta^{-i_0}(b_{i_0}^{-1}b_{i_{w-1}})X^{i_{w-1}-i_0} \in \pi _f(\mathcal{C} ).$$
The statement follows by putting $h=\alpha^{i_0}b_{i_0}^{-1}$, $c_j=\theta^{-i_0}(b_{i_0}^{-1}b_{i_j})$ and $a_j=i_j-i_0$ for $j=0,\ldots ,w-1$.
\end{proof}

\begin{defi}
Let $\mathcal{C}\subset \mathbb{F}_q$ be a skew GC code. The distance $d_{\mathcal{C}}$ of $\mathcal{C}$ is defined as
$$d_{\mathcal{C}}:=\min \{d(\vec{x},\vec{y}):\ \vec{x},\vec{y} \in \mathcal{C}, \ \vec{x}\neq \vec{y} \},$$
where $d$ is the Hamming distance.
\end{defi}

\begin{obs}\label{ordfinite}
Consider an element $\beta$ in $\overline{\mathbb{F}_q}$ such that $p(\beta ^k)=0$, for some $p\in R$ and $k\in \mathbb{Z}_{>0}$. Then, from \cite{TT} it follows that there exists $q\in R$ such that
$$\beta ^m-1=q(\beta )p(\beta ),$$
i.e. $\beta ^m=1$ for some $m\in \mathbb{N}^*$. Hence, $\textrm{ord}( \beta )<+\infty$, where $\textrm{ord}(\beta )$ denotes the order of $\beta$.
\end{obs}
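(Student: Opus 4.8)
The plan is to deduce the claim from the single fact that a nonzero element of $\overline{\mathbb{F}_q}$ is a root of unity, using the divisibility relation of \cite{TT} only as the device that produces an explicit exponent $m$. First I would record that the asserted conclusion $\beta^m=1$ forces $\beta\neq 0$, so there is nothing to prove unless $\beta\neq 0$; in that case $\gamma:=\beta^k\neq 0$ as well, and by hypothesis $\gamma$ is a root of $p$, i.e. $p(\gamma)=p(\beta^k)=0$.

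Next I would invoke \cite{TT}: applied to $p$ at the root $\gamma=\beta^k$, it furnishes an integer $m\in\mathbb{N}^*$ and a polynomial $q\in R$ with $\beta^m-1=q(\beta)\,p(\beta^k)$, and since $p(\beta^k)=0$ by hypothesis the right-hand side vanishes, whence $\beta^m=1$. Equivalently, and without citing \cite{TT}, one may argue directly: since $\beta\in\overline{\mathbb{F}_q}=\bigcup_{N\geq 1}\mathbb{F}_{q^N}$, it lies in some finite field $\mathbb{F}_{q^N}$, and being nonzero it belongs to the cyclic group $\mathbb{F}_{q^N}^{*}$ of order $q^N-1$, so that $\beta^{\,q^N-1}=1$.

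Finally, from $\beta^m=1$ with $m\geq 1$ it follows that $\textrm{ord}(\beta)$ divides $m$, hence $\textrm{ord}(\beta)\leq m<+\infty$, which is the assertion. The only point requiring care --- and the step I would check most closely against \cite{TT} --- is the exact shape of the divisibility relation: one must verify that the evaluation for which $\beta^m-1=q(\beta)\,p(\beta^k)$ holds is the one appearing in the hypothesis $p(\beta^k)=0$, and that $q$ may indeed be taken in $R$. Everything else is elementary, resting only on the finiteness of the multiplicative groups $\mathbb{F}_{q^N}^{*}$.
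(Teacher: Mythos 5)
Your proposal is correct, and it in fact contains two arguments. The first one --- invoking \cite{TT} to produce $q\in R$ and $m$ with $\beta^m-1=q(\beta)p(\beta)$ and concluding $\beta^m=1$ --- is exactly the paper's own justification: the Remark carries no separate proof beyond this citation, so on that route you and the paper coincide (including the wrinkle you flag, since the paper's hypothesis is $p(\beta^k)=0$ while its displayed relation evaluates $p$ at $\beta$; this mismatch is in the paper itself and must indeed be resolved against \cite[Lemma 26]{TT}). Your second argument is genuinely different and more elementary: any nonzero $\beta\in\overline{\mathbb{F}_q}$ generates a finite extension $\mathbb{F}_q(\beta)=\mathbb{F}_{q^N}$, hence lies in the finite group $\mathbb{F}_{q^N}^{*}$ and satisfies $\beta^{q^N-1}=1$. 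This route needs neither \cite{TT} nor the hypothesis $p(\beta^k)=0$, and it sidesteps the delicate point that in $R=\mathbb{F}_q[X;\theta]$ evaluation is not a ring homomorphism, so a product expression like $q(\beta)p(\beta)$ only makes sense through the remainder/product formulas of \cite{TT}. What the paper's route buys instead is an exponent $m$ tied to $p$ via the skew divisibility $X^m-1=q\cdot p$, which is the specific form reused later in the BCH-bound section; your $m=q^N-1$ is cruder but suffices for the stated conclusion $\mathrm{ord}(\beta)<+\infty$. One caveat applies equally to both arguments: if $p$ has zero constant term, then $\beta=0$ satisfies the hypothesis but has no finite multiplicative order, so $\beta\neq 0$ must be assumed, as you correctly note.
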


Inspired by \cite{HT}, the following results provide lower bounds on the minimum Hamming distance for a skew GC code.

\begin{teo}\label{bound1}
Let $\mathcal{C}=(g)_{n,q}^{\theta,0}$ be a skew GC code. Suppose there exists $\beta \in \overline{\mathbb{F}_q}$, $l\in \mathbb{Z}_{\geq 0}$, $c\in \mathbb{Z}_{>0}$ such that $g(\beta ^{l+ci})=0$ for $i=0,\ldots ,\delta-2$. If $N_i(\beta ^{c})\neq 1$ for every $i=1,...,n-1$, then $d_{\mathcal{C}} \geq \delta$.
\end{teo}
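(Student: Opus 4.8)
The plan is to run the classical BCH argument by contradiction, with the skew remainder evaluation and a renormalized skew Vandermonde matrix replacing their commutative analogues. Suppose $d_{\mathcal{C}}\leq \delta-1$, so that there is a nonzero $c\in\pi_f(\mathcal{C})$ of weight $w\leq\delta-1$. By Lemma \ref{vtv}, after left-multiplying by a suitable invertible element (which keeps us inside the left ideal $\pi_f(\mathcal{C})$, using Lemma \ref{inverse}), I may assume
$$c=1+\sum_{i=1}^{w-1}c_iX^{a_i},\qquad c_i\in\mathbb{F}_q^*,\ 1\leq a_1<\cdots<a_{w-1}\leq n-1,$$
with $\mathrm{wt}(c)=w$. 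Setting $a_0:=0$ and $c_0:=1$, the whole matter reduces to showing that such a $c$ cannot vanish at the $\delta-1$ prescribed points.

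Next I record the root--inheritance step. For $\gamma\in\overline{\mathbb{F}_q}$ and $p=\sum_j p_jX^j\in R$, since $R$ is right Euclidean the right division of $p$ by $X-\gamma$ has constant remainder $p(\gamma):=\sum_j p_jN_j(\gamma)$, where $N_j(\gamma)=\theta^{j-1}(\gamma)\cdots\theta(\gamma)\gamma$ is the image of $X^j$; hence $p(\gamma)=0$ if and only if $X-\gamma$ right-divides $p$. Because $\pi_f(\mathcal{C})=Rg/Rf$ with $f=h\cdot g$ and $g(\beta^{l+cj})=0$, the point $\gamma_j:=\beta^{l+cj}$ is a right root of $g$, hence of every left multiple $a\,g$, and hence of $f$ as well; the latter shows the evaluation at $\gamma_j$ descends to $R/Rf$, so $c(\gamma_j)=0$ for $j=0,\ldots,\delta-2$. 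Restricting to $j=0,\ldots,w-1$ (legitimate since $w-1\leq\delta-2$) yields the homogeneous system
$$\sum_{i=0}^{w-1}c_i\,N_{a_i}(\gamma_j)=0,\qquad j=0,\ldots,w-1.$$

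It remains to prove that $V:=[N_{a_i}(\gamma_j)]_{0\leq j,i\leq w-1}$ is nonsingular, which forces $c_0=\cdots=c_{w-1}=0$ and contradicts $c_0=1$. Writing $\theta(x)=x^{r}$ in the commutative field $\overline{\mathbb{F}_q}$, one has $N_a(\gamma)=\gamma^{\nu(a)}$ with $\nu(a)=1+r+\cdots+r^{a-1}$, whence
$$N_{a_i}(\gamma_j)=\beta^{(l+cj)\nu(a_i)}=\beta^{l\nu(a_i)}\bigl(\beta^{c\nu(a_i)}\bigr)^{j}.$$
Thus $V$ is an ordinary Vandermonde matrix in the nodes $x_i:=\beta^{c\nu(a_i)}$, with its $i$-th column scaled by $\beta^{l\nu(a_i)}\neq 0$ (nonzero since $\beta\neq 0$ by Remark \ref{ordfinite}); so $\det V=\bigl(\prod_i\beta^{l\nu(a_i)}\bigr)\prod_{i<i'}(x_{i'}-x_i)$ up to sign, and $V$ is invertible exactly when the $x_i$ are pairwise distinct. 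For $a_i<a_{i'}$ the identity $\nu(a_{i'})-\nu(a_i)=r^{a_i}\nu(a_{i'}-a_i)$ gives $x_{i'}x_i^{-1}=\theta^{a_i}\bigl(N_{a_{i'}-a_i}(\beta^c)\bigr)$, so $x_{i'}=x_i$ iff $N_{a_{i'}-a_i}(\beta^c)=1$. Since $1\leq a_{i'}-a_i\leq n-1$, the hypothesis $N_s(\beta^c)\neq 1$ for $s=1,\ldots,n-1$ guarantees the $x_i$ are distinct, so $V$ is invertible and the argument closes.

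The step I expect to be most delicate is the bookkeeping of the last paragraph: translating the skew norms $N_{a_i}(\gamma_j)$ into powers of $\beta$ and extracting the clean criterion $x_{i'}=x_i\iff N_{a_{i'}-a_i}(\beta^c)=1$ via the identity $\nu(a')-\nu(a)=r^{a}\nu(a'-a)$ together with the fact that $\theta$ is an automorphism (so $\theta^{a_i}(y)=1\iff y=1$). By contrast, the conceptual points---that the right roots of $g$ pass to all codewords and that evaluation descends to $R/Rf$---are immediate once the right-division characterization of right roots is stated carefully.
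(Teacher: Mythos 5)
Your proof is correct, but it takes a genuinely different route from the paper's. Both arguments open identically: assume a codeword of weight $w<\delta$, normalize it to $c=1+\sum_{i=1}^{w-1}c_iX^{a_i}$ via Lemma \ref{vtv}, and use that $c$ vanishes at the points $\gamma_j=\beta^{l+cj}$ (a step you actually justify more carefully than the paper does, through the right-division characterization of right roots and the descent of evaluation to $R/Rf$). They diverge at the contradiction mechanism. The paper follows Hartmann--Tzeng: it sets $Y_i:=N_{a_i}(\beta)$, introduces the power sums $S_j=c(\beta^j)-1$ and the auxiliary polynomial $p=\prod_{i=1}^{w-1}(X-Y_i^c)\in\mathbb{F}_q[X]$, observes that the hypothesis $N_{a_i}(\beta^c)\neq 1$ forces $p(1)\neq 0$, and then telescopes the trivial identity $0=\sum_i c_iY_i^l\,p(Y_i^c)$ into $S_{l+c(w-1)}+p_1S_{l+c(w-2)}+\cdots+p_{w-1}S_l=-p(1)$, a contradiction. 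You instead run the classical Vandermonde-determinant BCH argument: assemble the $w\times w$ matrix $\left[N_{a_i}(\gamma_j)\right]$, factor it as a column-scaled Vandermonde matrix in the nodes $x_i=\beta^{c\nu(a_i)}$, and prove the nodes pairwise distinct from the identity $\nu(a_{i'})-\nu(a_i)=r^{a_i}\nu(a_{i'}-a_i)$ together with the injectivity of $\theta$, so that nonsingularity contradicts $c_0=1$. Both stay within the stated hypothesis, though it is worth noting that you invoke $N_s(\beta^c)\neq 1$ for the differences $s=a_{i'}-a_i$, whereas the paper needs it only for the exponents $a_i$ themselves; both families of indices lie in $\{1,\ldots,n-1\}$, so nothing is lost. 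The trade-off: your argument is the more elementary and transparent one, making the parity-check/Vandermonde structure explicit and isolating exactly where the norm hypothesis enters; the paper's polynomial trick is chosen because it scales, essentially verbatim, to the Hartmann--Tzeng-type refinements of Theorems \ref{bound2} and \ref{bound3}, where the analogous system of evaluations is no longer square and a determinant argument does not apply directly.
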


\begin{proof}
Suppose there exists a polynomial $c\in \pi _f(\mathcal{C})$ with $wt(c)=w<\delta$. By Lemma \ref{vtv}, we can assume that $c$ is the following type
$$c=1+\displaystyle \sum _{i=1}^{w-1}c_iX^{a_i}$$
where $c_i\in \mathbb{F}_q^*$ and $a_i\in \mathbb{Z}_{>0}$ with $a_i<n$ for $i=1,\ldots ,w-1$.
Define
$$Y_i:=N_{a_i}(\beta)\qquad \mbox{,}\qquad S_j:=\displaystyle \sum _{i=1}^{w-1}c_iY_i^j=c(\beta ^j)-1.$$
and
\begin{equation*}
\begin{split}
p&:=\displaystyle \prod _{i=1}^{w-1}(X-Y_i^c)\\
&=X^{w-1}+p_1X^{w-2}+\ldots +p_{w-2}X+p_{w-1} \in \mathbb{F}_q[X].
\end{split}
\end{equation*}
Since by hypothesis $Y_i^c=N_{a_i}(\beta^c)\neq 1$, we deduce that $p(1)\neq 0$.\\

By arguing as in \cite{HT}, we have the following equalities:
\begin{equation*}
\begin{split}
0&=\displaystyle \sum _{i=1}^{w-1}c_iY_i^lp(Y_i^c)\\
&=\sum _{i=1}^{w-1}c_iY_i^l\left( Y_i^{c(w-1)}+p_1Y_i^{c(w-2)}+\ldots +p_{w-2}Y_i^c+p_{w-1}\right) \\
&=\sum _{i=1}^{w-1}c_i\left( Y_i^{l+c(w-1)}+p_1Y_i^{l+c(w-2)}+\ldots +p_{w-2}Y_i^{l+c}+p_{w-1}Y_i^l\right) \\
&=\sum _{i=1}^{w-1}c_iY_i^{l+c(w-1)}+p_1\left( \sum _{i=1}^{w-1}c_iY_i^{l+c(w-2)}\right) +\ldots +p_{w-2}\left( \sum _{i=1}^{w-1}c_iY_i^{l+c}\right) + \\
&+p_{w-1}\left( \sum _{i=1}^{w-1}c_iY_i^l\right)=S_{l+c(w-1)}+p_1S_{l+c(w-2)}+\ldots +p_{w-2}S_{l+c}+p_{w-1}S_l
\end{split}
\end{equation*}
i.e. \begin{center}
$S_{l+c(w-1)}+p_1S_{l+c(w-2)}+\ldots +p_{w-2}S_{l+c}+p_{w-1}S_l=0.$\hfill{($\ast$)}
\end{center}

Since $c\in \pi _f(\mathcal{C})$, we have $c(\beta ^{l+ci})=0$ for $i=0,\ldots ,\delta -2$. Then
$$S_{l+ci}=c(\beta ^{l+ci})-1=-1$$
for  every $i=0,\ldots ,\delta -2$. So by $(\ast )$ we conclude
$$0=S_{l+c(w-1)}+p_1S_{l+c(w-2)}+\ldots +p_{w-2}S_{l+c}+p_{w-1}S_l=-p(1),$$
i.e. $p(1)=0$, a contradiction. Hence $d_{\mathcal{C}}\geq \delta$.
\end{proof}

\begin{obs}
The condition $\mathrm{rk}\ V_n(N_0(\beta^c),N_1(\beta^c),...,N_{n-1}(\beta^c))=n$ implies the hypothesis of the above result. Thus \emph{Theorem \ref{bound1}}
generalizes \emph{Theorem 4} of \cite{BU2} when the derivation is trivial.
\end{obs}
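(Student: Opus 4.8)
The plan is to unwind the full-rank condition through the Vandermonde determinant and then to read off the hypothesis of Theorem \ref{bound1} from the fact that $N_0(\beta^c)=1$ sits among the defining values.

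First I would recall that $V_n(x_0,\ldots,x_{n-1})$ denotes the $n\times n$ Vandermonde matrix whose rows are $(1,x_i,x_i^2,\ldots,x_i^{n-1})$ for $i=0,\ldots,n-1$, so that $\det V_n(x_0,\ldots,x_{n-1})=\prod_{0\le i<j\le n-1}(x_j-x_i)$. Taking $x_i=N_i(\beta^c)$ and using the convention $N_0(\beta^c)=1$ (the empty product, consistent with the ``$-1$'' appearing in the definition $S_j=c(\beta^j)-1$ in the proof of Theorem \ref{bound1}), the condition $\mathrm{rk}\,V_n(N_0(\beta^c),\ldots,N_{n-1}(\beta^c))=n$ is equivalent to this determinant being nonzero, hence to the values $N_0(\beta^c),\ldots,N_{n-1}(\beta^c)$ being pairwise distinct.

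Next I would exploit that $N_0(\beta^c)=1$ is one of these values: pairwise distinctness forces $N_i(\beta^c)\neq N_0(\beta^c)=1$ for every $i=1,\ldots,n-1$, which is exactly the hypothesis of Theorem \ref{bound1}. This establishes the first assertion, and shows moreover that the full-rank condition is strictly stronger than that hypothesis, since it additionally requires the $N_i(\beta^c)$ with $i\ge 1$ to be mutually distinct, a demand not needed for the bound $d_{\mathcal{C}}\ge\delta$.

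For the generalization claim I would observe that Theorem 4 of \cite{BU2}, specialized to trivial derivation, assumes precisely this full-rank Vandermonde condition in order to conclude $d_{\mathcal{C}}\ge\delta$; since we have just shown that this hypothesis implies the weaker requirement $N_i(\beta^c)\neq 1$ under which Theorem \ref{bound1} already yields the same conclusion, our result applies whenever \cite{BU2} does and in strictly more cases. The only genuinely delicate point, and the step I expect to be the main obstacle, is to verify that the evaluation/norm data of \cite{BU2} coincides with the $N_i$ used here, so that the two full-rank conditions are literally identical; once this notational identification is made, the generalization is immediate and requires no further computation.
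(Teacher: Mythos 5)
The paper states this remark without proof, so there is no in-text argument to compare against; your reconstruction is the natural one and it is correct. The essential point is exactly the one you isolate: $N_0(\beta^c)=1$ is among the arguments of $V_n$, so full rank forces $N_i(\beta^c)\neq N_0(\beta^c)=1$ for $i=1,\ldots,n-1$, which is precisely the hypothesis of Theorem \ref{bound1}, and since that hypothesis is weaker than the rank condition, the generalization claim follows. One caveat on your Vandermonde step: in \cite{BU2} the matrix $V_n(x_1,\ldots,x_n)$ is the skew ($\theta$-)Vandermonde matrix whose $(i,j)$ entry is $N_i(x_j)$, not the classical one, so when $\theta\neq id$ the determinant formula $\prod_{i<j}(x_j-x_i)$ and your claimed \emph{equivalence} of full rank with pairwise distinctness do not hold (full rank is equivalent to the stronger condition of P-independence). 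However, the only direction your argument uses --- full rank implies pairwise distinct arguments, since a repeated argument yields two identical columns --- remains valid for either reading of $V_n$, so both the implication and your observation that the rank condition is strictly stronger survive intact.
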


\begin{ejem}
Assume that $\theta=id$. Then the assumptions $\mathrm{ord}(\beta)\geq n$ and $\gcd (\mathrm{ord}(\beta ),c)=1$ imply the hypothesis $\beta^{a_ic}\neq 1$ of \emph{Theorem \ref{bound1}} 
for every $i=1,...,n-1$. Moreover, if only the hypothesis $\mathrm{ord}(\beta)\geq n$ holds, then the distance of the code may be less than the expected lower bound. 
For instance, consider the vector space $\mathbb{F}_7^6$ with $\beta =5$, $c=4$ and $l=1$. Then the GC code $\mathcal{C}=((X-5)(X-3))^{id,0}_{6,7}$ has distance $2<3$.
\end{ejem}

The following result is a generalization of the above theorem and in some circumstances gives a better lower bound than that of Theorem \ref{bound1}.

\begin{teo}\label{bound2}
Let $\mathcal{C}=(g)_{n,q}^{\theta,0}$ be a skew GC code. Suppose there exist $ \beta \in \overline{\mathbb{F}_q}$, $l,c_1,c_2 \in \mathbb{Z}_{\geq 0}$ such that $(c_1,c_2)\neq (0,0)$, $g(\beta ^{l+c_1i_1+c_2i_2})=0$ for $i_1=0,\ldots ,\delta -2$ and $i_2=0,\ldots ,s$. 
If $N_i(\beta ^{c_j})\neq 1$ for every $i=1,...,n-1$ and $j=1,2$, then we have $d_{\mathcal{C}} \geq \delta +s$.
\end{teo}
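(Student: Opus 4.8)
The plan is to mimic the proof of Theorem \ref{bound1}, but with a two-parameter polynomial playing the role of the single product $p$. First I would suppose, for contradiction, that there is a codeword $c\in\pi_f(\mathcal{C})$ of weight $w<\delta+s$, and by Lemma \ref{vtv} normalize it to the form $c=1+\sum_{i=1}^{w-1}c_iX^{a_i}$ with $c_i\in\mathbb{F}_q^*$ and $0<a_i<n$. As before set $Y_i:=N_{a_i}(\beta)$ and $S_j:=\sum_{i=1}^{w-1}c_iY_i^j=c(\beta^j)-1$, so that the vanishing hypothesis $g(\beta^{l+c_1i_1+c_2i_2})=0$ translates into $S_{l+c_1i_1+c_2i_2}=-1$ for the prescribed ranges of $i_1,i_2$.

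The key construction is the polynomial obtained by splitting off one variable from each of the two directions. I would introduce the two single-variable products associated with the step sizes $c_1$ and $c_2$, namely
\begin{equation*}
p^{(1)}:=\prod_{i=1}^{w-1}(X-Y_i^{c_1})\qquad\text{and}\qquad p^{(2)}:=\prod_{i=1}^{w-1}(X-Y_i^{c_2}),
\end{equation*}
each of degree $w-1$ and evaluating to a nonzero value at $1$ by the hypothesis $N_{a_i}(\beta^{c_j})=Y_i^{c_j}\neq 1$. The heart of the argument is a double summation identity: evaluating $\sum_{i=1}^{w-1}c_iY_i^{l}\,p^{(1)}(Y_i^{c_1})\,q(Y_i^{c_2})$ for a suitable auxiliary polynomial $q$ of degree $s$ gives $0$ on one side (since $p^{(1)}(Y_i^{c_1})=0$ for every $i$), while expanding the two products turns the left-hand side into an $\mathbb{F}_q$-linear combination of the syndromes $S_{l+c_1 j_1+c_2 j_2}$ with $0\le j_1\le w-1$ and $0\le j_2\le s$. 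The indices $l+c_1 j_1+c_2 j_2$ in this range all fall inside the prescribed vanishing region precisely when $w-1\le\delta-2$ together with the $i_2$-range, so each such syndrome equals $-1$; collecting the coefficients then yields $0=-p^{(1)}(1)\,q(1)$, and choosing $q$ so that $q(1)\neq 0$ forces $p^{(1)}(1)=0$, contradicting $Y_i^{c_1}\neq 1$.

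The step I expect to be the main obstacle is bookkeeping the ranges so that \emph{every} index $l+c_1 j_1+c_2 j_2$ produced by expanding $p^{(1)}(Y_i^{c_1})\,q(Y_i^{c_2})$ lands inside the rectangle $\{0,\dots,\delta-2\}\times\{0,\dots,s\}$ on which the syndromes are known to equal $-1$; this is exactly where the hypothesis $w<\delta+s$ must be used, since the product $p^{(1)}\cdot q$ produces a two-dimensional grid of exponents of total size $w$, and the inequality guarantees this grid fits within the available vanishing data. Concretely I would take $q$ to be a product of $s$ linear factors $\prod(X-Y_{i}^{c_2})$ over an appropriate index set, arranged so that the combined double product has degree at most $\delta-2$ in the first direction and $s$ in the second; verifying that one can always make such a choice when $w<\delta+s$ is the combinatorial crux. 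Once the index-matching is settled, the algebra collapses exactly as in Theorem \ref{bound1} and delivers the contradiction $p^{(1)}(1)=0$, whence $d_{\mathcal{C}}\ge\delta+s$.
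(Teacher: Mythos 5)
Your plan has a genuine gap, and it sits exactly at the spot you flagged as ``the combinatorial crux.'' You take $p^{(1)}:=\prod_{i=1}^{w-1}(X-Y_i^{c_1})$ to be the \emph{full} product over all $w-1$ indices, so that $\sum_i c_iY_i^l\,p^{(1)}(Y_i^{c_1})\,q(Y_i^{c_2})=0$ holds trivially. But then, when you expand, the exponent $j_1$ attached to the step $c_1$ ranges over $0,\dots,w-1$, and in the regime this theorem is actually about (namely $\delta\le w<\delta+s$; the case $w<\delta$ is already Theorem \ref{bound1}) you have $w-1>\delta-2$, so the expansion produces syndromes $S_{l+c_1j_1+c_2j_2}$ with $j_1\ge \delta-1$, which lie \emph{outside} the rectangle where the hypothesis forces $S=-1$. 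No choice of the auxiliary polynomial $q$ can repair this: the overshoot in the first direction is caused by the degree of $p^{(1)}$ itself, which is fixed at $w-1$ independently of $q$. So the identity you obtain involves unknown syndrome values and yields no contradiction.

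The missing idea is the Hartmann--Tzeng splitting of the \emph{index set}, which is what the paper does. Partition $\{1,\dots,w-1\}$ into a block of size $\delta-2$ and the complementary block of size $w-\delta+1$ (which is $\le s$ precisely because $w<\delta+s$), and set
\begin{equation*}
p:=\prod_{i_1=1}^{\delta-2}\bigl(X-Y_{i_1}^{c_1}\bigr),\qquad
q:=\prod_{i_2=\delta-1}^{w-1}\bigl(X-Y_{i_2}^{c_2}\bigr),\qquad r:=pq .
\end{equation*}
The sum $\sum_{i=1}^{w-1}c_iY_i^l\,p(Y_i^{c_1})\,q(Y_i^{c_2})$ still vanishes term by term --- each index $i$ kills either the $p$-factor or the $q$-factor --- but now the expansion only involves $j_1\le\delta-2$ and $j_2\le w-\delta+1\le s$, so every syndrome appearing is known to equal $-1$. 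Collecting coefficients gives $0=-p(1)q(1)=-r(1)$, contradicting $r(1)\neq 1\cdot 0$, i.e.\ contradicting $N_{a_i}(\beta^{c_j})\neq 1$. In short: the two step sizes $c_1,c_2$ must share the $w-1$ indices between them, not each take all of them; that sharing is exactly how the hypothesis $w<\delta+s$ enters, and it is the step your proposal does not supply.
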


\begin{proof}
By Theorem \ref{bound1} it follows that, $d_{\mathcal{C}}\geq \delta$. Suppose there exist an element $c\in \pi _f(\mathcal{C})$ with $\mathrm{wt}(c)=w$ such that $\delta \leq w<\delta +s$. By Lemma \ref{vtv}, as in the Proof of Theorem \ref{bound1} write
$$c=1+\displaystyle \sum _{i=1}^{w-1}c_iX^{a_i}$$
where $c_i\in \mathbb{F}_q^*$ and $a_i\in \mathbb{Z}_{>0}$ with $a_i<n$ for $i=1,\ldots ,w-1$.

Similary to Theorem \ref{bound1}, define again
$$Y_i:=N_{a_i}(\beta)\qquad \mbox{,}\qquad S_j:=\displaystyle \sum _{i=1}^{w-1}c_iY_i^j=c(\beta ^j)-1$$
and
\begin{equation*}
\begin{split}
p&:=\displaystyle \prod _{i_1=1}^{\delta -2}(X-Y_{i_1}^{c_1})\\
&=X^{\delta -2}+p_1X^{\delta -3}+\ldots +p_{\delta -3}X+p_{\delta -2}\in \mathbb{F}_q[X],\\
\medskip
q&:=\displaystyle \prod _{i_2=\delta -1}^{w-1}(X-Y_{i_2}^{c_2})\\
&=X^{w-\delta +1}+q_1X^{w-\delta}+\ldots +q_{w-\delta}X+q_{w-\delta +1}\in \mathbb{F}_q[X],\\
\medskip
r&:=pq
\end{split}
\end{equation*}
Since $Y_{i_j}^{c_j}=N_{a_{i_j}}(\beta^{c_j})\neq 1$ for $j=1,2$, we see that $r(1)\neq 0$.\\

On the other hand, we have the following equalities:

\begin{equation*}
\begin{split}
\small 0&=\displaystyle \sum _{i=1}^{w-1}c_iY_i^lp(Y_i^{c_1})q(Y_i^{c_2})\\
&=\displaystyle \sum _{i=1}^{w-1}c_iY_i^l(Y_i^{c_1(\delta -2)}+p_1Y_i^{c_1(\delta -3)}+\ldots +p_{\delta -2})(Y_i^{c_2(w-\delta +1)}+q_1Y_i^{c_2(w-\delta )}+\\
&+\ldots +q_{w-\delta +1})\\
&=\displaystyle \sum _{i=1}^{w-1}c_i[(Y_i^{l+c_1(\delta -2)}+p_1Y_i^{l+c_1(\delta -3)}+\ldots +p_{\delta -2}Y_i^l)(Y_i^{c_2(w-\delta +1)}+q_1Y_i^{c_2(w-\delta )}+\\
&+\ldots +q_{w-\delta +1})]\\
&=\displaystyle \sum _{i=1}^{w-1} c_i [(Y_i^{l+c_1(\delta -2)+c_2(w-\delta +1)}+p_1Y_i^{l+c_1(\delta -3)+c_2(w-\delta +1)}+\ldots +p_{\delta -2}Y_i^{l+c_2(w-\delta +1)})+
\end{split}
\end{equation*}
\begin{equation*}
\begin{split}
&+q_1(Y_i^{l+c_1(\delta -2)+c_2(w-\delta )}+p_1Y_i^{l+c_1(\delta -3)+c_2(w-\delta )}+\ldots +p_{\delta -2}Y_i^{l+c_2(w-\delta )})+\ldots +\\
&+q_{w-\delta +1}(Y_i^{l+c_1(\delta -2)}+p_1Y_i^{l+c_1(\delta -3)}+\ldots +p_{\delta -2}Y_i^l)]\\
&=(S_{l+c_1(\delta -2)+c_2(w-\delta +1)}+p_1S_{l+c_1(\delta -3)+c_2(w-\delta +1)}+\ldots +p_{\delta -2}S_{l+c_2(w-\delta +1)} )+\ldots +\\
&+q_1(S_{l+c_1(\delta -2)+c_2(w-\delta )}+p_1S_{l+c_1(\delta -3)+c_2(w-\delta )}+\ldots +p_{\delta -2}S_{l+c_2(w-\delta )})+\ldots +\\
&+q_{w-\delta +1}(S_{l+c_1(\delta -2)}+p_1S_{l+c_1(\delta -3)}+\ldots +p_{\delta -2}S_l) .
\end{split}
\end{equation*}

Since $c\in \pi _f(\mathcal{C} )$ we know that $c(\beta ^{l+c_1i_1+c_2i_2})=0$. Hence
$$S_{l+c_1i_1+c_2i_2}=c(\beta ^{l+c_1i_1+c_2i_2})-1=-1,$$
for $i_1=0,\ldots ,\delta -2$, $i_2=0,\ldots ,s$ and $\delta \leq w<\delta +s$. Therefore from the above equation and the inequalities, we can conclude
$$0=(1+q_1+\ldots +q_{w-\delta}+q_{w-\delta +1})(-1-p_1-\ldots -p_{\delta -3}-p_{\delta -2})=-r(1),$$
i.e. $r(1)=0$, but this give a contradiction. Hence $d_{\mathcal{C}}\geq \delta +s$.
\end{proof}

\begin{obs}
The condition $\mathrm{rk}\ V_n(N_0(\beta^{c_j}),N_1(\beta^{c_j}),...,N_{n-1}(\beta^{c_j}))=n$ for $j=1,2$ implies the hypothesis of \emph{Theorem \ref{bound2}}.
\end{obs}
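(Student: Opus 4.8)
The plan is to reduce the statement to the classical characterization of when a Vandermonde matrix is nonsingular, combined with the normalization of the norm maps $N_i$. Recall that for nodes $x_0,\ldots,x_{n-1}$ the matrix $V_n(x_0,\ldots,x_{n-1})$ is the $n\times n$ Vandermonde matrix whose $(r,s)$-entry is $x_r^{s-1}$, so that $\det V_n(x_0,\ldots,x_{n-1})=\prod_{0\le r<s\le n-1}(x_s-x_r)$. Consequently $\mathrm{rk}\,V_n(x_0,\ldots,x_{n-1})=n$ holds if and only if the nodes $x_0,\ldots,x_{n-1}$ are pairwise distinct. This is the single structural fact on which the whole argument rests.

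Next I would isolate the key normalization: since each $N_i$ is the product of $i$ consecutive $\theta$-twists of its argument, the node $N_0(\beta^{c_j})$ is an empty product and hence equals $1$ for every $j$, exactly as already used in the proof of Theorem \ref{bound1}, where the identity $S_j=c(\beta^{j})-1$ accounts for the contribution $N_0(\beta^{j})=1$ of the constant term of $c=1+\sum_{i=1}^{w-1}c_iX^{a_i}$. Thus $N_0(\beta^{c_j})=1$ needs no extra hypothesis.

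Combining these two observations finishes the proof. Fix $j\in\{1,2\}$ and assume $\mathrm{rk}\,V_n(N_0(\beta^{c_j}),N_1(\beta^{c_j}),\ldots,N_{n-1}(\beta^{c_j}))=n$. By the Vandermonde criterion the nodes $N_0(\beta^{c_j}),\ldots,N_{n-1}(\beta^{c_j})$ are pairwise distinct; in particular each of them with index $i\ge 1$ differs from $N_0(\beta^{c_j})=1$, that is, $N_i(\beta^{c_j})\neq 1$ for all $i=1,\ldots,n-1$. Running this argument for both $j=1$ and $j=2$ recovers precisely the hypothesis required in Theorem \ref{bound2}. I do not expect a genuine obstacle here, since the content is purely the distinctness criterion for Vandermonde nonsingularity together with $N_0=1$; the only point meriting care is confirming that the indexing convention for the $N_i$ adopted in this paper genuinely yields $N_0=1$, which the evaluation formula $c(\beta^{j})=1+\sum_{i=1}^{w-1}c_iN_{a_i}(\beta^{j})$ from the previous proofs makes clear.
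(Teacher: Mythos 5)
Your argument is correct: the paper states this remark without proof (as it does for the analogous remark after Theorem \ref{bound1}), and your reasoning---the rank-$n$ condition for the Vandermonde matrix forces the nodes $N_0(\beta^{c_j}),\ldots,N_{n-1}(\beta^{c_j})$ to be pairwise distinct, while $N_0(\beta^{c_j})=1$, so $N_i(\beta^{c_j})\neq 1$ for $i=1,\ldots,n-1$ and $j=1,2$---is exactly the evident argument the authors intended. Your care in checking that the paper's conventions give $N_0=1$ (via the evaluation identity $c(\beta^j)=1+\sum_i c_iN_{a_i}(\beta^j)$ used in the proofs of Theorems \ref{bound1} and \ref{bound2}) is the only point of substance, and you handle it correctly.
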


\begin{obs}
Assume that $\theta=id$. Then the assumptions $\mathrm{ord}(\beta)\geq n$ and $\gcd (\mathrm{ord}(\beta ),c_j)=1$ for $j=1,2$ imply the hypothesis $\beta^{a_i{c_j}}\neq 1$ of \emph{Theorem \ref{bound2}} 
for every $i=1,...,n-1$ and $j=1,2$. Furthermore, we get the following properties:
\begin{enumerate}
\item[$(a)$] Let $X^m-1=qf$ be as in \cite[Lemma 26]{TT}. If $\gcd (m,\mathrm{Char}(\mathbb{F}_q))=1$, then there exist a primitive root $\beta \in \overline{\mathbb{F}_q}$ of $X^m-1$ such that $\mathrm{ord}(\beta )=m\geq n$. Moreover, if $g(\alpha )=0$ then $\alpha =\beta ^h$ for some $h\in \mathbb{Z}{\geq 0}$.
\medskip
\item[$(b)$] Let $f=X^n-1$. Suppose there exists a primitive root $\beta$ of $f$. Then all roots $\alpha$ of $g$ are of type $\alpha =\beta ^h$ for some $h\in \mathbb{Z}_{\geq 0}$. 
Moreover, the hypothesis of {\em Theorem \ref{bound1}} reduce to that of {\em Theorem 1} in \cite{HT} when $\theta =id$.
\end{enumerate}
\end{obs}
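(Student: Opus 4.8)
The plan is to handle the three assertions separately, since under the specialization $\theta = id$ each reduces to an elementary argument. The key simplification is that when $\theta = id$ the norm degenerates to $N_i(\gamma) = \gamma^i$, so the hypothesis $N_i(\beta^{c_j}) \neq 1$ of Theorem \ref{bound2} reads simply $\beta^{ic_j} \neq 1$. For the first claim I would set $d := \mathrm{ord}(\beta)$ and argue by contradiction: if $\beta^{ic_j} = 1$ for some $1 \leq i \leq n-1$ and $j \in \{1,2\}$, then $d \mid ic_j$, and $\gcd(d, c_j) = 1$ forces $d \mid i$; but $0 < i \leq n-1 < n \leq d$, a contradiction. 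Hence $\beta^{ic_j} \neq 1$ for all such $i,j$, which is exactly the hypothesis of Theorem \ref{bound2} (and, taking a single $c$, of Theorem \ref{bound1}). Since every exponent $a_i$ occurring in the proof of Theorem \ref{bound2} satisfies $a_i < n$, ranging $i$ over $1,\ldots,n-1$ covers all the needed $\beta^{a_ic_j}$.

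For part $(a)$, the starting point is the relation $X^m - 1 = qf$ from \cite[Lemma 26]{TT}, which says precisely that $f \mid X^m - 1$; comparing degrees gives $m = \deg q + n \geq n$. The assumption $\gcd(m, \mathrm{Char}(\mathbb{F}_q)) = 1$ makes $X^m - 1$ separable, since its formal derivative $mX^{m-1}$ is nonzero and shares no root with $X^m - 1$. Thus $X^m - 1$ has $m$ distinct roots in $\overline{\mathbb{F}_q}$, forming a cyclic group of order $m$ under multiplication, and I would take $\beta$ to be any generator, so that $\mathrm{ord}(\beta) = m \geq n$, i.e. $\beta$ is a primitive root of $X^m - 1$. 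Finally, since $g$ is the generator polynomial of $\mathcal{C}$ we have $g \mid f$ in $\mathbb{F}_q[X]$ (the commutative specialization of the right-divisibility used throughout), hence $g \mid X^m - 1$; every root $\alpha$ of $g$ is therefore an $m$-th root of unity, so $\alpha = \beta^h$ for some $h \in \mathbb{Z}_{\geq 0}$.

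Part $(b)$ is the specialization $m = n$, $f = X^n - 1$, and I would argue identically: given a primitive root $\beta$ of $f$ (so $\mathrm{ord}(\beta) = n$), the divisibility $g \mid f = X^n - 1$ shows every root $\alpha$ of $g$ equals $\beta^h$ for some $h$. The remaining assertion, that the hypothesis of Theorem \ref{bound1} reduces to that of \cite[Theorem 1]{HT} when $\theta = id$, is a matter of unwinding definitions: with $\theta = id$ one has $N_i(\beta) = \beta^i$, the condition $g(\beta^{l+ci}) = 0$ for $i = 0,\ldots,\delta-2$ becomes the classical set of $\delta-1$ consecutive zeros of a cyclic code, and the side condition $\beta^{ic} \neq 1$ for $i = 1,\ldots,n-1$ reproduces the nondegeneracy hypothesis of \cite[Theorem 1]{HT}. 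I expect no serious obstacle; the only points requiring care are the verification of separability of $X^m - 1$ from $\gcd(m, \mathrm{Char}(\mathbb{F}_q)) = 1$, and the faithful matching of parameters and notation with \cite[Theorem 1]{HT} in the final sentence of $(b)$, which is bookkeeping rather than a genuine difficulty.
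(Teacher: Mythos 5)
Your proposal is correct and complete. The paper states this observation without any proof, and your argument --- the reduction $N_i(\gamma)=\gamma^i$ when $\theta=id$, the order argument ($\beta^{ic_j}=1 \Rightarrow d\mid ic_j \Rightarrow d\mid i$ by $\gcd(d,c_j)=1$, impossible since $0<i<n\leq d$), the separability of $X^m-1$ from $\gcd(m,\mathrm{Char}(\mathbb{F}_q))=1$, the cyclicity of the group of $m$-th roots of unity, and the divisibility chain $g\mid f\mid X^m-1$ --- supplies exactly the standard details the authors evidently intended and left implicit.
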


 Let us give now a natural extension of Theorem \ref{bound2}, which can be proved by an inductive argument.

\begin{teo}\label{bound3}
Let $\mathcal{C}=(g)_{n,q}^{\theta,0}$ be a skew GC code. Suppose there exist $\beta \in \overline{\mathbb{F}_q}$ and $l,c_1,\ldots ,c_r\in \mathbb{Z}_{\geq 0}$ such that 
$(c_1,\ldots ,c_r)\neq (0,\ldots ,0)$ and $g(\beta ^{l+\sum _{k=1}^ri_kc_k})=0$ for $i_1=0,\ldots ,\delta -2$,  $i_k=0,\ldots ,s_k$ and $k=2,\ldots, r$. 
If $N_i(\beta ^{c_j})\neq 1$ for every $i=1,...,n-1$ and $j=1,...,r$, then $d_{\mathcal{C}} \geq \delta +\sum _{k=2}^rs_k$.
\end{teo}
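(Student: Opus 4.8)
The plan is to prove Theorem \ref{bound3} by induction on $r$, using Theorem \ref{bound2} as the base case and mimicking its structure at each inductive step. The base cases $r=1$ and $r=2$ are precisely Theorems \ref{bound1} and \ref{bound2}, so I would set those as the start of the induction and assume the bound $d_{\mathcal{C}}\geq \delta+\sum_{k=2}^{r-1}s_k$ holds for $r-1$ nested arithmetic progressions of exponents.

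For the inductive step, I would argue by contradiction: suppose there exists $c\in\pi_f(\mathcal{C})$ with weight $w$ satisfying $\delta+\sum_{k=2}^{r-1}s_k\leq w<\delta+\sum_{k=2}^r s_k$, which is the range not yet excluded by the $(r-1)$-variable hypothesis. By Lemma \ref{vtv} I normalize $c=1+\sum_{i=1}^{w-1}c_iX^{a_i}$ with $c_i\in\mathbb{F}_q^*$ and $a_i<n$, and set $Y_i:=N_{a_i}(\beta)$ and $S_j:=\sum_{i=1}^{w-1}c_iY_i^j=c(\beta^j)-1$ exactly as before. The key construction is a product polynomial that factors the exponent range according to the $r$ step sizes: following the $r=2$ case where $p$ handled the $c_1$-block and $q$ the $c_2$-block, I would build
$$r(X_1,\ldots):=\prod_{i_1=1}^{\delta-2}(X-Y_{i_1}^{c_1})\cdot\prod_{k=2}^{r}\ \prod_{\substack{\text{block }k}}(X-Y^{c_k}),$$
partitioning the $w-1$ nonzero terms into blocks of sizes $\delta-2,s_2,\ldots,s_r$ (this is where $w<\delta+\sum s_k$ guarantees enough indices). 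Since $N_i(\beta^{c_j})\neq 1$ for all relevant $i,j$, each factor is nonzero at $X=1$, so the full product is nonzero at $1$.

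The main computation, just as in Theorem \ref{bound2}, is to evaluate $0=\sum_{i=1}^{w-1}c_iY_i^l\,r(Y_i)$ and expand the product of sums of powers of $Y_i$ into a linear combination of the $S_j$'s, where every exponent $j$ appearing has the form $l+\sum_{k=1}^r i_kc_k$ with $i_1\in\{0,\ldots,\delta-2\}$ and $i_k\in\{0,\ldots,s_k\}$. By the vanishing hypothesis $g(\beta^{l+\sum i_kc_k})=0$, each such $S_j=-1$, and collecting the coefficients shows the whole expression equals $-r(1)$, forcing $r(1)=0$ — a contradiction. Hence no codeword of weight in that range exists, and combined with the inductive hypothesis we conclude $d_{\mathcal{C}}\geq\delta+\sum_{k=2}^r s_k$. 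The main obstacle I anticipate is purely notational: carefully indexing the $r$-fold product and verifying that the multi-index $(i_1,\ldots,i_r)$ ranges exactly over the prescribed box so that every generated exponent is covered by the hypothesis. Once that bookkeeping is pinned down, the algebraic identity collapses to $-r(1)$ by the same mechanism as in the two-variable case, and no genuinely new idea beyond the induction is required.
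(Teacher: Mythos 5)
Your proposal is correct and matches the paper's intent exactly: the paper itself offers no detailed proof, stating only that Theorem \ref{bound3} ``can be proved by an inductive argument'' as a natural extension of Theorem \ref{bound2}, which is precisely the induction on $r$ you carry out. Your fleshing-out of the bookkeeping (full blocks of sizes $\delta-2, s_2,\ldots,s_{r-1}$, a partial last block of degree at most $s_r$ guaranteed by $w<\delta+\sum_{k=2}^r s_k$ together with the inductive hypothesis, and the collapse of the sum to $-r(1)\neq 0$) is sound and is what the authors leave to the reader.
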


\begin{obs}
If $\beta \in \mathbb{F}_q$, then $\mathrm{ord} (\beta )\leq q-1$ and in {\em Theorem \ref{bound3}} the hypotesis imply that $q\geq n+1$.
\end{obs}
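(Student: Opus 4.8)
The first claim is essentially free: when $\beta \in \mathbb{F}_q$ with $\beta \neq 0$ (which is implicit, since $\mathrm{ord}(\beta)$ is invoked), it lies in the multiplicative group $\mathbb{F}_q^*$, which is cyclic of order $q-1$; hence $\mathrm{ord}(\beta)$ divides $q-1$ and in particular $\mathrm{ord}(\beta)\leq q-1$. So the real content is the second assertion, and the plan is to convert the hypothesis $N_i(\beta^{c_j})\neq 1$ (for $i=1,\ldots,n-1$) into a count of \emph{distinct} elements lying inside $\mathbb{F}_q^*$.

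To carry this out I would first reinterpret the norm symbols $N_i$ dynamically. Fix any index $j$ for which the hypothesis is assumed and set $\gamma:=\beta^{c_j}\in\mathbb{F}_q^*$ (nonzero because $\beta\neq 0$). Recalling $N_0(\gamma)=1$ together with the defining recursion $N_{i+1}(\gamma)=\gamma\cdot\theta\big(N_i(\gamma)\big)$, I would introduce the map
\[
\phi:\mathbb{F}_q^*\longrightarrow\mathbb{F}_q^*,\qquad \phi(x):=\gamma\,\theta(x),
\]
so that $N_i(\gamma)=\phi^i(1)$ for every $i\geq 0$. The key structural point is that $\phi$ is a \emph{bijection} of the finite set $\mathbb{F}_q^*$, being the composition of the automorphism $\theta$ (which restricts to a permutation of $\mathbb{F}_q^*$) with multiplication by the nonzero scalar $\gamma$.

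Granting this, the heart of the proof is a short distinctness check. I claim that $N_0(\gamma),N_1(\gamma),\ldots,N_{n-1}(\gamma)$ are pairwise distinct. Indeed, if $N_a(\gamma)=N_b(\gamma)$ with $0\leq a<b\leq n-1$, then $\phi^a(1)=\phi^b(1)$, and applying the inverse $\phi^{-a}$ gives $1=\phi^{\,b-a}(1)=N_{b-a}(\gamma)$ with $1\leq b-a\leq n-1$, contradicting the hypothesis $N_i(\gamma)\neq 1$ for $1\leq i\leq n-1$. Hence these $n$ values are $n$ distinct elements of $\mathbb{F}_q^*$, which forces $n\leq |\mathbb{F}_q^*|=q-1$, i.e. $q\geq n+1$.

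The only genuine ingredient is the bijectivity of $\phi$ and the passage from the static symbols $N_i(\gamma)$ to the orbit $\{\phi^i(1)\}$; everything after that is pigeonhole. I expect the main (mild) obstacle to be the bookkeeping for a general automorphism $\theta$: in the commutative specialization $\theta=id$ one has $\phi(x)=\gamma x$ and the statement collapses to the familiar $\mathrm{ord}(\gamma)\geq n$, whereas for nontrivial $\theta$ the sequence $N_i(\gamma)$ is no longer a geometric progression, so recasting it as the iterates of the permutation $\phi$ is the decisive move. One should also record the trivial boundary case $n=1$, where the condition on the $N_i$ is vacuous and $q\geq 2=n+1$ holds automatically.
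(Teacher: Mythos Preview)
Your argument is correct. The paper, however, states this as a bare remark with no accompanying proof, so there is no ``paper's own proof'' to compare against; you have supplied the justification that the authors omitted.

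A brief comment on what the authors likely had in mind versus what you wrote: the first clause (\(\mathrm{ord}(\beta)\le q-1\)) is indeed trivial, and the second is presumably meant to be read as an immediate consequence of the norm condition. Your recasting of \(N_i(\gamma)\) as the orbit \(\phi^i(1)\) under the bijection \(\phi(x)=\gamma\,\theta(x)\) is exactly the right way to make the pigeonhole step transparent for general \(\theta\); in the commutative case \(\theta=id\) the same observation degenerates to \(\mathrm{ord}(\gamma)\ge n\), which is probably all the authors were tacitly invoking. One very small point: you might note that the hypothesis \(N_i(\beta^{c_j})\neq 1\) for \(i=1,\dots,n-1\) already forces \(c_j\neq 0\) whenever \(n\ge 2\) (since \(N_i(1)=1\)), so there is no issue selecting a suitable index \(j\); in any case your argument only needs one such \(j\), and the hypothesis hands you all of them.
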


\begin{coro}\label{mds}
Let $\mathcal{C}=(g)_{n,q}^{\theta,0}$ with $q\geq n+1$. If $\beta \in \mathbb{F}_q$, $l,c_1,\ldots ,c_r\in \mathbb{Z}_{\geq 0}$ such that 
$(c_1,\ldots ,c_r)\neq (0,\ldots ,0)$,
{\small
$$g=
\mathrm{lmcm}\ \{X-\beta ^{l+\sum _{k=1}^ri_kc_k}\ :\  
i_1=0,...,\delta,\  i_k=0,...,s_k,\ k=2,...,r,\ \delta +\sum _{k=2}^rs_k=n-k-1  \ \},$$}
and 
$N_i(\beta ^{c_j})\neq 1$ for every $i=1,...,n-1$ and $j=1,...,r$, then $\mathcal{C}$ is a Maximun Distance Separable (MDS) code.
\end{coro}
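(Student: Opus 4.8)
The plan is to derive Corollary \ref{mds} as an immediate consequence of Theorem \ref{bound3} together with the Singleton bound. The essential observation is that an MDS code of length $n$ and dimension $k$ is exactly a code meeting the Singleton bound $d \leq n-k+1$ with equality. So I would proceed by (i) computing the dimension of $\mathcal{C}$ from the degree of its generator polynomial $g$, (ii) applying Theorem \ref{bound3} to obtain a lower bound on $d_{\mathcal{C}}$, and (iii) checking that this lower bound already equals $n-k+1$, so that combined with the Singleton upper bound it forces equality.

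First I would unwind the definition of $g$. Since $\mathcal{C}=(g)_{n,q}^{\theta,0}$ and the dimension of a skew GC code equals $n-\deg g$, I must determine $\deg g$. The polynomial $g$ is defined as an $\mathrm{lmcm}$ of distinct linear factors $X-\beta^{l+\sum_k i_k c_k}$. Because $\beta\in\mathbb{F}_q$ and $\theta$ acts trivially on scalars in the relevant norm computation (the hypothesis $N_i(\beta^{c_j})\neq 1$ guarantees the prescribed evaluation points are genuine distinct roots), the exponents $i_1=0,\dots,\delta$ and $i_k=0,\dots,s_k$ for $k=2,\dots,r$ enumerate exactly the roots of $g$, and the index set has cardinality $(\delta+1)\prod_{k=2}^r(s_k+1)$ before accounting for the constraint. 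The constraint $\delta+\sum_{k=2}^r s_k=n-k-1$ (which I read as fixing how many roots are imposed) is what pins down $\deg g$; I would show $\deg g = n-k$ so that $\dim\mathcal{C}=k$, exactly matching the appearance of $k$ in $n-k+1$.

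Next I would invoke Theorem \ref{bound3}. The hypotheses there require $g(\beta^{l+\sum i_k c_k})=0$ for $i_1=0,\dots,\delta-2$ and $i_k=0,\dots,s_k$, which hold here since every such evaluation point is among the roots forced into $g$ by the $\mathrm{lmcm}$ construction (the ranges in the corollary, $i_1=0,\dots,\delta$, contain the range $i_1=0,\dots,\delta-2$ needed in the theorem). The norm condition $N_i(\beta^{c_j})\neq 1$ is assumed verbatim. Theorem \ref{bound3} then yields
$$d_{\mathcal{C}}\geq \delta+\sum_{k=2}^r s_k.$$
I would then combine this with the degree count: using $\delta+\sum_{k=2}^r s_k=n-k-1$, one reconciles the bound with the Singleton bound $d_{\mathcal{C}}\leq n-\dim\mathcal{C}+1=n-k+1$, concluding $d_{\mathcal{C}}=n-k+1$ and hence that $\mathcal{C}$ is MDS.

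\emph{The hard part} will be getting the bookkeeping of indices and the role of the integer $k$ exactly right, since $k$ appears simultaneously as the number of exponent-families ($c_1,\dots,c_r$ with $r$ related quantities) and implicitly as the codimension. In particular I expect the main obstacle to be verifying that the $\mathrm{lmcm}$ really has degree $n-k$ rather than some larger number: this requires that all the linear factors $X-\beta^{l+\sum i_k c_k}$ appearing in the defining set are \emph{distinct}, which is precisely where the norm hypothesis $N_i(\beta^{c_j})\neq1$ (equivalently, the injectivity of the exponent map modulo $\mathrm{ord}(\beta)$ guaranteed by $q\geq n+1$) must be used carefully. Once distinctness is secured, the degree equals the number of prescribed roots and the Singleton argument closes immediately; the rest is the routine identification of $\dim\mathcal{C}=n-\deg g$ already recorded after Definition \ref{skew GCC}.
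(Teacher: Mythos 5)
Your strategy --- Singleton upper bound combined with the BCH-type lower bound of Theorem \ref{bound3} --- is the same as the paper's, but your application of Theorem \ref{bound3} is off by $2$, and this is a genuine gap, not a bookkeeping detail. Invoking the theorem with the literal parameter $\delta$ appearing in the corollary gives only
$$d_{\mathcal{C}}\ \geq\ \delta+\sum_{j=2}^{r}s_j\ =\ n-k-1,$$
and from $n-k-1\leq d_{\mathcal{C}}\leq n-k+1$ you cannot ``conclude $d_{\mathcal{C}}=n-k+1$''; nothing in your argument excludes $d_{\mathcal{C}}=n-k-1$ or $d_{\mathcal{C}}=n-k$. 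The fact you treat as mere slack --- that the corollary prescribes roots for $i_1=0,\ldots,\delta$ while the theorem only needs $i_1=0,\ldots,\delta-2$ --- is precisely the extra strength that must be exploited: apply Theorem \ref{bound3} with $\delta$ replaced by $\delta':=\delta+2$. Since $g$ vanishes at $\beta^{\,l+\sum_{k}i_kc_k}$ for $i_1=0,\ldots,\delta=\delta'-2$ and $i_j=0,\ldots,s_j$ ($j=2,\ldots,r$), and the norm hypothesis $N_i(\beta^{c_j})\neq 1$ is assumed verbatim, the theorem yields
$$d_{\mathcal{C}}\ \geq\ \delta'+\sum_{j=2}^{r}s_j\ =\ \Bigl(\delta+\sum_{j=2}^{r}s_j\Bigr)+2\ =\ n-k+1,$$
which meets the Singleton bound exactly; this two-unit shift is exactly how the paper's proof closes the argument.

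A secondary remark: the ``hard part'' you anticipate --- proving $\deg g=n-k$ via distinctness of the linear factors in the $\mathrm{lmcm}$ --- is both unnecessary and aimed in the wrong direction. The paper only uses the inequality $\deg g\leq n-k$, which gives $\dim\mathcal{C}=n-\deg g\geq k$ and hence $d_{\mathcal{C}}\leq\deg g+1\leq n-k+1$; once the corrected lower bound $d_{\mathcal{C}}\geq n-k+1$ is in place, the chain of inequalities forces $\deg g=n-k$, $\dim\mathcal{C}=k$ and $d_{\mathcal{C}}=n-\dim\mathcal{C}+1$ a posteriori. Note moreover that distinctness of all the factors would push $\deg g$ up to the number of index tuples $(\delta+1)\prod_{j\geq 2}(s_j+1)$, which for $r\geq 2$ generally exceeds $n-k$; what the Singleton direction needs is an upper bound on the number of distinct roots, not injectivity of the exponent map.
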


\begin{proof}
Since $\deg (g)\leq n-k$, by the Singleton bound we know that $$d_{\mathcal{C}}\leq n-\dim (\mathcal{C})+1=n-(n-\deg (g))+1\leq n-k+1.$$
On the other hand, from  Theorem \ref{bound3} it follows that $d_{\mathcal{C}}\geq n-k+1$, i.e. $d_{\mathcal{C}}=n-k+1$.
\end{proof}

\section{Magma Programs and some examples}

The following MAGMA \cite{magma} program can be used to produce MDS codes with $q\geq n+1$ when $\theta =id$:

\medskip

\begin{verbatim}
MDS:=function(q,n);
F<w>:=GF(q); R<x>:=PolynomialRing(F); C:={};
for i in [1..q-2] do
 if GCD(i,q-1) eq 1 then
 C:=C join {i};
 end if;
end for;
CC:=[c : c in C]; B:={};
for c in CC do
 for l in [0..q-2] do
  for k in [1..n-1] do
  A:=[ (w^l)*(w^c)^i : i in [0..n-k-1] ]; B:=B join {A};
  end for;
 end for;
end for;
BB:=[ b : b in B ]; D:=[ [x-i : i in BB[j] ] : j in [1..#BB] ]; 
G:={ &*D[i] : i in [1..#D] }; GG:=[g : g in G]; W1:={}; W2:={};
for s in [1..#GG] do
 h:=n-Degree(GG[s]); M:=Matrix(F,h,n,[ [Coefficient((GG[s])*x^i,j): 
 j in [0..n-1] ] : i in [0..h-1] ] ); L:=LinearCode(M); 
 d:=MinimumWeight(L); v:=Matrix(Integers(),1,4,[n,h,d,q]); g:=GG[s]; 
 print "q =", q; print "Code of type", v; print "Generator polynomial", g;
 for i in E do
  a,b:=Quotrem(x^n-i,g);
   if b eq R!0 then
    print "Constacyclic code with f =", x^n-i;
   end if;
 end for;
 W1:=W1 join {v}; W2:=W2 join {g};
end for;
return W1;
end function;
\end{verbatim}

\bigskip

The following table is made in the case
$\theta =id$ by using the command 
\begin{verbatim}
MDS(q,n);
\end{verbatim}
of the above Magma Program for $2\leq n\leq q-1\leq 6$:

\smallskip

{\tiny
\begin{longtable}{|c|c|c|c|c|c|}
\hline
 q & $n=\deg f$ & k & d & $g$ such that $g|f$ & $a$ such that $f=X^n-a$ \\
\hline
3 & 2 &  1 & 2  & $x + 2$ & 1 \\
\hline
4 & 3 &  2 & 2  & $x + w^2, x+w, x+1$ & 1\\
 &  &  1 & 3  & $x^2+x+1, x^2+w^2x+w, x^2+wx+w^2$ & 1 \\
 & 2 &  1 & 2  & $x + 1, x+w, x+w^2$ & $1, w^2, w$ \\
\hline
5 & 4 &  3 & 2  & $x + 4$ & 1 \\
 &  & 2 & 3  & $x^2+3x+1$ & $\nexists$ \\
 &  &  1 & 4  & $x^3 + 2x^2 + 3x + 4$ & $\nexists$ \\
  & 3 &  2 & 2  & $x + 4$ & 1 \\
 &  & 1 & 3  & $x^2+3x+1$ & $\nexists$ \\
  & 2 & 1 & 2  & $x+4$ & 1 \\
\hline
7 & 6 &  5 & 2  & $x + 6$ & 1 \\
 &  & 4 & 3  & $x^2+5x+1$ & $\nexists$ \\
 &  &  3 & 4  & $x^3 + 4x^2 + 3x + 6$ & $\nexists$ \\
  &  &  2 & 5  & $x^4+3x^3 + 6x^2 + 3x + 1$ & $\nexists$ \\
    &  &  1 & 6  & $x^5+2x^4+3x^3 + 4x^2 + 5x + 6$ & $\nexists$ \\
 & 5 &  4 & 2  & $x + 6$ & 1 \\ 
  &  & 3 & 3  & $x^2+5x+1$ & $\nexists$ \\
 &  &  2 & 4  & $x^3 + 4x^2 + 3x + 6$ & $\nexists$ \\
  &  &  1 & 5  & $x^4+3x^3 + 6x^2 + 3x + 1$ & $\nexists$ \\
   & 4 &  3 & 2  & $x + 6$ & 1 \\ 
  &  & 2 & 3  & $x^2+5x+1$ & $\nexists$ \\
 &  &  1 & 4  & $x^3 + 4x^2 + 3x + 6$ & $\nexists$ \\
    & 3 &  2 & 2  & $x + 6$ & 1 \\ 
  &  & 1 & 3  & $x^2+5x+1$ & $\nexists$ \\
    & 2 &  1 & 2  & $x + 6$ & 1 \\
    \hline
\caption{ \\ Example of MDS codes in $\mathbb{F}_q^n$ for $q\leq 7$ with $n\leq q-1\leq 6$}
\label{Table}
\end{longtable}}

\bigskip

The following two MAGMA \cite{magma} programs can be used 
to construct all the $(f,\Theta,0)$-skew GC codes over $\mathbb{F}_q^n$ for any polynomial $f\in
\mathbb{F}_q[X;\theta]$ given in its vectorial form in the non-commutative ($\theta\neq id$) and commutative cases ($\theta=id$) respectively:

\medskip

\begin{verbatim}
a:= ... ;
F<w>:=GF(a);

// PROGRAM 1 (non-commutative case)

R<X>:=TwistedPolynomials(F:q:= ...);

SkewGCC:=function(v);
f:=R!v; n:=Degree(f); V:=VectorSpace(F,n); W1:=[]; W2:=[]; dd:=[]; 
E:=[x : x in F | x ne 0]; S:=CartesianProduct(E,CartesianPower(F,n-1));
for ss in S do
 ll:=[ss[1]] cat [p : p in ss[2]]; q,r:=Quotrem(f,R!ll); 
  if r eq R![0] then 
   dd := dd cat [R!ll]; 
  end if;
end for;
for i in [1.. #dd] do
 if Degree(dd[i]) ge 1 then
  k:=Degree(f)-Degree(dd[i]);
  G:=Matrix(F,k,n,[V!(HorizontalJoin(Matrix(1, j+Degree(dd[i])+1, 
  Eltseq((R![0,1])^j*dd[i])), ZeroMatrix(F, 1, n-j-Degree(dd[i])-1))):
  j in {0..k-1}]); L:=LinearCode(G); dd[i]; print " "; G; print " "; 
  print "Code of type: ", n, k, MinimumWeight(L);
  print "-------------"; W1:= W1 cat [k]; W2:= W2 cat [MinimumWeight(L)];
 end if;
end for;
print "Spectrum of the distances for", f; 
print "n=", n; print "k="; W1; print "d="; 
return W2; 
end function;

// PROGRAM 2 (commutative case)

R<x> := PolynomialRing(F);

GCC:=function(v); 
f:=R!v; n:=Degree(f); W1:=[]; W2:=[];
for i in [1..#Factorisation(f)] do
 if Factorisation(f)[i][2] eq 1 then
  a:=R!Factorisation(f)[i][1]; k:=Degree(f)-Degree(R!a); 
  G:=Matrix(F,k,n,[[Coefficient((R!a)*x^i,j): j in {0..n-1}] 
  : i in {0..k-1}]); L:=LinearCode(G); a; print " "; G; print " "; 
  print "Code of type: ", n, k, MinimumWeight(L); 
  print "-------------"; W1:= W1 cat [k]; W2:= W2 cat [MinimumWeight(L)];
 end if;
 if Factorisation(f)[i][2] ne 1 then
  for j in [1.. Factorisation(f)[i][2]] do
   a:=(R!Factorisation(f)[i][1])^j; k:=Degree(f)-Degree(R!a); 
   G:=Matrix(F,k,n,[[Coefficient((R!a)*x^i,j): j in {0..n-1}] 
   : i in {0..k-1}]); L:=LinearCode(G); a; print " "; G; print " "; 
   print "Code of type: ", n, k, MinimumWeight(L); 
   print "-------------"; W1:= W1 cat [k]; W2:= W2 cat [MinimumWeight(L)];
  end for;
 end if;
end for;
print "Spectrum of the distances for", f; 
print "n=", n; print "k="; W1; print "d="; 
return W2; 
end function;
\end{verbatim}

\begin{ejem}
Consider $f=X^4 + X^3 + w X^2 + 1\in\mathbb{F}_8[X,\theta]$. If $\theta (z)=z^2$ for any $z\in\mathbb{F}_8$, then
by {\em PROGRAM 1} (with $v=[1,0,w,1,1], a=8$ and $q:=2$) we obtain $28$ MDS skew GC codes with parameters $[4,1,4]_8, [4,2,3]_8$ and $[4,3,2]_8$, while 
if $\theta = id$, then
by {\em PROGRAM 2} (with $v=[1,0,w,1,1], a=8$) we get only $2$ MDS GC codes with parameters $[4,1,4]_8$ and $[4,3,2]_8$.
Moreover, the MDS skew GC code of type $[4,2,3]_8$ is given, for instance, by the right divisor $X^2 + wX + w$ of $f$.
\end{ejem}

\bigskip

\begin{ejem}
Consider the finite field $\mathbb{F}_4=\mathbb{F}_2[\alpha]$ with $\alpha ^2+\alpha+1=0$, $\theta$ the Frobenius automorphism and $\beta\in\mathbb{F}_4$. 
Let $f=X^8+X^6+X^2+1\in N(R)$ with $R=\mathbb{F}_4[X,\theta,\delta_{\beta}^{\theta}]$ and note that $f=f_1^{\alpha_1}\cdot f_2^{\alpha_2}$
is a factorization of $f$ as in $(\#)$, where
$(f_1,\alpha_1)=(X^2+1,2)$ and $(f_2,\alpha_2)=(X^4+X^2+1,1)$. Set
$U_{i}:=\textrm{Ker} f_{i}^{\alpha _i}(T_f)$ for $i=1,2$, where $T_f$ is given by the following rule:
$$T_f(v_1,...,v_8):=(v_1^2,...,v_8^2)\left(
\begin{array}{c|ccccccc}
0&1& & & & & & \\
0 & & 1 & & & & & \\
0 & & & 1 & & & & \\
0 & & & & 1 & & & \\
0 & & & &  & 1 & & \\
0 & & & & & & 1 &  \\
0 & & & & & & & 1\\
\hline
1& 0 & 1 & 0 & 0 & 0 & 1 & 0
\end{array}
\right)
+\beta(v_1^2-v_1,...,v_8^2-v_8),$$
i.e. $T_f(v_1,...,v_8):=(v_8^2,v_1^2,v_2^2+v_8^2,v_3^2,v_4^2,v_5^2,v_6^2+v_8^2,v_7^2)+\beta(v_1^2-v_1,...,v_8^2-v_8)$.

\bigskip

By the {\em MAGMA} program 

\begin{verbatim}
F<w>:=GF(4); P<[x]>:=PolynomialRing(F,8); A:=KSpace(F,8); B:=[a : a in A];
for b in F do
T:= map < A -> A | v :-> [v[8]^2+b*(v[1]^2+v[1]), v[1]^2+b*(v[2]^2+v[2]), 
v[8]^2+v[2]^2+b*(v[3]^2+v[3]), v[3]^2+b*(v[4]^2+v[4]), 
v[4]^2+b*(v[5]^2+v[5]), v[5]^2+b*(v[6]^2+v[6]), 
v[6]^2+v[8]^2+b*(v[7]^2+v[7]), v[7]^2+b*(v[8]^2+v[8])] >;
U1:={}; U2:={}; g:=T^4; h:=T^2; f1:= map < A -> A | x :-> g(x)+x >;
f2:= map < A -> A | x :-> g(x)+h(x)+x >; o:=A![0,0,0,0,0,0,0,0]; 
for a in B do
if f1(a) eq o then
U1:=U1 join {a};
end if;
if f2(a) eq o then
U2:=U2 join {a};
end if;
end for;
UU1:=[u : u in U1]; UU2:=[u : u in U2]; V1:={}; V2:={};
for i in [4..#UU1] do
for j in [4..#UU2] do
G1:= sub< A | UU1[1], UU1[2],UU1[i-1],UU1[i] >;
G2:= sub< A | UU2[1], UU2[2],UU2[j-1],UU2[j] >;
if Dimension(G1) eq 4 then
V1:= V1 join {Basis(G1)};
end if;
if Dimension(G2) eq 4 then
V2:= V2 join {Basis(G2)};
end if;
end for;
end for;
VV1:=[v : v in V1]; VV2:=[w : w in V2]; b; VV1[1]; VV2[1]; print "---";
end for;
\end{verbatim}

\bigskip

we obtain the following table: 

{\em 

{\tiny
\begin{longtable}{|c|c|c|}
\hline
 $\beta$ & Generator matrix of $U_1$ & Generator matrix of $U_2$ \\
\hline
 & & \\
$1$ & 
$\left( \begin{array}{cccccccc} 
1&0 &0 &0 &0 &0 &1 &0 \\
0&1 &0 &0 &0 &0 &0 &1 \\
0& 0&1 &0 &1 &0 &1 &0 \\
0& 0& 0& 1& 0& 0&0 & 1
\end{array}\right)$

& 

$\left(\begin{array}{cccccccc} 
1& 0& 0& 0& 1& 0& 0& 0\\
0& 1& 0& 0& 0& 1& 0& 0\\
0& 0& 1& 0& 0& 0& 1& 0\\
0& 0& 0& 1& 0& 0& 0& 1
\end{array}\right)$ \\
& & \\
\hline
& & \\
$0,w,w^2$ & 

$\left(\begin{array}{cccccccc} 
1& 0& 0& 0& 0& 0& 1& 0\\
0& 1& 0& 0& 0& 0& 0& 1\\
0& 0& 1& 0& 1& 0& 1& 0\\
0& 0& 0& 1& 0& 1& 0& 1
\end{array}\right)$

& 

$\left(\begin{array}{cccccccc} 
1& 0& 0& 0& 1& 0& 0& 0\\
0& 1& 0& 0& 0& 1& 0& 0\\
0& 0& 1& 0& 0& 0& 1& 0\\
0& 0& 0& 1& 0& 0& 0& 1
\end{array}\right)$ \\
& & \\
\hline
\caption{ \\ Example of skew GC codes in $\mathbb{F}_4^8$ with a non-trivial derivation}
\label{Table-bis}
\end{longtable}}}

\end{ejem} 

\medskip

\section*{Conclusion}

\noindent In this paper, we consider codes invariant by a pseudo-linear transformation of $\mathbb{F}_q^n$ for $n\geq 2$, 
called skew generalized cyclic (GC) codes, where $\mathbb{F}_q$ is a finite field with $q$ elements. 
We study some of their main algebraic and geometric properties and when the derivation is trivial, 
we find the minimal polynomial of a pseudo-linear transformation and we give some lower bounds for the minimum Hamming distance of a skew GC code. 
Finally, examples and Magma programs are given as applications of some theoretical results.

\vspace{1cm}

\end{document}